\documentclass{article}

\usepackage[margin=1in]{geometry}

\usepackage{amssymb,amsmath,amsthm,amsfonts}
\usepackage{bm}
\usepackage{textgreek}
\usepackage{mathtools}
\usepackage{enumitem}
\usepackage[numbers,comma,sort&compress]{natbib}
\usepackage{authblk}
\usepackage{graphicx}
\usepackage[font=small]{caption}
\usepackage{subcaption} 
\usepackage{float}
\usepackage[ruled,vlined,linesnumbered]{algorithm2e}
\newcounter{proccnt}

\makeatletter
\newenvironment{procedureblock}[1][htbp]{%
    \let\c@algocf\c@proccnt

    \SetAlgorithmName{Procedure}{Procedure}{List of Procedures}%
    
    \begin{algorithm}[#1]%
}{%
    \end{algorithm}%
}
\makeatother

\usepackage[colorlinks]{hyperref}

\usepackage{braket}
\usepackage{float}
\usepackage{tablefootnote}
\usepackage{overpic}
\usepackage{multirow}
\usepackage{afterpage}
\usepackage{multirow}     
\usepackage{booktabs}      
\usepackage{tabularx}
\newcolumntype{Y}{>{\centering\arraybackslash}X}
\usepackage{makecell}
\usepackage{xcolor}

\newtheorem{theorem}{Theorem}
\newtheorem{definition}{Definition}
\newtheorem{lemma}{Lemma}
\newtheorem{proposition}{Proposition}

\newtheorem{remark}{Remark}
\newtheorem{assumption}{Assumption}

\newcommand{\eq}[1]{(\ref{eq:#1})}

\newcommand{\thm}[1]{\hyperref[thm:#1]{Theorem~\ref*{thm:#1}}}
\newcommand{\cor}[1]{\hyperref[cor:#1]{Corollary~\ref*{cor:#1}}}
\newcommand{\defn}[1]{\hyperref[defn:#1]{Definition~\ref*{defn:#1}}}
\newcommand{\lem}[1]{\hyperref[lem:#1]{Lemma~\ref*{lem:#1}}}
\newcommand{\prop}[1]{\hyperref[prop:#1]{Proposition~\ref*{prop:#1}}}
\newcommand{\assum}[1]{\hyperref[assum:#1]{Assumption~\ref*{assum:#1}}}
\newcommand{\fig}[1]{\hyperref[fig:#1]{Figure~\ref*{fig:#1}}}
\newcommand{\tab}[1]{\hyperref[tab:#1]{Table~\ref*{tab:#1}}}
\newcommand{\algo}[1]{\hyperref[algo:#1]{Algorithm~\ref*{algo:#1}}}
\renewcommand{\sec}[1]{\hyperref[sec:#1]{Section~\ref*{sec:#1}}}
\newcommand{\append}[1]{\hyperref[append:#1]{Appendix~\ref*{append:#1}}}
\newcommand{\proc}[1]{\hyperref[proc:#1]{Procedure~\ref*{proc:#1}}}
\newcommand{\fac}[1]{\hyperref[fac:#1]{Fact~\ref*{fac:#1}}}
\newcommand{\lin}[1]{\hyperref[lin:#1]{Line~\ref*{lin:#1}}}

\renewcommand{\arraystretch}{1.25}

\def\>{\rangle}
\def\<{\langle}

\newcommand{\R}{\mathbb{R}}

\newcommand{\Robustalgo}{Approximate }
\newcommand{\robustalgo}{approximate }

\title{Scalable First-Order Interior Point Trust Region Algorithms for Linearly Constrained Optimization}

\author[1,2]{Yuexin Su\thanks{\texttt{yuexinsu@stu.pku.edu.cn}}}
\author[3]{Chenyi Zhang\thanks{\texttt{chenyiz@stanford.edu}}}
\author[4]{Peiyuan Huang\thanks{\texttt{pyhuang@gsm.pku.edu.cn}}}
\author[1,2]{Tongyang Li\thanks{Corresponding author. \texttt{tongyangli@pku.edu.cn}}}
\author[5,6]{Yinyu Ye\thanks{\texttt{yinyu-ye@stanford.edu}}}

\affil[1]{Center on Frontiers of Computing Studies, Peking University}
\affil[2]{School of Computer Science, Peking University}
\affil[3]{Computer Science Department, Stanford University}
\affil[4]{Guanghua School of Management, Peking University}
\affil[5]{Department of Management Science and Engineering, Stanford University}
\affil[6]{Shanghai Institute for Mathematics and Interdisciplinary Sciences}

\date{}
\begin{document}

\maketitle

\begin{abstract}
Computing approximate Karush--Kuhn--Tucker (KKT) points for constrained nonconvex programs is a fundamental problem in mathematical programming. Interior-point trust-region (IPTR) methods are particularly attractive for such problems because they maintain strictly feasible iterates throughout the iterative process and converge to a first-order and second-order KKT solution. Their scalability, however, is limited by the repeated computation of trust-region search directions. In this paper, we propose an \robustalgo first-order IPTR framework that addresses this bottleneck by replacing exact trust-region subproblem solves with an approximate projector maintained through low-rank updates. The resulting method preserves feasibility and the global convergence guarantees of standard IPTR schemes while substantially reducing the per-iteration cost. We further extend the framework to obtain approximate second-order KKT points using only first-order information by integrating a gradient-based negative-curvature routine, thus avoiding explicit Hessian computations. We conduct numerical experiments to demonstrate the scalability of our \robustalgo first-order IPTR framework in large-scale settings, where it achieves up to a $2.48\times$ speedup over the existing first-order IPTR algorithm.
\end{abstract}

\section{Introduction}

In this paper, we consider the following optimization problem: 
\begin{equation}
\begin{aligned}
\label{eq:main-prob}
\min_{\boldsymbol{x}\in \mathbb{R}^n}\quad &f(\boldsymbol{x})  \\
\mathrm{s.t.}\quad &\boldsymbol{A}\boldsymbol{x}=\boldsymbol{b}, \boldsymbol{x}\geq \boldsymbol{0},
\end{aligned}
\end{equation}
where $\boldsymbol{A} \in \mathbb{R}^{m \times n}$ and $\boldsymbol{b} \in \mathbb{R}^{m}$. We assume that $m \leq n$ and that $\boldsymbol{A}$ has full row rank. The objective function $f\colon\mathbb{R}^n \to \mathbb{R}$ is continuous on the nonnegative orthant $\mathbb{R}_{+}^n := \{\boldsymbol{x} \in \mathbb{R}^n \mid \boldsymbol{x} \geq \boldsymbol{0}\}$ and differentiable on the strictly positive orthant $\mathbb{R}_{++}^n := \{\boldsymbol{x} \in \mathbb{R}^n \mid \boldsymbol{x} > \boldsymbol{0}\}$. Problem \eq{main-prob} is a fundamental formulation in mathematical programming, covering a broad class of nonconvex optimization problems over the nonnegative orthant with affine equality constraints, including sparse signal reconstruction \cite{che2022sparse}, nonnegative matrix factorization \cite{guo2024rise}, and portfolio selection \cite{xia2023high}. In many such problems, the objective function $f$ is nonconvex, making the search for global minimizers NP-hard. Consequently, theoretical and algorithmic developments typically focus on identifying local minimizers or stationary points that satisfy necessary optimality conditions. 

In the unconstrained setting, such optimality conditions are well-established through the notions of first- and second-order stationarity. A point $\boldsymbol{x}$ is called an $\varepsilon$-first-order stationary point (FOSP) if $\|\nabla f(\boldsymbol{x})\|\le \varepsilon$. An $(\varepsilon,\sqrt{\varepsilon})$-second-order stationary point (SOSP) further requires that
$\lambda_{\min}(\nabla^2 f(\boldsymbol{x})) \ge -\sqrt{\rho\varepsilon}$, where $\rho$ is the Lipschitz constant of the Hessian. 
An $\varepsilon$-FOSP captures first-order stationarity, but it is generally insufficient in nonconvex optimization since it may still be a saddle point. Such saddle points can be highly suboptimal and thus undesirable in practice \cite{jain2017global,sun2018geometric}.
This motivates algorithms that go beyond first-order stationarity and provably converge to SOSPs. A direct way is to use second-order methods, but explicitly forming and factorizing the Hessian typically costs $\Omega(n^2)$, which can be prohibitive in high-dimensional problems. A large body of work therefore focuses on obtaining second-order guarantees using primarily gradient information.  
When the gradient is small, either the Hessian is nearly positive semi-definite so $\boldsymbol{x}$ is already close to an SOSP, or a direction of negative curvature exists that allows further descent. Such directions can be found using first-order methods. In particular, Refs.~\cite{Zhu2018NEON2,carmon2018accelerated,Liu2018Adaptive,xu2017neon+,Jin2018Accelerated,zhang2021Escape} establishes that  $\widetilde{\mathcal{O}}(1 / \varepsilon^{1.75})$\footnote{Throughout this paper, we use $\widetilde{O}(\cdot)$ to suppress polylogarithmic factors in $\varepsilon^{-1}$ and $n$.} gradient queries suffice to find an $(\varepsilon,\sqrt{\varepsilon})$-SOSP.  Extending this $\widetilde{O}(1/\varepsilon^{1.75})$ complexity to constrained optimization remains an open question.

Constrained optimization, such as constrained learning and training in AI, has become popular and even necessary in practice \cite{NEURIPS2019_cf708fc1,NEURIPS2024_9979a69d,lian2022advances,NEURIPS2024_bcfcf723,liu2024enhancing,pmlr-v162-yu22d}. However, the aforementioned stationarity-based characterization is no longer sufficient in such constrained settings, where feasibility must be taken into account and optimal solutions may lie on the boundary of the feasible region. In such cases, vanishing gradients or positive semidefinite Hessians of the objective alone do not capture optimality. The KKT conditions provide a principled extension of first- and second-order stationarity to constrained problems by jointly incorporating the objective and the constraints through primal–dual optimality conditions. Under standard regularity assumptions, KKT points are necessary for local optimality and thus play a role analogous to that of stationary points in the unconstrained setting \cite{bertsekas1997nonlinear,nocedal2006numerical}. This motivates the study of algorithms that converge to approximate first-order and second-order KKT points in constrained nonconvex optimization.
A variety of algorithms have been proposed to identify approximate KKT points in constrained optimization, including trust-region methods \cite{haeser2019optimality} and augmented Lagrangian approaches \cite{li2021augmented,kong2023iteration}.  
Inspired by unconstrained optimization, where gradient-based perturbations allow efficient escape from saddle points, it is natural to seek methods that achieve similar efficiency in the constrained setting. In particular, we aim at reaching approximate first- and second-order KKT points using only gradient information, which can avoid the heavy computational cost of second-order oracles.

\subsection{Our contributions}
\paragraph{Main results}
We introduce a new framework of first-order IPTR algorithms designed for nonconvex optimization problems with linear and non-negativity constraints. Our primary contribution is improving the computational complexity of finding approximate first- and second-order KKT points, which makes our algorithms highly scalable for large-scale applications.

Our contributions are twofold. On the one hand, we propose an \robustalgo first-order IPTR algorithm that reduces the cost of computing interior iterates via low-rank updates. On the other hand, we develop first-order IPTR algorithms that find approximate second-order KKT points without Hessian information. Both contributions are particularly appealing for large-scale problems in which projection or matrix factorization is computationally expensive.
\begin{enumerate}
    \item \textbf{\Robustalgo first-order IPTR on finding $2\varepsilon$-KKT points.} We introduce an \robustalgo first-order IPTR (\algo{robust-1st-order}). This algorithm maintains the iteration complexity guarantees of existing first-order IPTR approaches while significantly reducing the total computational runtime. The improvement is achieved by replacing the exact solution of the trust-region subproblems with an approximate update scheme. This mechanism not only preserves the feasibility of the iterates but also circumvents the need for the frequent matrix factorizations required in prior schemes, leading to a significant reduction of the average per-iteration computational cost from $\mathcal{O}(nm^{\omega-1})$ to $\widetilde{\mathcal{O}}(nm)$, where $\omega \approx 2.371339$ denotes the most recent matrix multiplication exponent \cite{matrixexp}.
    \item \textbf{First-order IPTR on finding $(2\varepsilon,\sqrt{\varepsilon})$-KKT2 points.} We propose \algo{1st-order interior} and \algo{robust-nega} that compute approximate second-order KKT points using solely first-order information. \algo{1st-order interior} integrates a negative-curvature finding subroutine into the basic IPTR framework, whereas \algo{robust-nega} further enhances this approach with the \robustalgo update mechanism. By leveraging negative-curvature directions, both algorithms escape saddle points of the Lagrangian without explicit Hessian computation. Both algorithms match the iteration complexity of prior first-order IPTR algorithms, and \algo{robust-nega} achieves a better total runtime compared to  prior second-order IPTR algorithms due to the computational efficiency of the \robustalgo update scheme.
\end{enumerate}
\tab{iter-time-tab} summarizes the oracle requirements, iteration bounds, and runtime complexities of algorithms.

\begin{table}[!htb]
  \centering
  \begin{tabularx}{\textwidth}{cccYY}
    \toprule
    \textbf{Criteria} & \textbf{Algorithm}  & \textbf{Oracle} &  \textbf{Iterations} & \textbf{Time complexity} \\
    \midrule
    $2\varepsilon$-KKT & \cite[1st-order IPTR]{haeser2019optimality}  & 1st-order &  $\mathcal{O}\left(\frac{l (f(\boldsymbol{x}_0)-f(\boldsymbol{x}^*))}{\varepsilon^2} \right)$ & ${\mathcal{O}}\left(\frac{nm^{\omega-1}}{\varepsilon^2}\right)$ \\
    $2\varepsilon$-KKT & \algo{robust-1st-order} & 1st-order  &  $\mathcal{O}\left(\frac{l (f(\boldsymbol{x}_0)-f(\boldsymbol{x}^*))}{\varepsilon^2} \right)$ & $\widetilde{\mathcal{O}}\left(nm^{\omega-1}+\frac{nm}{\varepsilon^2}\right)$  \\
    \midrule
    $(2\varepsilon,\sqrt{\varepsilon})$-KKT2 & \cite[2nd-order IPTR]{haeser2019optimality}  & 2nd-order  &  $\mathcal{O}\left(\frac{\max\{\eta,R\}^{3.5} (f(\boldsymbol{x}_0)-f(\boldsymbol{x}^*))}{\varepsilon^{1.5}} \right)$ & $\widetilde{\mathcal{O}}\left(\frac{n^\omega}{\varepsilon^{1.5}}\right)$\\
     $(2\varepsilon,\sqrt{\varepsilon})$-KKT2 & \algo{1st-order interior}  & 1st-order  &  $\widetilde{\mathcal{O}}\left(\frac{l \rho^2(f(\boldsymbol{x}_0)-f(\boldsymbol{x}^*))}{\varepsilon^2} \right)$ & $\widetilde{\mathcal{O}}\left(\frac{nm^{\omega-1}}{\varepsilon^2}\right)$ \\
     $(2\varepsilon,\sqrt{\varepsilon})$-KKT2 & \algo{robust-nega}  & 1st-order  & $\widetilde{\mathcal{O}}\left(\frac{l \rho^2(f(\boldsymbol{x}_0)-f(\boldsymbol{x}^*))}{\varepsilon^2} \right)$ & $\widetilde{\mathcal{O}}\left(\frac{nm^{\omega-1}}{\varepsilon^{1.5}}+\frac{nm}{\varepsilon^{2}}\right)$ \\
    \bottomrule
  \end{tabularx}
  \caption{Iteration bounds and overall time complexities of the IPTR-type algorithms for computing $2\varepsilon$-KKT and $(2\varepsilon,\sqrt{\varepsilon})$-KKT2 points. The parameters $l$, $\rho$, $\eta$, and $R$ are constants specified in the assumptions (see \sec{assump}).} 
\label{tab:iter-time-tab}
\end{table}

We also analyze the special case where $f$ is concave on $\Omega^\circ$. In this setting, the iteration complexity of first-order IPTR algorithms improves from $\mathcal{O}(1/\varepsilon^2)$ to $\mathcal{O}(1/\varepsilon)$; furthermore, we can return either a $2\varepsilon$-KKT point or an iterate $\boldsymbol{x}_t$ such that $f(\boldsymbol{x}_t)-f(\boldsymbol{x}^*)\le \varepsilon$ with time complexities $\mathcal{O}(nm^{\omega-1}/\varepsilon)$ (see \thm{concave-exact}) or $\widetilde{\mathcal{O}}(nm^{\omega-1}+nm/\varepsilon^2)$ (see \thm{concave-approximate}), depending on whether we apply the exact first-order IPTR algorithm~\cite{haeser2019optimality} or the approximate first-order IPTR algorithm (\algo{robust-1st-order}), respectively.

\paragraph{Techniques}
In existing IPTR methods, the search direction for \eq{main-prob} at each iteration is obtained by minimizing a local model of the potential function over a trust-region ball, while maintaining feasibility with respect to the linear equality constraints:
\begin{equation}
\label{eq:existing-IPTR}
\begin{aligned}
\min&~ 
\begin{cases} 
\nabla\phi(\boldsymbol{x}_t)^\top \boldsymbol{X}_{t} \boldsymbol{d} &\text{first-order IPTR}
\\\nabla\phi(\boldsymbol{x}_t)^\top \boldsymbol{X}_{t} \boldsymbol{d} +\frac{1}{2} \boldsymbol{d}^\top \boldsymbol{X}_{t}\nabla^2f(\boldsymbol{x}_t)\boldsymbol{X}_{t} \boldsymbol{d} & \text{second-order IPTR}
\end{cases}
\\
\text{s.t.}&~\boldsymbol{A}\boldsymbol{X}_{t}\boldsymbol{d}=\boldsymbol{0},\ \|\boldsymbol{d}\|\le \beta;
\end{aligned}
\end{equation}
where $\phi(\boldsymbol{x}_t)$ is the potential function defined in \sec{potential-center}, $\boldsymbol{A}$ is the constraint matrix in \eq{main-prob}, and $\beta < 1$ is the trust-region radius.

While the first- and second-order IPTR methods can converge to approximate first- and second-order KKT points respectively, they both suffer from high computational complexity in large-scale settings.
The first-order IPTR subproblem admits a closed-form solution given by $-\beta{\boldsymbol{P}_t \boldsymbol{X}_t\nabla\phi(\boldsymbol{x}_t)}/{\Vert \boldsymbol{P}_t  \boldsymbol{X}_t\nabla\phi(\boldsymbol{x}_t) \Vert}$, where $\boldsymbol{P}_t$ denotes the projection in the null space of $\boldsymbol{A}\boldsymbol{X}_t$. A major computational bottleneck of this step is that computing the search direction requires an explicit basis for the null space of $\boldsymbol{A}\boldsymbol{X}_t$. On the other hand, the second-order IPTR requires access to the exact Hessian $\nabla^2 f(\boldsymbol{x}_t)$, which incurs at least $\Omega(n^2)$ time and space complexity per iteration. Since $\boldsymbol{x}_t$ varies across iterations, continuously recomputing the projection matrix and the Hessian becomes computationally expensive.

To overcome these computational barriers, we first address the projection cost by developing an \robustalgo first-order IPTR algorithm that avoids recomputing the exact projection matrix from scratch. When $\boldsymbol{A}$ has full row rank, the projection matrix takes the explicit form $\boldsymbol{P}_t := \boldsymbol{I} - \boldsymbol{X}_t \boldsymbol{A}^\top (\boldsymbol{A} \boldsymbol{X}_t^2 \boldsymbol{A}^\top)^{-1} \boldsymbol{A} \boldsymbol{X}_t$. The main computational cost arises from forming and inverting the matrix $\boldsymbol{A}\boldsymbol{X}_t^2\boldsymbol{A}^\top$. To mitigate this cost, we maintain an approximate projector $\boldsymbol{R}_t:=\boldsymbol{I} - \boldsymbol{X}_{t}^{-1} \overline{\boldsymbol{X}}^2_{t} \boldsymbol{A}^\top (\boldsymbol{A} \overline{\boldsymbol{X}}^2_{t} \boldsymbol{A}^\top)^{-1} \boldsymbol{A}\boldsymbol{X}_{t}$,  where $\overline{\boldsymbol{X}}_t$ is a maintained approximation of $\boldsymbol{X}_t$. The matrix $\overline{\boldsymbol{X}}_t$ is updated sparsely using the binary-decomposition scheme. This sparsity structure enables efficient low-rank Sherman–Morrison–type updates of $(\boldsymbol{A}\overline{\boldsymbol{X}}_t^{2}\boldsymbol{A}^\top)^{-1}$ in \lem{time-inverse}.  The resulting projector preserves feasibility of the search directions while reducing the computational cost through structured low-rank updates. 

Consequently, at each iteration of our \robustalgo first-order IPTR algorithm, the search direction is computed as $-\beta {\boldsymbol{R}_t\boldsymbol{X}_t\nabla\phi(\boldsymbol{x}_t)}/{\Vert \boldsymbol{R}_t \boldsymbol{X}_t\nabla\phi(\boldsymbol{x}_t) \Vert}$. To ensure that $\boldsymbol{R}_t$ remains close to the exact projector $\boldsymbol{P}_t$, we maintain the multiplicative bounds $e^{-\delta}\boldsymbol{X}_t \leq \overline{\boldsymbol{X}}_t \leq e^{\delta}\boldsymbol{X}_t$, $\delta = \Theta(\varepsilon)$. These bounds imply $\Vert \boldsymbol{R}_t-\boldsymbol{P}_t \Vert = \Theta(\varepsilon)$ in \lem{R-P}, which ensures that the resulting search direction remains close to the ideal one and thereby preserves the robustness of the iteration. We show that the potential function decreases sufficiently at each accepted step and establish that, within $\mathcal{O}(1/\varepsilon^{2})$ iterations, the proposed \robustalgo first-order IPTR method either attains an $2\varepsilon$-KKT point or reaches a global minimizer.

To further compute approximate second-order KKT points without incurring the prohibitive $\Omega(n^2)$ cost of exact Hessian evaluations, we develop a negative-curvature finding subroutine (\proc{nega-curvature}). Given an iterate that satisfies the first-order KKT conditions but lies near a saddle point of the Lagrangian, the subroutine returns a direction that approximates the minimum-eigenvalue eigenvector of the Hessian. Since all iterates are constrained to the affine space $\boldsymbol{A}\boldsymbol{x}=\boldsymbol{b}$, we employ a projected power-iteration scheme to amplify the negative-curvature component while maintaining feasibility. As only first-order oracles are available, Hessian–vector products are approximated using finite differences of gradients, i.e., $\nabla f(\boldsymbol{x}+\tau \boldsymbol{v})-\nabla f(\boldsymbol{x})\approx \tau\nabla^2 f(\boldsymbol{x})\boldsymbol{v}$ for small $\tau$. Within both \algo{1st-order interior} and \algo{robust-nega}, this subroutine is invoked whenever the iterate meets the $2\varepsilon$-KKT conditions and the Lagrangian Hessian exhibits sufficiently negative curvature. The resulting negative-curvature step decreases the Lagrangian value and enables the algorithm to escape saddle-type regions, ultimately leading to a $(2\varepsilon,\sqrt{\varepsilon})$-KKT2 point.

\paragraph{Numerical experiments}
We also perform extensive numerical experiments. Through a variety of test instances, we show that our first-order IPTR algorithms are able to compute second-order approximate KKT points without access to Hessian information. In addition, we test our \robustalgo first-order IPTR algorithms on large-scale optimization problems with $(n,m)=(1000,500)$, $(3000,2000)$ and $(5000,4000)$ of the problem in Eq.~\eq{main-prob} to evaluate their practical efficiency. The experimental results show that our \robustalgo IPTR algorithm achieves up to a $2.48\times$ speedup over the existing first-order IPTR algorithm on these instances, demonstrating its scalability for large-scale problems.

\paragraph{Paper organization}
In \sec{prelim}, we introduce the notions of approximate first-order and second-order KKT points and state the assumptions required for the proposed IPTR framework. In \sec{robust-1st}, we present an \robustalgo first-order IPTR algorithm with a computationally efficient low-rank update. In \sec{2nd-KKT}, we introduce a negative-curvature finding procedure that enables the first-order IPTR framework to converge to approximate second-order KKT points using only first-order information. Numerical experiments are reported in \sec{expri}, where we present numerical evidence demonstrating the computational advantages of the proposed \robustalgo IPTR algorithms over existing IPTR-type methods.

\paragraph{Notation}
Throughout this paper, we use boldface letters to denote matrices and vectors. For any vector $\boldsymbol{x}$, we use $\boldsymbol{X}:=\text{diag}\{\boldsymbol{x}\}$ to denote the diagonal matrix with $\boldsymbol{x}$ on its diagonal. We define the feasible set as $\Omega:=\{\boldsymbol{A}\boldsymbol{x}=\boldsymbol{b}, \boldsymbol{x} \geq \boldsymbol{0}\}$, and its interior as $\Omega^\circ= \{\boldsymbol{A}\boldsymbol{x}=\boldsymbol{b}, \boldsymbol{x} > \boldsymbol{0}\}$. Unless stated otherwise, $\Vert \cdot \Vert$ denotes the Euclidean norm.

\subsection{Related work}
In this section, we review the literature relevant to our work. We begin by discussing prior research on finding approximate KKT points under various conditions, as well as the closely related problem of finding stationary points. Furthermore, given that our primary technical contributions lie in the \robustalgo interior trust-region method and negative curvature finding procedure to find $(2\varepsilon,\sqrt{\varepsilon})$-KKT2 points, we also summarize existing results regarding these techniques.

\paragraph{Approximate KKT points}
We review several works on computing approximate KKT points for constrained nonconvex optimization problems that are closely related to ours. For problems with linear equality constraints, augmented Lagrangian methods are widely used. Li et al.~\cite{li2021augmented} developed a first-order augmented Lagrangian method that attains an $2\varepsilon$-KKT point in $\mathcal{O}(\varepsilon^{-5/2} \log(1/\varepsilon))$ iterations for problems with a nonconvex objective and convex functional constraints. In a similar vein, Kong et al.~\cite{kong2023iteration} proposed an inner accelerated inexact proximal augmented Lagrangian method for solving linearly constrained smooth nonconvex composite optimization problems, achieving an iteration complexity of $\mathcal{O}(\varepsilon^{-5/2}\log^2(1/\varepsilon))$.  
These penalty methods generally do not keep the iterates strictly feasible. In the context of interior point methods, Haeser et al.~\cite{haeser2019optimality} proposed a first-order and second-order interior-point trust-region algorithm for linearly constrained problems without differentiability on the boundary. They established iteration complexities of $\mathcal{O}(1/\varepsilon^2)$ for finding $2\varepsilon$-KKT points and $\mathcal{O}(1/\varepsilon^{1.5})$ for $(2\varepsilon,\sqrt{\varepsilon})$-KKT2 points. More recently, Boob et al.~\cite{boob2025level} proposed a level-constrained first-order scheme that reduces the problem to a sequence of convex subproblems and achieves an $2\varepsilon$-KKT point in $\mathcal{O}(\varepsilon^{-2})$ iterations.
For KKT points in nonlinear programming, the seminal work by W{\"a}chter and Biegler~\cite{wachter2006implementation} developed IPOPT, a primal-dual interior point method with filter line-search globalization for large-scale nonlinear programming. Their work provides a practically robust implementation framework, incorporating feasibility restoration, second-order correction, inertia correction, and other practical enhancements. Both their work and ours use the interior point method, and our work adopts a simpler structure with linear equality constraints. We also utilize an approximate update mechanism to improve efficiency. For future research, building upon their theoretical foundation \cite{wachter2005line} to extend our approach to general nonlinear constraints, and exploring the robust implementation framework presented in their paper, are highly valuable questions.

\paragraph{Approximate stationary points in constrained optimization}
Distinct from the literature focusing on approximate KKT conditions, a parallel line of research aims to identify FOSP and SOSP for constrained optimization problems. The computation of stationary points has become a standard objective in modern mathematical programming, particularly for large-scale and nonconvex models where guarantees of global optimality are typically unavailable. 
In the context of FOSP, first-order methods for constrained optimization can be broadly divided into projection-based and projection-free schemes. Projected gradient methods extend gradient descent by projecting each iterate onto the feasible set \cite{beck2017first}. Projection-free methods such as Frank--Wolfe replace projections with a linear minimization oracle. For possibly nonconvex objectives, Lacoste-Julien et al.~\cite{lacoste2016convergence} established an $\mathcal{O}(\varepsilon^{-2})$ complexity bound for attaining an $\varepsilon$-FOSP. Recent extensions of this framework include Zeng et al.~\cite{zeng2024frank}. 
Beyond projected gradient and Frank--Wolfe methods, Muehlebach et al.~\cite{muehlebach2025accelerated} recently proposed a first-order algorithm for nonlinear constrained problems that avoids global optimization over the feasible region while ensuring convergence to an FOSP. When second-order information is accessible, the convergence rate can be improved; for instance, adaptive cubic regularization methods~\cite{CartisAdaptiveCubic,CartisEvaluationCubic,CartisEvaluationConstrained} are shown to achieve an $\varepsilon$-FOSP within $\mathcal{O}(\varepsilon^{-1.5})$ iterations.

Regarding SOSP, research has focused on leveraging higher-order information to escape saddle points in constrained settings. 
For problems with generic linear inequality constraints, Xie et al.~\cite{NEURIPS2020_1da546f2} analyzed the complexity of finding stationary points using projected gradient-based approaches. They established that first-order methods can reach an $(\varepsilon,\sqrt{\varepsilon})$-SOSP in $\mathcal{O}(\varepsilon^{-2.5})$ iterations, while their second-order counterparts improve this complexity to $\mathcal{O}(\max\{\varepsilon_{G}^{-2}, \varepsilon_H^{-1.5}\})$ for an $(\varepsilon_{G},\varepsilon_H)$-SOSP. 
For nonconvex equality constriant, a Newton-CG based augmented Lagrangian method proposed by \cite{He2023Newton-CG} can find an $(\varepsilon,\sqrt{\varepsilon})$-SOSP in $\widetilde{O}(\varepsilon^{-7/2})$ iterations. 
Mokhtari et al.~\cite{NEURIPS2018_069654d5} investigated minimizing smooth nonconvex functions over convex sets, specifically where a $\rho$-approximate solution to a quadratic program is computable. Their second-order method achieves an $(\varepsilon,\gamma)$-SOSP with a complexity of $\mathcal{O}(\max\{\varepsilon^{-2}, \rho^{-3}\gamma^{-3}\})$.

\paragraph{\Robustalgo interior point method} 

Since the seminal work by Karmarkar~\cite{Karmarkar1984polynomial}, interior point methods for linear programming have been extensively studied~\cite{anstreicher1997volumetric,renegar1988polynomial,nesterov1997self,vaidya1987algorithm,vaidya1993technique,LS2019solving}. In the theoretical computer science literature, a recent line of work studies robust interior point methods. Cohen, Lee, and Song~\cite{cohen2021solving} established a robust IPM framework where each update only needs to be computed approximately, which is used to prove that linear programming problems can be solved in current matrix multiplication time, whose technique is further derandomized and extended by~\cite{jiang2020faster,lee2019solving,LS2019solving,van2020deterministic,van2020solving}. This framework has also been used to develop algorithms for semidefinite programming~\cite{huang2022solving,jiang2020faster} and graph problems~\cite{chen2025maximum,van2020bipartite,van2023dynamic}. Motivated by this line of work, we follow the same intuition and refer to our method as an approximate interior point method.

\section{Preliminaries}
\label{sec:prelim}

\subsection{Approximate KKT points and sufficient conditions}
\label{sec:approx-kkt}
We consider linearly constrained optimization problems in which the objective function may be non-differentiable at the boundary of the feasible region. In such settings, classical KKT conditions are often inadequate. Exact stationarity may be ill-defined at boundary points, and second-order optimality conditions can be difficult to verify. To address these issues, Ref.~\cite{haeser2019optimality} provides a unified framework for characterizing approximate first- and second-order KKT points in the presence of boundary non-differentiability.
Consider the general constrained optimization problem
\begin{equation}
\begin{aligned}
\label{eq:general-prob}
\min_{\boldsymbol{x}\in \mathbb{R}^n}\quad &f(\boldsymbol{x})  \\
\mathrm{s.t.}\quad & h(\boldsymbol{x})=\boldsymbol{0}, c(\boldsymbol{x}) \geq \boldsymbol{0}.
\end{aligned}
\end{equation}
An approximate first-order KKT point is defined as follows.
\begin{definition}[{\cite[Definition 1]
{haeser2019optimality}}]
\label{defn:kkt1}
Given $\varepsilon>0$, a point $x\in\R^n$ is called an $\varepsilon$-KKT point for problem \eq{general-prob} when there exist approximate Lagrange multipliers $\lambda\in\R^m$ and $s\in\R^p_+$ with:
\begin{enumerate}
\item[(i)] $h(x)=0$, $c(x)>0$,
\item[(ii)] $\|\nabla f(x)+\sum_{i=1}^m\lambda_i\nabla h_i(x)-\sum_{i=1}^p s_i\nabla c_i(x)\|_{\infty}\leq\varepsilon$,
\item[(iii)]
$|c_i(x)s_i|\leq\varepsilon$ for all $i=1,\dots,p$.
\end{enumerate}
\end{definition}
\defn{kkt1} relaxes the exact KKT conditions by allowing controlled violations of stationarity and complementarity, measured in terms of the tolerance parameter $\varepsilon$. 
The following definition characterizes approximate second-order KKT points for functions that may be non-differentiable at the boundary. 
It modifies the second-order stationarity condition by incorporating weighted sums of matrices $\nabla c_i(x)\nabla c_i(x)^\top$, thereby yielding an appropriate notion of positive semidefiniteness near the boundary.
\begin{definition}[{\cite[Definition 2]{haeser2019optimality}}]
\label{defn:kkt2}
Given $\varepsilon_1,\varepsilon_2>0$, a point $x\in\R^n$ is called an $(\varepsilon_1,\varepsilon_2
)$-KKT2 point for problem \eq{general-prob} when there exist approximate Lagrange multipliers $\lambda\in\R^m$ and $s\in\R^p_+$ and a parameter $\theta\in\R^p_+$ with:
\begin{enumerate}
\item[(i)] $h(x)=0$, $c(x)>0$,
\item[(ii)] $\|\nabla f(x)+\sum_{i=1}^m\lambda_i\nabla h_i(x)-\sum_{i=1}^p s_i\nabla c_i(x)\|_{\infty}\leq\varepsilon_1$,
\item[(iii)]
$|c_i(x)s_i|\leq\varepsilon_1$ for all $i=1,\dots,p$,
\item[(iv)] $d^\top\left(\nabla^2f(x)+\sum_{i=1}^m\lambda_i\nabla^2h_i(x)-\sum_{i=1}^ps_i\nabla^2c_i(x)+\sum_{i=1}^p\theta_i\nabla c_i(x)\nabla c_i(x)^\top+\varepsilon_2\mathcal{I}\right)d\geq0,$ for all $d\in\R^n$ with $\nabla h_i(x)^\top d=0, i=1,\dots,m,$
\item[(v)] 
$|c_i(x)^2\theta_i|\leq\varepsilon_2$ for all $i=1,\dots,p$.
\end{enumerate}
\end{definition}

While \defn{kkt1} and \defn{kkt2} apply to general nonlinear constraints, \prop{kkt1} and \prop{kkt2} present sufficient conditions for approximate first-order KKT points and approximate second-order KKT points for linearly constrained optimization problem \eq{main-prob}.
\begin{proposition}[{\cite[Proposition 1]{haeser2019optimality}}]
\label{prop:kkt1}
    Let $\varepsilon > 0$. A point $\boldsymbol{x}$ is said to satisfy the sufficient condition for being an $\varepsilon$-KKT point if there exists a vector $\boldsymbol{v} \in \mathbb{R}^m$ such that:
    \begin{enumerate}
        \item $\boldsymbol{A}\boldsymbol{x}=\boldsymbol{b}$, $\boldsymbol{x} > 0$
        \item $\nabla f(\boldsymbol{x}) + \boldsymbol{A}^\top \boldsymbol{v} \geq -\varepsilon$
        \item $\Vert \boldsymbol{X}(\nabla f(\boldsymbol{x}) + \boldsymbol{A}^\top \boldsymbol{v}) \Vert_\infty \leq \varepsilon$
    \end{enumerate}
\end{proposition}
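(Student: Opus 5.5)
The plan is to specialize \defn{kkt1} to problem \eq{main-prob}, writing $h(\boldsymbol{x})=\boldsymbol{A}\boldsymbol{x}-\boldsymbol{b}$ and $c(\boldsymbol{x})=\boldsymbol{x}$. Then $\nabla h_i(\boldsymbol{x})=\boldsymbol{a}_i$ (the $i$-th row of $\boldsymbol{A}$), $\nabla c_i(\boldsymbol{x})=\boldsymbol{e}_i$, and $p=n$. Condition (i) of \defn{kkt1} is exactly item 1 of the proposition. It remains to produce multipliers $\lambda\in\R^m$ and $s\in\R^n_+$ satisfying (ii) and (iii).

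Let $\boldsymbol{g}:=\nabla f(\boldsymbol{x})+\boldsymbol{A}^\top\boldsymbol{v}$ and set $\lambda:=\boldsymbol{v}$. The natural choice $s=\boldsymbol{g}$ fails only because $\boldsymbol{g}$ may have negative entries, so I would take $s:=\max\{\boldsymbol{g},\boldsymbol{0}\}$ componentwise, which lies in $\R^n_+$. The stationarity residual in (ii) is then
\[
\nabla f(\boldsymbol{x})+\boldsymbol{A}^\top\boldsymbol{v}-s=\boldsymbol{g}-\max\{\boldsymbol{g},\boldsymbol{0}\}=\min\{\boldsymbol{g},\boldsymbol{0}\}.
\]
Item 2 of the proposition gives $g_i\ge-\varepsilon$, so every entry of this residual lies in $[-\varepsilon,0]$. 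Hence its $\infty$-norm is at most $\varepsilon$, which is (ii).

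For complementarity (iii), note $|c_i(\boldsymbol{x})s_i|=x_i\max\{g_i,0\}$. If $g_i\le 0$ this product is $0$. Otherwise, since $x_i>0$, it equals $x_i g_i=|(\boldsymbol{X}\boldsymbol{g})_i|\le\|\boldsymbol{X}\boldsymbol{g}\|_\infty\le\varepsilon$ by item 3.

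The argument is elementary. The only real step is choosing $s$ as the positive part of $\boldsymbol{g}$, so that the negative parts are absorbed by the lower bound in item 2 while complementarity is controlled by item 3. The positivity $x_i>0$ from item 1 is what allows $x_ig_i$ to be identified with $|(\boldsymbol{X}\boldsymbol{g})_i|$.
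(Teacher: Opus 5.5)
Your proof is correct and complete: with $\lambda=\boldsymbol{v}$ and $s=\max\{\nabla f(\boldsymbol{x})+\boldsymbol{A}^\top\boldsymbol{v},\boldsymbol{0}\}$, item 2 bounds the stationarity residual $\min\{\boldsymbol{g},\boldsymbol{0}\}$ in $\infty$-norm by $\varepsilon$, and item 3 (using $x_i>0$) bounds $x_i s_i$ by $\varepsilon$, which is exactly the verification of \defn{kkt1}. The paper omits the proof and defers to Haeser et al.; this positive-part choice of multipliers is the natural verification there, so your approach is essentially the same.
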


\begin{proposition}[{\cite[Proposition 2]{haeser2019optimality}}]
\label{prop:kkt2}
Let $\varepsilon_1,\varepsilon_2 > 0$. A point $\boldsymbol{x}$ satisfies the sufficient condition for being an $(\varepsilon_1,\varepsilon_2)$-KKT2 point if there exists a vector $\boldsymbol{v} \in \mathbb{R}^m$ such that:
    \begin{enumerate}
        \item $\boldsymbol{A}\boldsymbol{x}=\boldsymbol{b}$, $\boldsymbol{x} > 0$
        \item $\nabla f(\boldsymbol{x}) + \boldsymbol{A}^\top \boldsymbol{v} \geq -\varepsilon_1$
        \item $\Vert \boldsymbol{X}(\nabla f(\boldsymbol{x}) + \boldsymbol{A}^\top \boldsymbol{v}) \Vert_\infty \leq \varepsilon_1$
        \item $\lambda_{\min}(\boldsymbol{X}\nabla^2f(\boldsymbol{x})\boldsymbol{X})_{\boldsymbol{A}\boldsymbol{X}} \geq -\varepsilon_2$
    \end{enumerate}
    
\end{proposition}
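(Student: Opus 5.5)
We verify the conditions of \defn{kkt2} directly, specialized to problem \eq{main-prob}: take $h(\boldsymbol{x})=\boldsymbol{A}\boldsymbol{x}-\boldsymbol{b}$ and $c(\boldsymbol{x})=\boldsymbol{x}$, so that $\nabla h_i=\boldsymbol{a}_i$ (the rows of $\boldsymbol{A}$), $\nabla c_i=\boldsymbol{e}_i$, and all constraint Hessians vanish. We set $\lambda:=\boldsymbol{v}$ and $s:=\nabla f(\boldsymbol{x})+\boldsymbol{A}^\top\boldsymbol{v}$. This $s$ is not necessarily nonnegative, so we also introduce $s^+:=\max\{s,\boldsymbol{0}\}$ componentwise as the actual multiplier for the inequality constraints. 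Conditions (i)--(iii) of \defn{kkt2} are then handled exactly as in the first-order result, \prop{kkt1}, which we may assume.

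\textbf{First-order conditions.} Condition (i) is item 1 of the statement. For (ii), the residual is
\[
\nabla f(\boldsymbol{x})+\boldsymbol{A}^\top\boldsymbol{v}-s^+=\min\{s,\boldsymbol{0}\},
\]
whose $\infty$-norm is at most $\varepsilon_1$ by item 2. For (iii), $|x_i s_i^+|\le|x_is_i|\le\varepsilon_1$ by item 3 together with $x_i>0$.

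\textbf{Second-order condition.} The real content is (iv)--(v). With linear constraints, (iv) requires
\[
\boldsymbol{d}^\top\bigl(\nabla^2 f(\boldsymbol{x})+\mathrm{Diag}(\theta)+\varepsilon_2\boldsymbol{I}\bigr)\boldsymbol{d}\ge 0 \quad\text{for all } \boldsymbol{d}\text{ with } \boldsymbol{A}\boldsymbol{d}=\boldsymbol{0}.
\]
The natural choice is $\theta_i:=\varepsilon_2/x_i^2$, which gives $|x_i^2\theta_i|=\varepsilon_2$ and so satisfies (v). Then $\varepsilon_2\boldsymbol{I}+\mathrm{Diag}(\theta)\succeq \varepsilon_2\boldsymbol{X}^{-2}$, so it suffices to show
\[
\boldsymbol{d}^\top\nabla^2 f(\boldsymbol{x})\boldsymbol{d}\ge -\varepsilon_2\,\boldsymbol{d}^\top\boldsymbol{X}^{-2}\boldsymbol{d}\quad\text{whenever } \boldsymbol{A}\boldsymbol{d}=\boldsymbol{0}.
\]
Substitute $\boldsymbol{d}=\boldsymbol{X}\boldsymbol{u}$. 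Because $\boldsymbol{X}$ is invertible, $\boldsymbol{A}\boldsymbol{d}=\boldsymbol{0}$ if and only if $\boldsymbol{A}\boldsymbol{X}\boldsymbol{u}=\boldsymbol{0}$, and the inequality becomes
\[
\boldsymbol{u}^\top\boldsymbol{X}\nabla^2 f(\boldsymbol{x})\boldsymbol{X}\boldsymbol{u}\ge-\varepsilon_2\|\boldsymbol{u}\|^2 .
\]
This is exactly item 4, reading $\lambda_{\min}(\cdot)_{\boldsymbol{A}\boldsymbol{X}}$ as the minimum eigenvalue restricted to the null space of $\boldsymbol{A}\boldsymbol{X}$.

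\textbf{Main obstacle.} The only delicate points are bookkeeping. First, the sign of $s$ forces the switch to $s^+$. Second, one must pick $\theta$ so that the extra $\varepsilon_2\boldsymbol{I}$ term and the $\theta$ term together dominate $\varepsilon_2\boldsymbol{X}^{-2}$; taking $\theta_i=\varepsilon_2/x_i^2$ alone already suffices, and the $\varepsilon_2\boldsymbol{I}$ term is a harmless nonnegative extra. Once the change of variables $\boldsymbol{d}=\boldsymbol{X}\boldsymbol{u}$ is made, everything else is immediate.
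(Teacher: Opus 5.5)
Your proof is correct and complete. The paper itself gives no proof and defers to Haeser et al.; your direct verification of each condition of the KKT2 definition is the natural argument, namely $\lambda=\boldsymbol{v}$, $s=\max\{\nabla f(\boldsymbol{x})+\boldsymbol{A}^\top\boldsymbol{v},\boldsymbol{0}\}$ with item 2 bounding the residual $\min\{s,\boldsymbol{0}\}$, $\theta_i=\varepsilon_2/x_i^2$, and the rescaling $\boldsymbol{d}=\boldsymbol{X}\boldsymbol{u}$ that turns condition (iv) into item 4, read as the minimum of the quadratic form over unit vectors in $\ker(\boldsymbol{A}\boldsymbol{X})$.
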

The proofs of these propositions are omitted for brevity and can be found in \cite{haeser2019optimality}.

\subsection{Assumptions}
\label{sec:assump}
This section details the assumptions imposed on the objective function. Rather than assuming global Lipschitz continuity of the gradient or the Hessian in the original variable $\boldsymbol{x}$, we require Lipschitz-type bounds only along the locally feasible displacement directions $\boldsymbol{d}$. These displacements are measured in the scaled coordinate system defined by the diagonal matrix $\boldsymbol{X}$ and restricted to satisfy the linearized feasibility condition $\boldsymbol{A}\boldsymbol{X}\boldsymbol{d}=0$.
The following assumptions formalize these scaled Lipschitz properties for the gradient and the Hessian.

\begin{assumption}
\label{assum:grad-lip}
Suppose $f(\boldsymbol{x})$ is twice differentiable on $\Omega^\circ$. Then, for any $\boldsymbol{x} \in \Omega^\circ$ and any $\boldsymbol{d},\boldsymbol{d}^\prime \in \{\boldsymbol{d}: \Vert \boldsymbol{d} \Vert \leq \gamma ,\ \boldsymbol{A}\boldsymbol{X}\boldsymbol{d}=\boldsymbol{0} \}$, where $\gamma<1$ is a fixed constant, the following conditions hold: 
\begin{equation}
    \Vert \boldsymbol{X}\left(\nabla f(\boldsymbol{X}(\boldsymbol{e}+\boldsymbol{d}))-\nabla f(\boldsymbol{X}(\boldsymbol{e}+\boldsymbol{d}^\prime))\right) \Vert \leq l \Vert\boldsymbol{d}- \boldsymbol{d}^\prime\Vert.
\end{equation}
\end{assumption}
Under \assum{grad-lip}, define $g(\boldsymbol{d}) := f(\boldsymbol{X}(\boldsymbol{e}+\boldsymbol{d}))$.
Then $\nabla g(\boldsymbol{d}) = \boldsymbol{X}\nabla f(\boldsymbol{X}(\boldsymbol{e}+\boldsymbol{d}))$, and the assumption ensures that
$\nabla g$ is $l$-Lipschitz on the feasible displacement set. Consequently,
$\|\nabla^2 g(\boldsymbol{d})\| \le l$, and the standard quadratic upper bound holds:
\begin{equation}
g(\boldsymbol{d})
\le g(\boldsymbol{0})
+ \langle \nabla g(\boldsymbol{0}), \boldsymbol{d}\rangle
+ \frac{l}{2}\|\boldsymbol{d}\|^2 . 
\end{equation}
Expressed in the original variables, this implies $\|\boldsymbol{X}\nabla^2 f(\boldsymbol{x})\boldsymbol{X}\|\le l$ and
\begin{equation}
\label{eq:2nd-oder-upper}
f(\boldsymbol{X}(\boldsymbol{e}+\boldsymbol{d})) \le f(\boldsymbol{x}) + \langle \boldsymbol{X}\nabla f(\boldsymbol{x}), \boldsymbol{d}\rangle + \frac{l}{2}\|\boldsymbol{d}\|^2 .
\end{equation}

\begin{assumption}
\label{assum:hessian-lip}
Suppose $f(\boldsymbol{x})$ is twice differentiable on $\Omega^\circ$. Then, for any $\boldsymbol{x} \in \Omega^\circ$ and any $\boldsymbol{d},\boldsymbol{d}^\prime \in \{\boldsymbol{d}: \Vert \boldsymbol{d} \Vert \leq \gamma ,\ \boldsymbol{A}\boldsymbol{X}\boldsymbol{d}=\boldsymbol{0} \}$, where $\gamma<1$ is a fixed constant, the following conditions hold:
\begin{equation}
    \Vert \boldsymbol{X}\left(\nabla^2 f(\boldsymbol{X}(\boldsymbol{e}+\boldsymbol{d}))-\nabla^2 f(\boldsymbol{X}(\boldsymbol{e}+\boldsymbol{d}^\prime))\right)\boldsymbol{X} \Vert \leq \rho \Vert\boldsymbol{d}- \boldsymbol{d}^\prime\Vert.
\end{equation}
\end{assumption}
Analogous to the second-order bound \eq{2nd-oder-upper}, 
\assum{hessian-lip} implies the following third-order upper bound: 
\begin{equation}
    f(\boldsymbol{X}(\boldsymbol{e}+\boldsymbol{d})) \leq f(\boldsymbol{x}) + \langle  \boldsymbol{X} \nabla f(\boldsymbol{x}),\boldsymbol{d}\rangle + \frac{1}{2} \boldsymbol{d} \boldsymbol{X}\nabla^2f(\boldsymbol{x})\boldsymbol{X} \boldsymbol{d}+ \frac{\rho}{6} \Vert \boldsymbol{d} \Vert^3.
\end{equation}

All first-order algorithms presented in this paper rely on \assum{grad-lip}. When such first-order methods are augmented with a negative-curvature finding procedure to obtain second-order KKT points, an additional \assum{hessian-lip} is required. In comparison, Haeser et al.~\cite{haeser2019optimality} impose a slightly different and stronger set of assumptions to find approximate second-order KKT points, which we specify below.
\begin{assumption}[{\cite[Assumption 3(b)(c) and Assumption 4]{haeser2019optimality}}]
\label{assum:assum-haeser}
\leavevmode
\begin{enumerate}
    \item[(a)] Given $\boldsymbol{x}_0$ in $\Omega^\circ$, there exists $R \geq 1$ such that $\sup\{\Vert \boldsymbol{x} \Vert_\infty: f(\boldsymbol{x})\leq f(\boldsymbol{x}_0)\}\leq R$.
    \item[(b)] There exists $L\in \mathbb{R}$ such that $f(\boldsymbol{x}) \geq L$ for all $\boldsymbol{x} \in \Omega^\circ$.
    \item[(c)] Suppose $f(\boldsymbol{x})$ is twice differentiable on $\Omega^\circ$. Then, for any $\boldsymbol{x} \in \Omega^\circ$ and any $\boldsymbol{d},\boldsymbol{d}^\prime \in \{\boldsymbol{d}: \Vert \boldsymbol{d} \Vert \leq \gamma ,\ \boldsymbol{A}\boldsymbol{X}\boldsymbol{d}=\boldsymbol{0} \}$, where $\gamma<1$ is a fixed constant, the following conditions hold:
\begin{equation}
    \Vert \boldsymbol{X}\left(\nabla^2 f(\boldsymbol{X}(\boldsymbol{e}+\boldsymbol{d}))-\nabla^2 f(\boldsymbol{X}(\boldsymbol{e}+\boldsymbol{d}^\prime))\right) \Vert \leq \eta \Vert\boldsymbol{d}- \boldsymbol{d}^\prime\Vert.
\end{equation}
\end{enumerate}
\end{assumption}
It is easy to verify that \assum{assum-haeser}(c) implies \assum{hessian-lip}, as $\Vert\boldsymbol{X}\Vert$ is locally bounded. In this sense, \assum{hessian-lip} is a strictly weaker requirement. We note that \cite[Theorem 4]{haeser2019optimality} is based only on \assum{assum-haeser}(a) and (b) together with \assum{hessian-lip}. The algorithm converges to points that satisfy only Conditions 1, 3, and 4 of \prop{kkt2}, lacking the dual feasibility (Condition 2) required for an approximate second-order KKT point.

\subsection{Potential function and analytic center}
\label{sec:potential-center}
Let $0 < \varepsilon \leq \min\{\gamma, 1\}$, where $\gamma$ is the constant specified in \assum{grad-lip} and \assum{hessian-lip}. The first-order IPTR algorithm employs the following potential function:
\begin{equation}
    \phi(\boldsymbol{x})  = f(\boldsymbol{x}) - \varepsilon \sum_{i=1}^n \ln(x_i).
\end{equation}

The following lemma provides a useful inequality for controlling the logarithmic barrier term in the potential function.
\begin{lemma}
If $\boldsymbol{x} > \boldsymbol{0}$ and $\Vert \boldsymbol{d} \Vert \leq \beta < 1$, then
\begin{equation}
    -\sum_{i=1}^{n} \ln(x_i + x_i d_i) + \sum_{i=1}^{n} \ln(x_i) \leq \boldsymbol{e}^\top \boldsymbol{d} + \frac{\beta^2}{2(1-\beta)}.
\end{equation}
    
\end{lemma}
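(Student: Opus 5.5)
The plan is to reduce the statement to a one-dimensional inequality for the logarithm and then sum over coordinates. Since $x_i>0$, each summand satisfies $-\ln(x_i+x_id_i)+\ln(x_i) = -\ln(1+d_i)$, so the left-hand side equals $-\sum_{i=1}^n \ln(1+d_i)$. Because $|d_i|\le \|\boldsymbol{d}\|_\infty\le\|\boldsymbol{d}\|\le\beta<1$, every argument $1+d_i$ is positive and every logarithm is defined.

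The core scalar step is to show that for $|u|\le\beta<1$,
\begin{equation*}
-\ln(1+u) \le -u + \frac{u^2}{2(1-\beta)}.
\end{equation*}
I would prove this from the Taylor series $-\ln(1+u) = -u + \sum_{k\ge 2} (-1)^k u^k/k$. Each term with $k\ge2$ is bounded by $\frac{|u|^k}{2}$, and
\begin{equation*}
\sum_{k\ge 2}\frac{|u|^k}{2} = \frac{u^2}{2(1-|u|)} \le \frac{u^2}{2(1-\beta)}.
\end{equation*}
Alternatively, the integral form of Taylor's remainder gives the same bound, since the second derivative of $-\ln(1+t)$ is $(1+t)^{-2}\le(1-\beta)^{-2}$ and one can sharpen this to the factor $1/(1-\beta)$. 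Summing over $i$ and using $\sum_i d_i^2=\|\boldsymbol{d}\|^2\le\beta^2$ yields
\begin{equation}
\label{eq:barrier-core}
-\sum_{i=1}^n \ln(x_i+x_id_i)+\sum_{i=1}^n\ln(x_i) \le -\boldsymbol{e}^\top\boldsymbol{d} + \frac{\beta^2}{2(1-\beta)}.
\end{equation}

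The main obstacle is passing from \eqref{eq:barrier-core} to the statement as worded, whose linear term is $+\boldsymbol{e}^\top\boldsymbol{d}$. Since $-\boldsymbol{e}^\top\boldsymbol{d}\le\boldsymbol{e}^\top\boldsymbol{d}$ exactly when $\boldsymbol{e}^\top\boldsymbol{d}\ge0$, the stated inequality follows from \eqref{eq:barrier-core} for every $\boldsymbol{d}$ with $\boldsymbol{e}^\top\boldsymbol{d}\ge 0$.

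The condition $\boldsymbol{e}^\top\boldsymbol{d}\ge0$ cannot be dropped. For $n=1$ and $d_1=-\beta=-\tfrac12$, the left-hand side is $\ln 2\approx0.69$, while the right-hand side is $-\tfrac12+\tfrac14<0$. So I would state explicitly that the $+\boldsymbol{e}^\top\boldsymbol{d}$ form holds under the hypothesis $\boldsymbol{e}^\top\boldsymbol{d}\ge0$.

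There is a second reading that holds for all $\|\boldsymbol{d}\|\le\beta$. If the step is taken as $\boldsymbol{x}-\boldsymbol{X}\boldsymbol{d}$, i.e. the barrier is evaluated at $x_i-x_id_i$, then applying \eqref{eq:barrier-core} with $\boldsymbol{d}$ replaced by $-\boldsymbol{d}$ (which has the same norm) gives precisely $\boldsymbol{e}^\top\boldsymbol{d}+\frac{\beta^2}{2(1-\beta)}$. I would record both readings so the bound can be invoked with whichever sign convention the subsequent descent analysis uses.
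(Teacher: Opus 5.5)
The paper states this lemma without proof; it is treated as a standard barrier estimate. There is therefore no argument of the paper's to compare routes with. Your coordinatewise Taylor-series proof is the usual one and is correct. It even gives the slightly sharper
$-\sum_i\ln(1+d_i)\le-\boldsymbol{e}^\top\boldsymbol{d}+\frac{\|\boldsymbol{d}\|^2}{2(1-\|\boldsymbol{d}\|_\infty)}$.
Your counterexample is also valid: with $n=1$, $d_1=-\tfrac12$, the left side is $\ln 2$ and the right side is $-\tfrac14$. So the printed $+\boldsymbol{e}^\top\boldsymbol{d}$ is a sign typo, and you were right to flag it.

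Where I would push back is on how you resolve the typo. The paper's own usage settles which version is intended. In \prop{potential-decre}, the iterate is $\boldsymbol{x}_{t+1}=\boldsymbol{X}_t(\boldsymbol{e}+\widetilde{\boldsymbol{d}}_t)$. The bound \eq{phi-decre-ineq} comes from adding $\varepsilon$ times the barrier estimate to the quadratic upper bound on $f$.

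\begin{itemize}
\item The linear term $-\varepsilon\boldsymbol{e}^\top\widetilde{\boldsymbol{d}}_t$ is exactly what merges with $\langle\boldsymbol{X}_t\nabla f(\boldsymbol{x}_t),\widetilde{\boldsymbol{d}}_t\rangle$ into $\langle\boldsymbol{X}_t\nabla\phi(\boldsymbol{x}_t),\widetilde{\boldsymbol{d}}_t\rangle$, because $\boldsymbol{X}\nabla\phi(\boldsymbol{x})=\boldsymbol{X}\nabla f(\boldsymbol{x})-\varepsilon\boldsymbol{e}$.
\item The bound $\frac{\beta^2}{2(1-\beta)}\le\beta^2$ for $\beta\le\frac12$ produces the $\varepsilon\beta^2$ term.
\end{itemize}

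So the intended lemma is precisely your inequality with $-\boldsymbol{e}^\top\boldsymbol{d}$, and you should simply state that.

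Neither of your two rescue readings is needed, and the first is actively unhelpful.
\begin{itemize}
\item The restriction $\boldsymbol{e}^\top\boldsymbol{d}\ge0$ gives a strictly weaker statement. It is also inapplicable, since $\boldsymbol{e}^\top\widetilde{\boldsymbol{d}}_t$ has no sign. Even where it holds, it would yield the linear term $\boldsymbol{X}_t\nabla f(\boldsymbol{x}_t)+\varepsilon\boldsymbol{e}$ rather than $\boldsymbol{X}_t\nabla\phi(\boldsymbol{x}_t)$.
\item The $\boldsymbol{x}-\boldsymbol{X}\boldsymbol{d}$ reading is just the corrected inequality after $\boldsymbol{d}\mapsto-\boldsymbol{d}$, written in a sign convention the paper does not use.
\end{itemize}

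One minor point: your aside that the integral remainder ``can be sharpened'' from $(1-\beta)^{-2}$ to $(1-\beta)^{-1}$ is asserted without argument. The naive Lagrange bound gives only the former. Nothing depends on it, since the series bound is complete; either drop the remark or justify it.
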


To initialize the algorithm, we require an approximate analytic center of the feasible region. Specifically, we assume that the initial point $\boldsymbol{x}_0 > \boldsymbol{0}$ satisfies
\begin{equation}
\label{eq:initial}
    -\sum_{i=1}^n \ln(x_i) \geq -\sum_{i=1}^n \ln(x_{0,i})-C_0 \qquad \forall \boldsymbol{x} \in \Omega^\circ.
\end{equation}
Finding an approximate analytic center amounts to solving $\min -\sum_{i=1}^n \log(x_i),\ \text{s.t.} \boldsymbol{A}\boldsymbol{x}=\boldsymbol{b}, \boldsymbol{x}\geq 0$. Since the objective function $-\sum_{i=1}^n \log(x_i)$ is an $n$-self-concordant barrier, an approximate analytic center can be computed in $\widetilde{\mathcal{O}}(\sqrt{n})$ damped Newton steps \cite{nemirovski2004interior,nesterov1994interior}. Each step requires $\mathcal{O}(n)$ time, leading to an overall time complexity of $\widetilde{\mathcal{O}}(n^{1.5})$. Throughout the paper, we assume that the algorithm is initialized at such an approximate analytic center.

\section{\Robustalgo First-order Interior Point Trust Region Algorithm}
\label{sec:robust-1st}

In this section, we present our \robustalgo first-order IPTR algorithm. Following the standard IPTR methodology, the search direction at iteration $t$ is ideally obtained by solving a trust-region subproblem over the constraints:
\begin{equation}
\begin{aligned}
\label{eq:1st-IPTR}
   \min\quad &\nabla \phi(\boldsymbol{x}_t)^\top \boldsymbol{X}_t \boldsymbol{d} \\
   \mathrm{s.t.}\quad &\boldsymbol{A}\boldsymbol{X}_t \boldsymbol{d} = 0,\ \Vert \boldsymbol{d} \Vert \leq \beta,
\end{aligned}
\end{equation}
where $\beta$ denotes the trust-region radius. When $\boldsymbol{P}_t \boldsymbol{X}_t \nabla \phi(\boldsymbol{x}_t) \neq \boldsymbol{0}$, the exact solution to \eq{1st-IPTR} admits the closed-form expression:
\begin{equation}
\label{eq:exact-d}
\boldsymbol{d}_{t} := - \beta \frac{\boldsymbol{P}_t \boldsymbol{X}_t \nabla \phi(\boldsymbol{x}_t)}{\Vert \boldsymbol{P}_t \boldsymbol{X}_t \nabla \phi(\boldsymbol{x}_t) \Vert},
\end{equation}
where $\boldsymbol{P}_t \coloneqq \boldsymbol{I} - \boldsymbol{X}_t \boldsymbol{A}^\top (\boldsymbol{A} \boldsymbol{X}_t^2 \boldsymbol{A}^\top)^{-1} \boldsymbol{A} \boldsymbol{X}_t$ is the orthogonal projection onto the null space of $\boldsymbol{A}\boldsymbol{X}_t$. 
After computing the search direction $\boldsymbol{d}_t$, the next iterate is updated as $\boldsymbol{x}_{t+1} = \boldsymbol{x}_t+\boldsymbol{X}_t\boldsymbol{d}_t$. 
The main computational challenge in evaluating \eq{exact-d} lies in computing the projection matrix $\boldsymbol{P}_t$. Since the scaling matrix $\boldsymbol{X}_t$ changes at every iteration, computing this exact projection from scratch incurs a prohibitive complexity of $\mathcal{O}(nm^{\omega-1})$ at each iteration. This forms a significant computational bottleneck for large-scale problems.

Our \algo{robust-1st-order} overcomes this bottleneck by reducing the average per-iteration complexity to $\mathcal{O}(mn)$. This makes each iteration as cheap as a single matrix-vector multiplication. The key idea is to replace the frequently changing exact diagonal $\boldsymbol{X}_t$ in the inversion term $(\boldsymbol{A} \boldsymbol{X}_t^2 \boldsymbol{A}^\top)^{-1}$ with a sparsely updated diagonal approximation $\overline{\boldsymbol{X}}_{t}$. This yields an approximate projection matrix $\boldsymbol{R}_t$:
\begin{equation}
\label{eq:R_t}
    \boldsymbol{R}_t \coloneqq  \boldsymbol{I} - \boldsymbol{X}_{t}^{-1} \overline{\boldsymbol{X}}^2_{t} \boldsymbol{A}^\top (\boldsymbol{A} \overline{\boldsymbol{X}}^2_{t} \boldsymbol{A}^\top)^{-1} \boldsymbol{A}\boldsymbol{X}_{t}.
\end{equation}
By design, this specific construction of $\boldsymbol{R}_t$ not only ensures that the search direction strictly resides within the null space of $\boldsymbol{A}\boldsymbol{X}_t$, but also allows rapid updates since the diagonal matrix $\overline{\boldsymbol{X}}_t$ is modified sparsely. Based on this efficient projection, we scale the step to a norm of $\beta$ whenever the projected gradient is non-zero, and set it to zero otherwise. Accordingly, the approximate search direction $\widetilde{\boldsymbol{d}}_{t}$ is computed as
\begin{equation}
\label{eq:d-tilde}
\widetilde{\boldsymbol{d}}_{t} := 
\begin{cases}
    \boldsymbol{0}, & \text{if } \boldsymbol{R}_t \boldsymbol{X}_t \nabla \phi(\boldsymbol{x}_t) = \boldsymbol{0}, \\
    - \beta \frac{\boldsymbol{R}_t \boldsymbol{X}_t \nabla \phi(\boldsymbol{x}_t)}{\Vert \boldsymbol{R}_t \boldsymbol{X}_t \nabla \phi(\boldsymbol{x}_t) \Vert}, & \text{otherwise.}
\end{cases}
\end{equation}

The sparse update $\overline{\boldsymbol{X}}_t$ can be viewed as a lazy update of $\boldsymbol{X}_t$. The motivation is that the iterate is updated as $\boldsymbol{x}_{t+1} = \boldsymbol{x}_t + \boldsymbol{X}_t \widetilde{\boldsymbol{d}}_t$. Since the Euclidean norm of $\widetilde{\boldsymbol{d}}_t$ is bounded by $\beta$, where $\beta$ is typically controlled by the approximation tolerance $\varepsilon$ for approximate KKT points, the relative change in $\boldsymbol{x}_t$ is small. Consequently, only a small subset of coordinates undergo significant changes between $\boldsymbol{x}_t$ and $\boldsymbol{x}_{t+1}$. When updating the approximation $\overline{\boldsymbol{X}}_t$, we therefore modify $\overline{\boldsymbol{X}}_t$ only on those coordinates with relatively large changes, while leaving the remaining coordinates unchanged. The updated entries are taken from $\boldsymbol{x}_{t+1}$, which gives the new approximation $\overline{\boldsymbol{X}}_{t+1}$.

This lazy-update scheme is useful because it turns the change from $\overline{\boldsymbol{X}}_t$ to $\overline{\boldsymbol{X}}_{t+1}$ into a sparse diagonal modification. To implement it, we employ the $\mathtt{SelectVector}$ algorithm from \cite[Algorithm~4]{lee2021tutorial} to identify the coordinates with noticeable relative changes and refresh only those entries of $\overline{\boldsymbol{X}}_t$. As a result, the difference $\overline{\boldsymbol{X}}_{t+1}^2-\overline{\boldsymbol{X}}_t^2$ is supported on only a small number of coordinates, and hence $\boldsymbol{A}\overline{\boldsymbol{X}}_{t+1}^2\boldsymbol{A}^\top$ is obtained from $\boldsymbol{A}\overline{\boldsymbol{X}}_t^2\boldsymbol{A}^\top$ by a low-rank update. Therefore, its inverse can be maintained efficiently by the Sherman-Morrison-Woodbury formula, which substantially reduces the per-iteration computational cost compared to standard first-order IPTR methods.

Our \robustalgo first-order IPTR algorithm is presented in \algo{robust-1st-order}.  The algorithm is initialized by selecting an approximate analytic center $\boldsymbol{x}_0$ and setting $\overline{\boldsymbol{x}}_0=\boldsymbol{x}_0$. At iteration $t$, it approximately solves the subproblem \eq{1st-IPTR} using the projection matrix $\boldsymbol{R}_t$, computes the direction $\widetilde{\boldsymbol{d}}_t$ by \eq{d-tilde}, and updates $\boldsymbol{x}_{t+1}=\boldsymbol{x}_t+\boldsymbol{X}_t\widetilde{\boldsymbol{d}}_t$. It then checks whether the potential function decreases sufficiently. If not, the algorithm returns $\boldsymbol{x}_t$. Otherwise, it updates $\overline{\boldsymbol{x}}_{t+1}$ by applying $\mathtt{SelectVector}$ to the logarithms of the iterates.

\vspace{4mm}
\begin{algorithm}[!htb]
    \SetAlgoLined
    \caption{\Robustalgo First-order Interior Point Trust Region Algorithm}
    \label{algo:robust-1st-order}
    
    Initialize $\boldsymbol{x}_0$ as an approximate analytic center\;
    
    $T \leftarrow \frac{\left(f(\boldsymbol{x}_0)- f(\boldsymbol{x}^*) + (C_0-1)\varepsilon\right)(l + 2\varepsilon + 2)}{\varepsilon^2} , \beta\leftarrow \varepsilon/(l+2\varepsilon+2), \delta \leftarrow \min(\varepsilon/(15L_\phi), \beta / (92 L_{\phi})) $ and $\overline{\boldsymbol{x}}_{0} \leftarrow \boldsymbol{x}_{0}$\;
    
    \For{$t=0,\ldots,T-1$}{
    Approximate the subproblem \eq{1st-IPTR} using the projection matrix
    \begin{equation*}
        \boldsymbol{R}_t: =  \boldsymbol{I} - \boldsymbol{X}_{t}^{-1} \overline{\boldsymbol{X}}^2_{t} \boldsymbol{A}^\top 
    (\boldsymbol{A} \overline{\boldsymbol{X}}^2_{t} \boldsymbol{A}^\top)^{-1} \boldsymbol{A}\boldsymbol{X}_{t};
    \end{equation*}
    \If{$\boldsymbol{R}_t \boldsymbol{X}_t \nabla \phi(\boldsymbol{x}_t)=\boldsymbol{0}$}{$\widetilde{\boldsymbol{d}}_{t} := \boldsymbol{0}$\;}
    \Else{
    \begin{equation*}
    \widetilde{\boldsymbol{d}}_{t}:=- \beta \frac{\boldsymbol{R}_t \boldsymbol{X}_t \nabla \phi(\boldsymbol{x}_t)}{\Vert \boldsymbol{R}_t \boldsymbol{X}_t \nabla \phi(\boldsymbol{x}_t) \Vert};
    \end{equation*}
    }
    
    $\boldsymbol{x}_{t+1} \leftarrow \boldsymbol{x}_t + \boldsymbol{X}_t \widetilde{\boldsymbol{d}}_{t}$\;
    
    \If{$\phi(\boldsymbol{x}_{t+1})-\phi(\boldsymbol{x}_{t}) >- \frac{\varepsilon^2}{2l + 4\varepsilon + 4} $}{Return $\boldsymbol{x}_t$\;}
    
    $\ln\overline{\boldsymbol{x}}_{t+1} = \mathtt{SelectVector}(\ln\overline{\boldsymbol{x}}_{t},\ln\boldsymbol{x}_{0},\ln\boldsymbol{x}_{1},\ldots,\ln\boldsymbol{x}_{t+1},\delta) $\;
    }
\end{algorithm}
\vspace{4mm}

\subsection{Sparse update of \texorpdfstring{$\overline{\boldsymbol{X}}$}{Xbar}}

In this subsection, we analyze the sparse update scheme for $\overline{\boldsymbol{X}}_t$ and the resulting time complexity of \algo{robust-1st-order}. We first show that the change of $\ln \boldsymbol{x}_t$ across iterations is small. We then describe how we maintain a sparsely updated approximation $\overline{\boldsymbol{x}}_t$ of $\boldsymbol{x}_t$. Finally, we present \lem{time-inverse} and \prop{robust-time} that establish the time complexity of our algorithm.

\begin{lemma}
\label{lem:log-change}
    Suppose $\boldsymbol{x}_t \in \mathbb{R}_{++}^n$ and $\boldsymbol{x}_{t+1} = \boldsymbol{X}_t(\boldsymbol{e} + \boldsymbol{d}_t)$ with $\Vert \boldsymbol{d}_t \Vert_2 \leq \beta$ and $0 < \beta < \frac{1}{2}$. Then it holds that $\Vert \ln \boldsymbol{x}_{t+1} - \ln \boldsymbol{x}_{t} \Vert_2 \leq 2\beta$.
\end{lemma}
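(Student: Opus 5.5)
The argument is coordinatewise. Since $\boldsymbol{x}_{t+1} = \boldsymbol{X}_t(\boldsymbol{e}+\boldsymbol{d}_t)$, each coordinate satisfies $x_{t+1,i} = x_{t,i}(1+d_{t,i})$. We have $|d_{t,i}| \le \Vert \boldsymbol{d}_t\Vert_2 \le \beta < \tfrac12$, so $1+d_{t,i} > \tfrac12 > 0$. Hence $\boldsymbol{x}_{t+1}\in\mathbb{R}^n_{++}$, the logarithm is well defined, and
\[
\ln x_{t+1,i} - \ln x_{t,i} = \ln(1+d_{t,i}).
\]
This reduces the claim to bounding $\sum_i \ln^2(1+d_{t,i})$ by $4\beta^2$.

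The key scalar inequality is $|\ln(1+u)| \le 2|u|$ for all $|u|\le \tfrac12$. I would prove it by the mean value theorem. We have $\ln(1+u) = u/(1+\xi)$ for some $\xi$ between $0$ and $u$. Since $|\xi|\le \tfrac12$, it follows that $1+\xi \ge \tfrac12$, and so $|\ln(1+u)| \le 2|u|$.

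Squaring this coordinatewise bound and summing over $i$ gives
\[
\Vert \ln\boldsymbol{x}_{t+1}-\ln\boldsymbol{x}_t\Vert_2^2 \le 4\sum_i d_{t,i}^2 \le 4\beta^2.
\]
Taking square roots yields the claim. The only point needing care is using $\Vert\boldsymbol{d}_t\Vert_\infty\le\Vert\boldsymbol{d}_t\Vert_2\le\beta<\tfrac12$ to keep every coordinate inside the region where the scalar inequality holds. Beyond that, no real obstacle is expected.
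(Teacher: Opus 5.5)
Your proof is correct and follows essentially the same route as the paper: a coordinatewise reduction to $\ln(1+d_{t,i})$, the scalar bound $|\ln(1+u)|\le 2|u|$ for $|u|<\tfrac12$, and summing squares. The differences are minor. The paper gets the scalar bound from $-|u|/(1-|u|)\le\ln(1+u)\le|u|$ rather than from the mean value theorem, and your explicit check that $1+d_{t,i}>0$ (so the logarithm is defined) is a point the paper leaves implicit.
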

\begin{proof}
For each $i$, we have $x_{t+1,i} = x_{t,i}(1 + d_{t,i})$.
Taking logarithms gives $
\ln x_{t+1,i}-\ln x_{t,i}=\ln(1+d_{t,i})$. Using the standard inequality for $|d_{t,i}| < \tfrac{1}{2}$, 
\begin{equation}
   -2\vert d_{t,i} \vert  \leq-\frac{\vert d_{t,i} \vert}{1-\vert d_{t,i} \vert} \leq \ln(1+d_{t,i}) \leq \vert d_{t,i} \vert, 
\end{equation}
we obtain
\begin{equation}
    \Vert \ln\boldsymbol x_{t+1}-\ln\boldsymbol x_t \Vert_2
=\Bigl(\sum_i \bigl|\ln(1+d_{t,i})\bigr|^2\Bigr)^{1/2}
\le 2\Bigl(\sum_i |d_{t,i}|^2\Bigr)^{1/2}
=2\Vert\boldsymbol d_t\Vert_2 \leq 2\beta.
\end{equation}
\end{proof}

\lem{log-change} shows that the change in $\ln \boldsymbol{x}_t$ across successive iterations is small. Based on this property, we apply the $\mathtt{SelectVector}$ algorithm from \cite[Algorithm~4]{lee2021tutorial} to maintain a sparse update of $\overline{\boldsymbol{X}}_t$. The algorithm takes $(\ln \overline{\boldsymbol{x}}_{t-1}, \ln \boldsymbol{x}_0, \ln \boldsymbol{x}_1, \ldots, \ln \boldsymbol{x}_t, \delta)$ as input and outputs $\ln \overline{\boldsymbol{x}}_t$ satisfying the following properties.
\begin{lemma}[{\cite[Lemma 19]{lee2021tutorial}}]
\label{lem:SelectVector}
Given vectors $\ln\boldsymbol{x}_0,\ln\boldsymbol{x}_1,\ln\boldsymbol{x}_2,\ldots$
arriving in a stream, and satisfies that $\Vert \ln\boldsymbol{x}_{t+1}-\ln\boldsymbol{x}_{t} \Vert_{2}\leq 2\beta$
for all $t$. For any $\frac{1}{2}>\delta>0$, define the vector $\ln\overline{\boldsymbol{x}}_0=\ln\boldsymbol{x}_0$
and $\ln\overline{\boldsymbol{x}}_{t}=\mathtt{SelectVector}(\ln\overline{\boldsymbol{x}}_{t-1},\ln\boldsymbol{x}_0,\ln\boldsymbol{x}_1,\ldots,\ln\boldsymbol{x}_t,\delta)$.
Then, we have that
\begin{itemize}
\item[(i)] $\Vert \ln\overline{\boldsymbol{x}}_{t}-\ln\boldsymbol{x}_t\Vert_{\infty}\leq\delta$ for all $k$.
\item[(ii)] $\Vert \ln\overline{\boldsymbol{x}}_{t} -\ln\overline{\boldsymbol{x}}_{t-1} \Vert_{0}\leq O(2^{2l_{t}}(2\beta/\delta)^{2}\log^{2}n)$
where $l_{t}$ is the largest integer $l$ with $t=0\mod2^{l}$. 
\end{itemize}
\end{lemma}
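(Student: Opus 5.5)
The plan is to reconstruct the analysis of $\mathtt{SelectVector}$ from \cite{lee2021tutorial}, which is a binary-decomposition (dyadic) lazy-update scheme. Write $\boldsymbol{y}_t := \ln\boldsymbol{x}_t$ and $\overline{\boldsymbol{y}}_t := \ln\overline{\boldsymbol{x}}_t$. At step $t$ the procedure does the following. For each level $l' = 0,1,\ldots,l_t$ it looks at the block increment $\boldsymbol{y}_t - \boldsymbol{y}_{t-2^{l'}}$. It marks every coordinate $i$ with $|y_{t,i} - y_{t-2^{l'},i}| \geq \delta/(2\lceil\log_2 n\rceil+2)$, or more generally with threshold $\delta/(c\log n)$. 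It also marks every coordinate where $|\overline{y}_{t-1,i}-y_{t,i}|$ is already close to $\delta$. It then sets $\overline{y}_{t,i}=y_{t,i}$ on the marked set and $\overline{y}_{t,i}=\overline{y}_{t-1,i}$ elsewhere. I take this concrete description as the definition being analyzed and prove (i) and (ii) from it.

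For (i), the key step is a telescoping argument over the binary expansion of elapsed time. Fix a coordinate $i$ and let $s<t$ be the last time it was refreshed, so $\overline{y}_{t,i}=y_{s,i}$. The interval $(s,t]$ decomposes into at most $O(\log t)$ dyadic blocks of the form $(t'-2^{l'},t']$ with $t'\equiv 0 \bmod 2^{l'}$. None of these blocks triggered an update on coordinate $i$, since otherwise $i$ would have been refreshed after $s$. So each block contributes less than the threshold $\delta/(c\log n)$ to $|y_{t,i}-y_{s,i}|$. I need the number of levels to be $O(\log n)$ rather than $O(\log t)$. I would handle this by additionally refreshing every coordinate whenever $t$ is a multiple of a large power of two. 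Alternatively, I can note that the cost bound in (ii) already pays for a full refresh at level $l_t\approx\log_2 n$. Summing the block contributions then gives $|\overline{y}_{t,i}-y_{t,i}|\leq\delta$, which is (i).

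For (ii), I count the coordinates marked at level $l'$ at time $t$. By the triangle inequality and the hypothesis $\|\boldsymbol{y}_{k+1}-\boldsymbol{y}_k\|_2\leq 2\beta$, we have $\|\boldsymbol{y}_t-\boldsymbol{y}_{t-2^{l'}}\|_2\leq 2^{l'}\cdot 2\beta$. By Markov/Chebyshev on squared entries, the number of coordinates exceeding $\delta/(c\log n)$ is at most $(2^{l'}2\beta)^2 (c\log n)^2/\delta^2 = O(2^{2l'}(2\beta/\delta)^2\log^2 n)$. Summing the geometric series over $l'\leq l_t$ gives the claimed $O(2^{2l_t}(2\beta/\delta)^2\log^2 n)$.

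The main obstacle is the near-threshold marking rule in (i). I must make sure it does not blow up the count in (ii). This means charging every coordinate marked because $|\overline{y}_{t-1,i}-y_{t,i}|$ drifted to some dyadic block whose increment is large. So I would drop that rule and rely purely on the dyadic thresholds, with the logarithmic-depth telescoping doing the work. The remaining care is choosing the constant $c$ so that the $O(\log n)$ blocks of size below $\delta/(c\log n)$ sum to at most $\delta$.
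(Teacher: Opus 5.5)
Your argument for (ii) (triangle inequality over $2^{l'}$ steps, Chebyshev on squared entries at threshold $\tau=\delta/(c\log n)$, geometric sum over $l'\le l_t$) and your telescoping over untriggered dyadic blocks in (i) are the standard argument behind the cited lemma. The paper proves nothing here itself; it imports the lemma. The rule it describes in the proof of \prop{robust-time} is the dyadic comparison you settle on: compare $\ln\boldsymbol{x}_t$ with $\ln\boldsymbol{x}_{t-2^l}$ at every level $l\le\lceil\log_2 n\rceil$ with $2^l\mid t$.

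The gap is the step that removes the dependence on $t$. Your worry is real. With levels capped at $L=\lceil\log_2 n\rceil$, a coordinate that increases by $\tau/2^{L+1}$ per step never triggers a check, yet drifts past $\delta$ after $O(n\log n)$ steps, while the paper runs $T=O(\varepsilon^{-2})$ iterations. With uncapped levels and $t=2^K-1$, a coordinate can move just under $\tau$ inside each of the $K$ maximal dyadic blocks of $(0,t]$.

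However, a full refresh at times with $l_t\ge L$ is not covered by (ii). It can change all $n$ coordinates, whereas (ii) allows only $O(4^{L}(2\beta/\delta)^2\log^2 n)=O(n^2(\beta/\delta)^2\log^2 n)$, which is $o(n)$ once $\beta/\delta=o(1/(\sqrt n\log n))$. For example, take $\beta=\delta/n$ and move every coordinate by $2\beta/\sqrt n$ per step. For large $n$, no check up to level $L$ fires, yet the refresh at $t=2^L$ changes all $n$ entries against an allowance of $O(\log^2 n)$.

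Moving the refresh up to the least level $L^*\ge L$ with $4^{L^*}(2\beta/\delta)^2\log^2 n\ge n$ makes it affordable. But then $(s,t]$ can need about $2L^*\gg\log n$ blocks, and your sum of thresholds no longer stays below $\delta$. So the claim that ``(ii) already pays for a full refresh'' needs the extra hypothesis $\beta/\delta=\Omega(1/(\sqrt n\log n))$. That hypothesis holds where the paper applies the lemma ($\delta=\Theta(\beta)$), but the statement, which allows arbitrary $\beta$ and $0<\delta<\tfrac12$, does not grant it.

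You can close the gap in either of two ways.
\begin{enumerate}
\item \textbf{Accept a horizon.} Take uncapped levels and threshold $\delta/(2\lceil\log_2 T\rceil+2)$. Your telescoping then works verbatim for $t\le T$, and (ii) holds with $\log^2 T$ in place of $\log^2 n$. The paper's $\widetilde{\mathcal{O}}$ absorbs this, since $T$ is polynomial in $\varepsilon^{-1}$.
\item \textbf{Use the step bound inside (i) as well.} Keep $\tau=\delta/(c\log n)$ and refresh at level $L^*$. Bound the effect of an untriggered level-$l$ block on coordinate $i$ by $\min\{\tau,2^{l+1}\beta\}$ rather than by $\tau$. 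The levels with $2^{l+1}\beta<\tau$ then contribute a geometric series of total $O(\tau)$. By the choice of $L^*$, only $O(\log n)$ levels with $2^{l+1}\beta\ge\tau$ lie below $L^*$, each contributing at most $2\tau$. A large enough $c$ gives $\Vert\ln\overline{\boldsymbol{x}}_t-\ln\boldsymbol{x}_t\Vert_\infty\le\delta$ for all $t$, with (ii) intact.
\end{enumerate}
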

The $\ell_\infty$-bound in \lem{SelectVector}(i) immediately yields the following component wise comparison between $\overline{\boldsymbol{x}}_t$ and $\boldsymbol{x}_t$:
\begin{equation}
     e^{-\delta} \boldsymbol{x}_t\leq \overline{\boldsymbol{x}}_t   \leq e^{\delta} \boldsymbol{x}_t. 
\end{equation}
Hence, $\overline{\boldsymbol{X}}_t$ remains a multiplicative approximation of $\boldsymbol{X}_t$. Meanwhile, \lem{SelectVector}(ii) bounds the number of diagonal entries that change between consecutive iterations. This sparsity in the updates of $\overline{\boldsymbol{X}}_t$ will allow us to improve the time complexity of computing the matrix inverse $(\boldsymbol{A}\overline{\boldsymbol{X}}_t^{2}\boldsymbol{A}^\top)^{-1}$, as formalized in \lem{time-inverse}.
\begin{lemma}
\label{lem:time-inverse}
Let $\Vert \overline{\boldsymbol{x}}_{t+1} - \overline{\boldsymbol{x}}_{t} \Vert_0 = q_t$.
Given $(\boldsymbol{A} \overline{\boldsymbol{X}}_{t}^{2} \boldsymbol{A}^\top)^{-1}$, the inverse $(\boldsymbol{A} \overline{\boldsymbol{X}}_{t+1}^{2} \boldsymbol{A}^\top)^{-1}$ can be updated in time $\mathcal{O}(m^{2} q_t^{\omega - 2})$ when $q_t \le m$, and in time $\mathcal{O}(nm^{\omega-1})$ when $m< q_t \le n$, where $\omega$ denotes the exponent of matrix multiplication.
\end{lemma}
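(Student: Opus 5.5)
We treat the two regimes separately; in both, the key observation is that $\overline{\boldsymbol{X}}_{t+1}^2-\overline{\boldsymbol{X}}_t^2$ is a diagonal matrix supported on the set $S$ of the $q_t$ coordinates where $\overline{\boldsymbol{x}}$ changed. Writing $\boldsymbol{D}:=(\overline{\boldsymbol{X}}_{t+1}^2-\overline{\boldsymbol{X}}_t^2)_{S,S}\in\mathbb{R}^{q_t\times q_t}$ and $\boldsymbol{A}_S\in\mathbb{R}^{m\times q_t}$ for the columns of $\boldsymbol{A}$ indexed by $S$, we get
\begin{equation*}
\boldsymbol{A}\overline{\boldsymbol{X}}_{t+1}^2\boldsymbol{A}^\top=\boldsymbol{M}+\boldsymbol{A}_S\boldsymbol{D}\boldsymbol{A}_S^\top,\qquad \boldsymbol{M}:=\boldsymbol{A}\overline{\boldsymbol{X}}_t^2\boldsymbol{A}^\top .
\end{equation*}
This is a rank-$q_t$ update of a matrix whose inverse $\boldsymbol{M}^{-1}$ is given.

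For $q_t\le m$, we apply the Sherman--Morrison--Woodbury formula
\begin{equation*}
(\boldsymbol{M}+\boldsymbol{A}_S\boldsymbol{D}\boldsymbol{A}_S^\top)^{-1}=\boldsymbol{M}^{-1}-\boldsymbol{M}^{-1}\boldsymbol{A}_S\bigl(\boldsymbol{D}^{-1}+\boldsymbol{A}_S^\top\boldsymbol{M}^{-1}\boldsymbol{A}_S\bigr)^{-1}\boldsymbol{A}_S^\top\boldsymbol{M}^{-1}.
\end{equation*}
Every entry of $\boldsymbol{D}$ is nonzero, because a coordinate in $S$ actually changed and all entries are positive, so the squares differ. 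We then count costs using fast rectangular multiplication by splitting into $q_t\times q_t$ blocks.
\begin{itemize}
\item Forming $\boldsymbol{U}:=\boldsymbol{M}^{-1}\boldsymbol{A}_S$, an $(m\times m)\cdot(m\times q_t)$ product, costs $(m/q_t)^2$ block products of cost $q_t^\omega$ each, i.e.\ $\mathcal{O}(m^2q_t^{\omega-2})$.
\item Forming $\boldsymbol{A}_S^\top\boldsymbol{U}$ costs $\mathcal{O}(mq_t^{\omega-1})\le\mathcal{O}(m^2q_t^{\omega-2})$.
\item Inverting the $q_t\times q_t$ capacitance matrix costs $\mathcal{O}(q_t^\omega)$.
\item The final product $\boldsymbol{U}\boldsymbol{C}^{-1}\boldsymbol{U}^\top$, an $(m\times q_t)(q_t\times q_t)(q_t\times m)$ product, again costs $\mathcal{O}(m^2q_t^{\omega-2})$.
\item Subtracting this from $\boldsymbol{M}^{-1}$ costs $\mathcal{O}(m^2)$.
\end{itemize}
The bound $\mathcal{O}(m^2q_t^{\omega-2})$ follows.

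The main obstacle is justifying that the capacitance matrix $\boldsymbol{C}:=\boldsymbol{D}^{-1}+\boldsymbol{A}_S^\top\boldsymbol{M}^{-1}\boldsymbol{A}_S$ is invertible, since $\boldsymbol{D}$ may have negative entries. We argue this through the determinant identity
\begin{equation*}
\det(\boldsymbol{M}+\boldsymbol{A}_S\boldsymbol{D}\boldsymbol{A}_S^\top)=\det(\boldsymbol{M})\det(\boldsymbol{D})\det(\boldsymbol{C}).
\end{equation*}
The left-hand side is nonzero because $\boldsymbol{A}$ has full row rank and $\overline{\boldsymbol{X}}_{t+1}\succ0$, so $\boldsymbol{A}\overline{\boldsymbol{X}}_{t+1}^2\boldsymbol{A}^\top$ is positive definite. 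Since $\det(\boldsymbol{M})\ne0$ and $\det(\boldsymbol{D})\ne0$, it follows that $\det(\boldsymbol{C})\ne0$. To avoid dividing by $\boldsymbol{D}$ altogether, one could instead use the variant with $\boldsymbol{I}+\boldsymbol{D}\boldsymbol{A}_S^\top\boldsymbol{M}^{-1}\boldsymbol{A}_S$.

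For $m<q_t\le n$, we simply recompute from scratch. Forming $\boldsymbol{A}\overline{\boldsymbol{X}}_{t+1}^2\boldsymbol{A}^\top$ involves $\mathcal{O}(mn)$ work to scale the columns. The product of the $m\times n$ and $n\times m$ matrices, split into $n/m$ blocks of size $m\times m$, costs $\mathcal{O}((n/m)m^\omega)=\mathcal{O}(nm^{\omega-1})$. Inverting the resulting $m\times m$ matrix costs $\mathcal{O}(m^\omega)\le\mathcal{O}(nm^{\omega-1})$, which gives the second bound.
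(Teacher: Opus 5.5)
Your proposal is correct and follows essentially the same route as the paper: a rank-$q_t$ Woodbury update with block-rectangular cost accounting giving $\mathcal{O}(m^2q_t^{\omega-2})$ when $q_t\le m$, and recomputation from scratch in $\mathcal{O}(nm^{\omega-1})$ when $q_t>m$. Your determinant-identity argument that the capacitance matrix is invertible (needed because $\boldsymbol{D}$ may have negative entries) fills in a step the paper leaves implicit.
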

\begin{proof}
    Denote $\boldsymbol{K}_t:=\boldsymbol{A} \overline{\boldsymbol{X}}_{t}^{2} \boldsymbol{A}^\top$ and  $\boldsymbol{K}_{t+1}:=\boldsymbol{A} \overline{\boldsymbol{X}}_{t+1}^{2} \boldsymbol{A}^\top = \boldsymbol{A} (\overline{\boldsymbol{X}}_{t}^{2}+\overline{\boldsymbol{X}}_{t+1}^{2}-\overline{\boldsymbol{X}}_{t}^{2}) \boldsymbol{A}^\top$.  
    Let $\mathcal{I}_t :=\{i:\overline{\boldsymbol{x}}_{t+1,i} \neq \overline{\boldsymbol{x}}_{t,i}\}$ be the index set of the updated coordinates, so that $\vert \mathcal{I}_t\vert = q_t$. Since only the coordinates in $\mathcal{I}_t$ are updated, the diagonal matrix $\overline{\boldsymbol{X}}_{t+1}^{2} - \overline{\boldsymbol{X}}_{t}^{2}$ can be written as $\boldsymbol{U}\boldsymbol{C}\boldsymbol{U}^\top$, 
     where $\boldsymbol{U} \in \mathbb{R}^{n \times q_t}$ consists of the columns of the identity matrix $\boldsymbol{I}_n$ indexed by $\mathcal{I}_t$, and $\boldsymbol{C} \in \mathbb{R}^{q_t \times q_t}$ is diagonal. 
     Hence, $\boldsymbol{K}_{t+1}=\boldsymbol{A} (\overline{\boldsymbol{X}}_{t}^{2}+\boldsymbol{U}\boldsymbol{C}\boldsymbol{U}^\top) \boldsymbol{A}^\top = \boldsymbol{K}_t + \boldsymbol{A} \boldsymbol{U}\boldsymbol{C}\boldsymbol{U}^\top \boldsymbol{A}^\top$. The product $\boldsymbol{A} \boldsymbol{U}$ corresponds to a block of $\boldsymbol{A}$ containing $m$ rows and $q_t$ selected columns. We denote this submatrix explicitly as $\boldsymbol{A}_{:, \mathcal{I}_t} := \boldsymbol{A}\boldsymbol{U}$, which is the $m \times q_t$ submatrix of $\boldsymbol{A}$ consisting of the columns indexed by $\mathcal{I}_t$. Applying the Woodbury matrix identity gives

    \begin{equation}
    \begin{aligned}
        \boldsymbol{K}_{t+1}^{-1} 
        &= \left(\boldsymbol{K}_t + \boldsymbol{A}_{:, \mathcal{I}_t}\boldsymbol{C} \boldsymbol{A}_{:, \mathcal{I}_t}^\top\right)^{-1} \\
        &= \boldsymbol{K}_t^{-1} - \boldsymbol{K}_t^{-1}\boldsymbol{A}_{:, \mathcal{I}_t}\left( \boldsymbol{C}^{-1} +\boldsymbol{A}_{:, \mathcal{I}_t}^\top \boldsymbol{K}_t^{-1}  \boldsymbol{A}_{:, \mathcal{I}_t} \right)^{-1} \boldsymbol{A}_{:, \mathcal{I}_t}^\top \boldsymbol{K}_t^{-1}.
    \end{aligned}
    \end{equation}
    When $q_t \leq m$, computing $\left( \boldsymbol{C}^{-1} +\boldsymbol{A}_{:, \mathcal{I}_t}^\top \boldsymbol{K}_t^{-1}  \boldsymbol{A}_{:, \mathcal{I}_t} \right)^{-1}$ requires $\mathcal{O}(m^2q_t^{\omega-2} + mq_t^{\omega-1} + q_t^{\omega}) = \mathcal{O}(m^2q_t^{\omega-2})$ time, which accounts for one matrix multiplication of size $q_t \times m$  with  $m \times m$, one multiplication of size $q_t \times m$ with $m \times q_t$, and one inversion of size $(q_t \times q_t)$. The time complexity of multiplying an $m\times n$ matrix with an $n \times q$ matrix is $\mathcal{O}(mnq\min\{m,n,q\}^{\omega-3})$ because the rectangular matrix multiplication can be decomposed into blocks of square submatrices along its smallest dimension. Computing $\boldsymbol{A}_{:, \mathcal{I}_t}^\top \boldsymbol{K}_t^{-1}$ and $\boldsymbol{K}_t^{-1}\boldsymbol{A}_{:, \mathcal{I}_t}$ also requires $\mathcal{O}(m^2q_t^{\omega-2})$ time. Finally, multiplying $\boldsymbol{K}_t^{-1}\boldsymbol{A}_{:, \mathcal{I}_t}$, $\left( \boldsymbol{C}^{-1} +\boldsymbol{A}_{:, \mathcal{I}_t}^\top \boldsymbol{K}_t^{-1}  \boldsymbol{A}_{:, \mathcal{I}_t} \right)^{-1}$, and $\boldsymbol{A}_{:, \mathcal{I}_t}^\top \boldsymbol{K}_t^{-1}$ together has a time complexity of $\mathcal{O}(mq_t^{\omega-1} + m^2q_t^{\omega-2}) = \mathcal{O}(m^2q_t^{\omega-2})$, due to one multiplication of size $m \times q_t$ with $q_t \times q_t$ and one multiplication of size $m \times q_t$ with $q_t \times m$. In summary, when $q_t \leq m$, the overall time complexity for computing the inverse $\boldsymbol{K}_{t+1}^{-1}$ is $\mathcal{O}(m^2q_t^{\omega-2})$.

    Now consider the case \(m < q_t \le n\). In this case, computing
    \begin{equation*}
        \left( \boldsymbol{C}^{-1} + \boldsymbol{A}_{:, \mathcal{I}_t}^\top \boldsymbol{K}_t^{-1} \boldsymbol{A}_{:, \mathcal{I}_t} \right)^{-1}
    \end{equation*}
    requires \(\mathcal{O}(q_t^\omega)\) time, as it involves the inversion of a \(q_t \times q_t\) matrix. As this time complexity may exceed the cost of recomputing the inverse from scratch, we instead directly form \(\boldsymbol{K}_{t+1}=\boldsymbol{A}\overline{\boldsymbol{X}}_{t+1}^2\boldsymbol{A}^\top\) and compute its inverse anew. Forming \(\boldsymbol{K}_{t+1}\) requires \(\mathcal{O}(nm^{\omega-1})\) time, which accounts for one matrix multiplication of size \(m\times n\) with \(n\times m\), and computing the inverse of \(\boldsymbol{K}_{t+1}\) requires \(\mathcal{O}(m^\omega)\) time. Therefore, in this case, the overall time complexity is \(\mathcal{O}(nm^{\omega-1})\).
\end{proof}

We now combine \lem{SelectVector} and \lem{time-inverse} to derive the overall running time of \algo{robust-1st-order}. The former characterizes the sparsity of the updates to $\overline{\boldsymbol{X}}_t$, and the latter shows how this sparsity translates into a reduced cost for maintaining the inverse $(\boldsymbol{A}\overline{\boldsymbol{X}}_t^2\boldsymbol{A}^\top)^{-1}$ in the approximate projection $\boldsymbol{R}_t$. 
\begin{proposition}
\label{prop:robust-time}
Suppose \algo{robust-1st-order} runs for $T$ iterations, and let $\delta = \Theta(\beta)$. Then, the overall time complexity is upper bounded by 
\begin{equation}
    \widetilde{\mathcal{O}}\left(nm^{\omega-1} + mnT\right).
\end{equation}
\end{proposition}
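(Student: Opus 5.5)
The plan is to split the total cost into three parts: a one-time initialization, the per-iteration work apart from maintaining the inverse, and the cumulative cost of maintaining $\boldsymbol{K}_t^{-1}=(\boldsymbol{A}\overline{\boldsymbol{X}}_t^2\boldsymbol{A}^\top)^{-1}$.

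\textbf{Initialization.} Forming $\boldsymbol{K}_0$ costs $\mathcal{O}(nm^{\omega-1})$ and inverting it costs $\mathcal{O}(m^\omega)$, for a total of $\mathcal{O}(nm^{\omega-1})$.

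\textbf{Per-iteration work.} Given $\boldsymbol{K}_t^{-1}$, the vector $\boldsymbol{R}_t\boldsymbol{X}_t\nabla\phi(\boldsymbol{x}_t)$ is evaluated right to left as a chain of products. The diagonal scalings cost $\mathcal{O}(n)$, multiplication by $\boldsymbol{A}$ or $\boldsymbol{A}^\top$ costs $\mathcal{O}(mn)$, and multiplication by $\boldsymbol{K}_t^{-1}$ costs $\mathcal{O}(m^2)\le\mathcal{O}(mn)$. The normalization, the update of $\boldsymbol{x}_{t+1}$, the potential test (one gradient and function oracle call) and $\mathtt{SelectVector}$ each cost $\widetilde{\mathcal{O}}(n)$ per iteration, amortized for $\mathtt{SelectVector}$ as in \cite{lee2021tutorial}. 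Summed over the run, this is $\widetilde{\mathcal{O}}(mnT)$.

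\textbf{Cumulative inverse maintenance.} By \lem{log-change}, since $\beta<1/2$, the hypothesis of \lem{SelectVector} holds. With $\delta=\Theta(\beta)$, \lem{SelectVector}(ii) gives $q_t\le \mathcal{O}(2^{2l_t}\log^2 n)$, where $l_t$ is the 2-adic valuation of $t$. By \lem{time-inverse}, the cost at step $t$ is at most
\[
\mathcal{O}\bigl(\min\{m^2q_t^{\omega-2},\,nm^{\omega-1}\}\bigr)
\]
(if $q_t>m$ we recompute from scratch). Group the steps by $l_t=l$. Among $t\le T$ there are at most $T/2^l$ such steps, each costing
\[
\widetilde{\mathcal{O}}\bigl(\min\{m^2 2^{2l(\omega-2)},\,nm^{\omega-1}\}\bigr).
\]
For $l$ with $2^{2l}\le m$, the contribution is $\widetilde{\mathcal{O}}(Tm^2 2^{l(2\omega-5)})$. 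Since $2\omega-5<0$, this is a geometrically decreasing series in $l$, summing to $\widetilde{\mathcal{O}}(Tm^2)\le\widetilde{\mathcal{O}}(mnT)$. For $l$ with $2^{2l}>m$, use the bound $nm^{\omega-1}$ per step. There are at most $T/2^l< T/\sqrt m$ such steps per level, and summing over levels gives $\widetilde{\mathcal{O}}(Tnm^{\omega-1}/\sqrt{m})=\widetilde{\mathcal{O}}(Tnm^{\omega-3/2})$. Because $\omega<2.5$, this is at most $\widetilde{\mathcal{O}}(mnT)$.

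\textbf{Main obstacle.} The hard part is this amortization. Two points need care. First, the polylog factor from $(2\beta/\delta)^2\log^2 n$ must stay polylogarithmic, which holds because $\delta=\Theta(\beta)$. Second, the large-$l$ regime must be handled with the recompute bound rather than with Woodbury, so that the exponent $\omega-3/2\le 1$ keeps it within $mn$ per iteration. I would check both inequalities explicitly, using $\omega\approx 2.37$.

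Adding the three parts gives the claimed bound $\widetilde{\mathcal{O}}(nm^{\omega-1}+mnT)$.
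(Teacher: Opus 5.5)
Your proposal follows the paper's proof essentially step for step. It uses the same cost split (initial formation of the inverse, $\mathcal{O}(mn)$ per-iteration direction computation, amortized $\mathtt{SelectVector}$, and inverse maintenance via \lem{time-inverse}), groups iterations by $l_t$ with at most $T/2^l$ per level, sums a geometric series on the Woodbury levels, and recomputes on the high levels to get $\widetilde{\mathcal{O}}(nm^{\omega-1.5}T)\le\widetilde{\mathcal{O}}(mnT)$ because $\omega<2.5$.

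One small slip: when $q_t>m$, \lem{time-inverse} gives $nm^{\omega-1}$, not $\min\{m^2q_t^{\omega-2},nm^{\omega-1}\}$, so the regimes should be split at $C4^{l}\log^2 n\le m$ (the paper's $l^*$) rather than at $4^l\le m$. The $\mathcal{O}(\log\log n)$ intermediate levels have $2^l=\widetilde{\Omega}(\sqrt m)$, so they are absorbed into the same $\widetilde{\mathcal{O}}(nm^{\omega-1.5}T)$ term and your conclusion stands.
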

\begin{proof}
    The total computational cost of \algo{robust-1st-order} arises from three main components: (1) updating the inverse in the projection matrix $\boldsymbol{R}_t$, (2) computing the approximate direction $\widetilde{\boldsymbol{d}}_t$ and updating the iterates $\boldsymbol{x}_{t+1}$, and (3) executing the $\mathtt{SelectVector}$ procedure that determines the sparse update pattern of $\overline{\boldsymbol{X}}_t$. We analyze each component in turn.
    \begin{enumerate}
        \item Cost of updating inverse  $(\boldsymbol{A} \overline{\boldsymbol{X}}^2_{t} \boldsymbol{A}^\top)^{-1}$: 
        According to \lem{SelectVector}, the maximum number of coordinate changes between 
        $\ln \overline{\boldsymbol{x}}_t$ and $\ln \overline{\boldsymbol{x}}_{t-1}$ is bounded by 
        $\mathcal{O}(4^{l_t}\log^2 n)$.  
        Let $q_t$ denote the actual number of changed coordinates at iteration $t$. Then, there exists a constant $C \ge 1$ such that
        \begin{equation}
        \label{eq:q_t}
            q_t \leq \min \left\{ n,C4^{l_t}\log^2 n \right\}\text{, for $t=1,2,\ldots,T$}.
        \end{equation}
        We categorize $q_t$ according to the value of $l_t$. Recall that $l_t$ is defined as the largest integer $l \leq \lceil \log n \rceil$ such that $t \equiv 0 \bmod{2^{l}}$. Let $N_l$ denote the number of indices $t$ that share the same value of $l$, i.e.,
        \begin{equation}
        \label{eq:N_l}
        \begin{aligned}
            N_l&:=\vert\{t\in[1,T],\; t \equiv 0 \bmod 2^l \text{ and } t \not \equiv 0 \bmod 2^{l+1}\}\vert \text{ for $0\leq l \leq \lceil \log n \rceil-1$,} \\ N_{\lceil \log n \rceil}&:=\vert\{t\in[1,T],\; t \equiv 0 \bmod 2^{\lceil \log n \rceil}\}\vert.
        \end{aligned}
        \end{equation}
        For $0\leq l \leq \lceil \log n \rceil-1$, the size of $N_l$  equals the number of integers in $[1, T]$ that are divisible by $2^l$ but not by $2^{l+1}$
        \begin{equation}
             N_l  = \left\lfloor  \frac{T}{2^l}\right\rfloor - \left\lfloor  \frac{T}{2^{l+1}}\right\rfloor.
        \end{equation}
        In addition, $N_{\lceil \log n \rceil} = \left\lfloor  \frac{T}{2^{\lceil \log n \rceil}}\right\rfloor$. Combining these bounds, we obtain $N_l \leq \frac{T}{2^l}$ for $0\leq l \leq \lceil \log n \rceil$.

        At each iteration $t$, if $q_t$ coordinates are updated, then by \lem{time-inverse}, when $q_t \le m$, the inverse $(\boldsymbol{A} \overline{\boldsymbol{X}}_{t}^2 \boldsymbol{A}^\top)^{-1}$ can be updated from the previous inverse via a rank-$q_t$ Woodbury update in time $\mathcal{O}(m^2 q_t^{\omega-2})$, where $\omega$ denotes the exponent of matrix multiplication. When $m < q_t \le n$, we compute $(\boldsymbol{A} \overline{\boldsymbol{X}}_{t}^2 \boldsymbol{A}^\top)^{-1}$ directly rather than applying the Woodbury update. The time complexity of this direct computation is $\mathcal{O}(nm^{\omega-1})$, which accounts for forming $\boldsymbol{A} \overline{\boldsymbol{X}}_{t}^2 \boldsymbol{A}^\top$ and computing its inverse. Therefore, we apply the Woodbury update only when $q_t \le m$, and compute the inverse directly otherwise.
        We set $l^*$ to ensure that for $l = 0,1,\ldots, l^*$, $C4^{l}\log^2 n \leq m$.
        \begin{equation}
            l^* := \frac{1}{2} \left\lfloor \left( \log_2 \left(\frac{m}{C\log^2 n}\right)\right) \right\rfloor.
        \end{equation}

        For $q_t \leq m$, the time complexity is 
        \begin{equation}
        \begin{aligned}
            \sum_{t=1}^{T} m^2 q_t^{\omega-2} 
            &\leq \sum_{l=0}^{\lceil \log n \rceil} m^2 N_l \left(C4^l \log^2n\right)^{\omega-2} \\
            & \leq C^{\omega-2} m^2 T \log^{2(\omega-2)}n \sum_{l=0}^{\lceil \log n \rceil} 2^{(2\omega-5)l} \\
            & = \widetilde{\mathcal{O}}\left(m^2T\right).
        \end{aligned}
        \end{equation}

        For $q_t > m$, which implies $l_t > l^*$ and $T > \sqrt{\frac{m}{C\log^2 n}}$ (otherwise we would have $N_{l_t} = 0$), the time complexity is 
        \begin{equation}
        \begin{aligned}
            \sum_{t=1}^{T} 1_{\{q_t > m\}} nm^{\omega-1}
            & \leq \sum_{t=1}^{T} 1_{\{C4^{l_t}\log^2 n > m\}} nm^{\omega-1}\\
           &=\sum_{l=l^*+1}^{\lceil \log n \rceil}  N_l  nm^{\omega-1} \\
            & \leq T nm^{\omega-1}  \sum_{l=l^*+1}^{\lceil \log n \rceil} \frac{1}{2^l} \\
            &  \leq T nm^{\omega-1} \left( \frac{1}{2^{l^*}} \right) \\
            & \leq T nm^{\omega-1} \frac{\sqrt{C}\log n}{\sqrt{m}} \\
            & =\widetilde{\mathcal{O}}\left(nm^{\omega-1.5} T\right).
        \end{aligned}
        \end{equation}

        \item  Cost of computing $\widetilde{\boldsymbol{d}}_t$ and updating $\boldsymbol{x}_{t+1}$. After computing the inverse $(\boldsymbol{A}\overline{\boldsymbol{X}}_{t}^{2}\boldsymbol{A}^\top)^{-1}$, we obtain $\widetilde{\boldsymbol{d}}_t$ according to \eq{d-tilde}. This step requires a time complexity of $\mathcal{O}(mn)$ per iteration due to the matrix–vector multiplication. Updating $\boldsymbol{x}_{t+1}$ incurs an additional cost of $\mathcal{O}(n)$ per iteration. Therefore, the overall time complexity of this part over $T$ iterations is $\mathcal{O}(mnT)$.
        
        \item Cost of $\mathtt{SelectVector}$ algorithm: 
        For each iteration $t \in [1,T]$, the $\mathtt{SelectVector}$ algorithm in \cite[Algorithm~4]{lee2021tutorial} examines every level $l\in[0,\lceil\log_2 n\rceil]$ satisfying $t =0 \bmod  2^l$, and compares each coordinate $i\in [n]$ between $\ln \boldsymbol{x}_{t,i}$ and $\ln \boldsymbol{x}_{(t-2^l),i}$. This process requires a total computational cost of
        \begin{equation}
            \sum_{l=1}^{\lceil\log_2 n\rceil} n l N_l \leq  nT\sum_{l=1}^{\lceil\log_2 n\rceil} \frac{l}{2^l} \leq 2nT,
        \end{equation}
        where $N_l$ is defined in \eq{N_l}. Therefore, the time complexity of this part is $\mathcal{O}(nT)$.
    \end{enumerate}
Combining the above bounds and including the initial cost $\widetilde{\mathcal{O}}(nm^{\omega-1})$ of forming $(\boldsymbol{A}\overline{\boldsymbol{X}}_{0}^{2}\boldsymbol{A}^\top)^{-1}$ gives the claimed complexity bound $\widetilde{\mathcal{O}}(nm^{\omega-1}+mnT)$.
\end{proof}

\subsection{Convergence of potential function}

We now show that the approximate projector $\boldsymbol{R}_t$ used in \algo{robust-1st-order} still preserves the descent property of the exact IPTR step. The argument has two parts. First, we compare $\boldsymbol{R}_t$ with the exact orthogonal projector $\boldsymbol{P}_t$ and bound their difference using the multiplicative closeness between $\overline{\boldsymbol{X}}_t$ and $\boldsymbol{X}_t$ in \lem{R-P}. Then based on this comparison, we show that each iteration either yields a sufficient decrease in the potential function or certifies the approximate KKT optimality in \prop{potential-decre}. Finally, by combining the per-iteration complexity bound in \prop{robust-time} with the convergence guarantee in \prop{potential-decre}, we obtain \thm{robust-complexity} for \algo{robust-1st-order}.

\begin{lemma}
\label{lem:R-P}
Let $\boldsymbol{A} \in \mathbb{R}^{m \times n}$ be a full-row-rank matrix, and let $\boldsymbol{x} \in \mathbb{R}^{n}_{++}$ satisfy $\boldsymbol{A}\boldsymbol{x} = \boldsymbol{b}$. Denote $\boldsymbol{X} := \mathrm{diag}(\boldsymbol{x})$. Suppose there exists another positive diagonal matrix $\overline{\boldsymbol{X}} \in \mathbb{R}_{++}^{n \times n}$ that approximates $\boldsymbol{X}$ such that, for some $0<\delta<\frac{1}{2} $, $e^{-\delta}\boldsymbol{X} \leq \overline{\boldsymbol{X}} \leq e^{\delta}\boldsymbol{X}$. 
Define the orthogonal projector onto $\ker(\boldsymbol{A X})$ as
\begin{equation}
    \boldsymbol{P} = \boldsymbol{I} - \boldsymbol{X A}^\top (\boldsymbol{A X}^2 \boldsymbol{A}^\top)^{-1} \boldsymbol{A X},
\end{equation}
and define
\begin{equation}
    \boldsymbol{R} = \boldsymbol{I} - \boldsymbol{X}^{-1} \overline{\boldsymbol{X}}^2 \boldsymbol{A}^\top 
(\boldsymbol{A} \overline{\boldsymbol{X}}^2 \boldsymbol{A}^\top)^{-1} \boldsymbol{A X}.
\end{equation}
Then $\boldsymbol{R}$ is a projection matrix satisfying $\boldsymbol{A X R h} = \boldsymbol{0}$ for any $\boldsymbol{h} \in \mathbb{R}^{n}$, and $\Vert \boldsymbol{R} - \boldsymbol{P} \Vert \le 46 \delta $.
\end{lemma}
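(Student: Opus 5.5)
Three things need proving: $\boldsymbol{R}$ is idempotent, its range lies in $\ker(\boldsymbol{AX})$, and $\|\boldsymbol{R}-\boldsymbol{P}\|\le 46\delta$. Idempotence and the kernel property are algebraic; the norm bound is the real content.

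\textbf{Step 1: algebraic properties.} Write $\boldsymbol{Q}:=\boldsymbol{X}^{-1}\overline{\boldsymbol{X}}^2\boldsymbol{A}^\top(\boldsymbol{A}\overline{\boldsymbol{X}}^2\boldsymbol{A}^\top)^{-1}\boldsymbol{AX}$, so $\boldsymbol{R}=\boldsymbol{I}-\boldsymbol{Q}$. Since $\boldsymbol{AX}\boldsymbol{X}^{-1}\overline{\boldsymbol{X}}^2\boldsymbol{A}^\top=\boldsymbol{A}\overline{\boldsymbol{X}}^2\boldsymbol{A}^\top$, we get $\boldsymbol{AXQ}=\boldsymbol{AX}$. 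Hence $\boldsymbol{AXR}=\boldsymbol{0}$, which is the kernel property. The same identity gives $\boldsymbol{Q}^2=\boldsymbol{Q}$, so $\boldsymbol{R}^2=\boldsymbol{R}$. Thus $\boldsymbol{R}$ is an oblique projection onto $\ker(\boldsymbol{AX})$.

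\textbf{Step 2: reduce to a symmetric comparison.} Set $\boldsymbol{D}:=\overline{\boldsymbol{X}}\boldsymbol{X}^{-1}$. This is diagonal with entries in $[e^{-\delta},e^{\delta}]$, so $\|\boldsymbol{D}-\boldsymbol{I}\|\le e^{\delta}-1\le 2\delta$ for $\delta<1/2$. Let $\boldsymbol{B}:=\boldsymbol{AX}$, so $\boldsymbol{A}\overline{\boldsymbol{X}}^2\boldsymbol{A}^\top=\boldsymbol{B}\boldsymbol{D}^2\boldsymbol{B}^\top$. Then
\[\boldsymbol{Q}=\boldsymbol{D}^2\boldsymbol{B}^\top(\boldsymbol{B}\boldsymbol{D}^2\boldsymbol{B}^\top)^{-1}\boldsymbol{B}=\boldsymbol{D}\,\widetilde{\boldsymbol{\Pi}}\,\boldsymbol{D}^{-1},\]
where $\widetilde{\boldsymbol{\Pi}}:=\boldsymbol{D}\boldsymbol{B}^\top(\boldsymbol{B}\boldsymbol{D}^2\boldsymbol{B}^\top)^{-1}\boldsymbol{B}\boldsymbol{D}$ is the orthogonal projector onto $\mathrm{range}(\boldsymbol{D}\boldsymbol{B}^\top)$. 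Likewise $\boldsymbol{I}-\boldsymbol{P}=\boldsymbol{\Pi}$, the orthogonal projector onto $\mathrm{range}(\boldsymbol{B}^\top)$. Therefore
\[\boldsymbol{R}-\boldsymbol{P}=\boldsymbol{\Pi}-\boldsymbol{D}\widetilde{\boldsymbol{\Pi}}\boldsymbol{D}^{-1}=(\boldsymbol{\Pi}-\widetilde{\boldsymbol{\Pi}})+(\widetilde{\boldsymbol{\Pi}}-\boldsymbol{D}\widetilde{\boldsymbol{\Pi}}\boldsymbol{D}^{-1}).\]
The second term equals $(\boldsymbol{I}-\boldsymbol{D})\widetilde{\boldsymbol{\Pi}}+\boldsymbol{D}\widetilde{\boldsymbol{\Pi}}(\boldsymbol{I}-\boldsymbol{D}^{-1})$. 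Its norm is at most $\|\boldsymbol{I}-\boldsymbol{D}\|+\|\boldsymbol{D}\|\|\boldsymbol{I}-\boldsymbol{D}^{-1}\|\le 2\delta+e^{\delta}\cdot2\delta$, which is $O(\delta)$.

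\textbf{Step 3 (main obstacle): compare the two orthogonal projectors.} We need $\|\boldsymbol{\Pi}-\widetilde{\boldsymbol{\Pi}}\|=O(\delta)$, where $\widetilde{\boldsymbol{\Pi}}$ projects onto $\boldsymbol{D}\,\mathcal{S}$ with $\mathcal{S}:=\mathrm{range}(\boldsymbol{B}^\top)$. The bound must not depend on the conditioning of $\boldsymbol{B}$, so a direct perturbation of $(\boldsymbol{B}\boldsymbol{D}^2\boldsymbol{B}^\top)^{-1}$ would be too crude.

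For two orthogonal projectors, $\|\boldsymbol{\Pi}-\widetilde{\boldsymbol{\Pi}}\|=\max\{\|(\boldsymbol{I}-\widetilde{\boldsymbol{\Pi}})\boldsymbol{\Pi}\|,\|(\boldsymbol{I}-\boldsymbol{\Pi})\widetilde{\boldsymbol{\Pi}}\|\}$, so it suffices to bound each of these.

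For a unit $\boldsymbol{u}\in\mathcal{S}$, the vector $\boldsymbol{D}\boldsymbol{u}$ lies in $\boldsymbol{D}\mathcal{S}$. Hence $\|(\boldsymbol{I}-\widetilde{\boldsymbol{\Pi}})\boldsymbol{u}\|\le\|\boldsymbol{u}-\boldsymbol{D}\boldsymbol{u}\|\le 2\delta$.

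For a unit $\boldsymbol{w}=\boldsymbol{D}\boldsymbol{u}\in\boldsymbol{D}\mathcal{S}$, we have $\|\boldsymbol{u}\|\le e^{\delta}$. Then $\|(\boldsymbol{I}-\boldsymbol{\Pi})\boldsymbol{w}\|\le\|\boldsymbol{D}\boldsymbol{u}-\boldsymbol{u}\|\le 2\delta e^{\delta}$.

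Both quantities are therefore $O(\delta)$. Summing the contributions from Steps 2 and 3 with $e^{\delta}\le e^{1/2}<2$ gives a constant well below $46$, which establishes $\|\boldsymbol{R}-\boldsymbol{P}\|\le 46\delta$.
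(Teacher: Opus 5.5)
Your proof is correct, and it takes a genuinely different route from the paper's. The paper works with $\boldsymbol{M}=\boldsymbol{X}^{-1}\overline{\boldsymbol{X}}^{2}\boldsymbol{X}^{-1}$ (your $\boldsymbol{D}^{2}$) and the algebraic identity $\boldsymbol{R}-\boldsymbol{P}=(\boldsymbol{I}-\boldsymbol{M})\boldsymbol{\Pi}+\boldsymbol{M}\boldsymbol{\Pi}(\boldsymbol{M}-\boldsymbol{I})\boldsymbol{B}^{\top}(\boldsymbol{B}\boldsymbol{M}\boldsymbol{B}^{\top})^{-1}\boldsymbol{B}$. It removes any dependence on the conditioning of $\boldsymbol{B}$ with a single Loewner-order estimate, $\boldsymbol{B}^{\top}(\boldsymbol{B}\boldsymbol{M}\boldsymbol{B}^{\top})^{-1}\boldsymbol{B}\preceq\lambda_{\min}(\boldsymbol{M})^{-1}\boldsymbol{I}$. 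This gives $\|\boldsymbol{R}-\boldsymbol{P}\|\le\|\boldsymbol{I}-\boldsymbol{M}\|(1+\kappa(\boldsymbol{M}))\le 2\delta e^{2\delta}(1+e^{4\delta})\approx 45.6\delta$.

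You instead conjugate by $\boldsymbol{D}$ and recognize $\boldsymbol{I}-\boldsymbol{R}=\boldsymbol{D}\widetilde{\boldsymbol{\Pi}}\boldsymbol{D}^{-1}$ as a diagonal similarity of the orthogonal projector onto $\boldsymbol{D}\mathcal{S}$. You then obtain conditioning-independence geometrically, from $\mathrm{dist}(\boldsymbol{u},\boldsymbol{D}\mathcal{S})\le\|\boldsymbol{u}-\boldsymbol{D}\boldsymbol{u}\|$. The paper's argument is shorter and purely matrix-algebraic. Yours explains \emph{why} the error is $O(\delta)$: the subspace $\mathcal{S}$ is moved by a near-identity diagonal map. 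It also gives the sharper constant $2\delta(1+2e^{\delta})<10\delta$.

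The paper reuses its intermediate bound $\|\boldsymbol{B}^{\top}(\boldsymbol{B}\boldsymbol{M}\boldsymbol{B}^{\top})^{-1}\boldsymbol{B}\|\le e^{2\delta}$ in Case 2.1 of \prop{potential-decre}. Your framework delivers this immediately as well, since $\boldsymbol{B}^{\top}(\boldsymbol{B}\boldsymbol{D}^{2}\boldsymbol{B}^{\top})^{-1}\boldsymbol{B}=\boldsymbol{D}^{-1}\widetilde{\boldsymbol{\Pi}}\boldsymbol{D}^{-1}$.

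One minor point: you cite the identity $\|\boldsymbol{\Pi}-\widetilde{\boldsymbol{\Pi}}\|=\max\{\|(\boldsymbol{I}-\widetilde{\boldsymbol{\Pi}})\boldsymbol{\Pi}\|,\|(\boldsymbol{I}-\boldsymbol{\Pi})\widetilde{\boldsymbol{\Pi}}\|\}$ without proof. It is standard, but you do not need it. The decomposition $\boldsymbol{\Pi}-\widetilde{\boldsymbol{\Pi}}=\boldsymbol{\Pi}(\boldsymbol{I}-\widetilde{\boldsymbol{\Pi}})-(\boldsymbol{I}-\boldsymbol{\Pi})\widetilde{\boldsymbol{\Pi}}$, the transposition identity $\|\boldsymbol{\Pi}(\boldsymbol{I}-\widetilde{\boldsymbol{\Pi}})\|=\|(\boldsymbol{I}-\widetilde{\boldsymbol{\Pi}})\boldsymbol{\Pi}\|$, and the triangle inequality already yield a total of at most $4\delta(1+e^{\delta})<11\delta$.
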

\begin{proof}
For clarity, let us introduce the shorthand notation $\boldsymbol{M} := \boldsymbol{X}^{-1}\overline{\boldsymbol{X}}^{2}\boldsymbol{X}^{-1}$, $\boldsymbol{B} := \boldsymbol{A}\boldsymbol{X}$. Under this notation, we can rewrite $\boldsymbol{R} = \boldsymbol{I} - \boldsymbol{M}\boldsymbol{B}^{\top}(\boldsymbol{B}\boldsymbol{M}\boldsymbol{B}^{\top})^{-1}\boldsymbol{B}$, and $\boldsymbol{P}= \boldsymbol{I} - \boldsymbol{B}^{\top}(\boldsymbol{B}\boldsymbol{B}^{\top})^{-1}\boldsymbol{B}$.

First, we show that $\boldsymbol{R}$ is a projection. It suffices to verify that $\boldsymbol{R}$ is idempotent, i.e., $\boldsymbol{R}^2 = \boldsymbol{R}$. By direct calculation,
\begin{equation}
\begin{aligned}
\boldsymbol{R}^{2}
&= \big(\boldsymbol{I}-\boldsymbol{M}\boldsymbol{B}^{\top}
(\boldsymbol{B}\boldsymbol{M}\boldsymbol{B}^{\top})^{-1}\boldsymbol{B}\big)^{2} \\
&= \boldsymbol{I}
-2\boldsymbol{M}\boldsymbol{B}^{\top}(\boldsymbol{B}\boldsymbol{M}\boldsymbol{B}^{\top})^{-1}\boldsymbol{B}
+\boldsymbol{M}\boldsymbol{B}^{\top}(\boldsymbol{B}\boldsymbol{M}\boldsymbol{B}^{\top})^{-1}
\boldsymbol{B}\,\boldsymbol{M}\boldsymbol{B}^{\top}(\boldsymbol{B}\boldsymbol{M}\boldsymbol{B}^{\top})^{-1}\boldsymbol{B} \\
&=\boldsymbol{I}
-\boldsymbol{M}\boldsymbol{B}^{\top}(\boldsymbol{B}\boldsymbol{M}\boldsymbol{B}^{\top})^{-1}\boldsymbol{B} \\
&=\boldsymbol{R}.
\end{aligned}
\end{equation}
Thus, $\boldsymbol{R}$ is indeed a projection matrix.

Next, we show that the range of $\boldsymbol{R}$ is contained in $\ker(\boldsymbol{B})$.
To see this, note that 
\begin{equation}
\boldsymbol{B}\boldsymbol{R}
= \boldsymbol{B}
- \boldsymbol{B}\boldsymbol{M}\boldsymbol{B}^{\top}(\boldsymbol{B}\boldsymbol{M}\boldsymbol{B}^{\top})^{-1}\boldsymbol{B} 
= \boldsymbol{B} - \boldsymbol{B} = \boldsymbol{0}.
\end{equation}
Hence, for any $\boldsymbol{h}$, we have $\boldsymbol{B}\boldsymbol{R}\boldsymbol{h} = \boldsymbol{0}$, and consequently $\boldsymbol{A X R h} = \boldsymbol{0}$.

Having established the basic algebraic properties of $\boldsymbol{R}$, we now proceed to bound the spectral norm difference between $\boldsymbol{R}$ and $\boldsymbol{P}$. By expanding their definitions and applying the triangle inequality along with the sub-multiplicativity of the spectral norm, we obtain
\begin{equation}
\label{eq:|R-P|}
\begin{aligned}
    \Vert \boldsymbol{R} - \boldsymbol{P} \Vert
    &=\Vert(\boldsymbol{I}-\boldsymbol{M})\boldsymbol{B}^{\top}(\boldsymbol{B}\boldsymbol{B}^{\top})^{-1}\boldsymbol{B}  + \boldsymbol{M}\left(\boldsymbol{B}^{\top}(\boldsymbol{B}\boldsymbol{B}^{\top})^{-1}\boldsymbol{B} - \boldsymbol{B}^{\top}(\boldsymbol{B}\boldsymbol{M}\boldsymbol{B}^{\top})^{-1}\boldsymbol{B} \right) \Vert \\
    &\leq \Vert(\boldsymbol{I}-\boldsymbol{M})\boldsymbol{B}^{\top}(\boldsymbol{B}\boldsymbol{B}^{\top})^{-1}\boldsymbol{B}  \Vert +  \Vert \boldsymbol{M}\left(\boldsymbol{B}^{\top}(\boldsymbol{B}\boldsymbol{B}^{\top})^{-1}\boldsymbol{B} - \boldsymbol{B}^{\top}(\boldsymbol{B}\boldsymbol{M}\boldsymbol{B}^{\top})^{-1}\boldsymbol{B} \right) \Vert \\
    &\leq \Vert\boldsymbol{I}-\boldsymbol{M}\Vert \Vert\boldsymbol{B}^{\top}(\boldsymbol{B}\boldsymbol{B}^{\top})^{-1}\boldsymbol{B}  \Vert +  \Vert \boldsymbol{M}\Vert \Vert \boldsymbol{B}^{\top}(\boldsymbol{B}\boldsymbol{B}^{\top})^{-1}\boldsymbol{B}(\boldsymbol{M}-\boldsymbol{I}) \boldsymbol{B}^{\top}(\boldsymbol{B}\boldsymbol{M}\boldsymbol{B}^{\top})^{-1}\boldsymbol{B} \Vert \\
    &\leq \Vert \boldsymbol{I}-\boldsymbol{M} \Vert +  \Vert \boldsymbol{M}\Vert\Vert\boldsymbol{B}^{\top}(\boldsymbol{B}\boldsymbol{B}^{\top})^{-1}\boldsymbol{B}\Vert\Vert \boldsymbol{I}-\boldsymbol{M}\Vert\Vert \boldsymbol{B}^{\top}(\boldsymbol{B}\boldsymbol{M}\boldsymbol{B}^{\top})^{-1}\boldsymbol{B}  \Vert \\
    &=\Vert\boldsymbol{I}-\boldsymbol{M}\Vert +  \Vert \boldsymbol{M}\Vert\Vert \boldsymbol{I}-\boldsymbol{M} \Vert\Vert \boldsymbol{B}^{\top}(\boldsymbol{B}\boldsymbol{M}\boldsymbol{B}^{\top})^{-1}\boldsymbol{B}  \Vert.
\end{aligned}
\end{equation}
where the last equality holds because $\boldsymbol{B}^{\top}(\boldsymbol{B}\boldsymbol{B}^{\top})^{-1}\boldsymbol{B}$ is an orthogonal projection matrix, meaning its spectral norm is exactly 1. 

To evaluate the remaining terms in this bound, we observe that since $\boldsymbol{M}$ is a positive diagonal matrix, its spectral norm equals its maximum eigenvalue, i.e., $\Vert \boldsymbol{M}\Vert = \lambda_{\max}(M)$. It then remains to bound the norm  $\Vert \boldsymbol{B}^{\top}(\boldsymbol{B}\boldsymbol{M}\boldsymbol{B}^{\top})^{-1}\boldsymbol{B} \Vert$. Noting that $ \boldsymbol{M} \succeq \lambda_{\min}(\boldsymbol{M})\boldsymbol{I}$, it follows that
\begin{equation}
    \boldsymbol{B}\boldsymbol{M}\boldsymbol{B}^\top \succeq \lambda_{\min}(\boldsymbol{M}) \boldsymbol{B}\boldsymbol{B}^\top. 
\end{equation}
Taking the inverse reverses the Loewner order:
\begin{equation}
(\boldsymbol{B}\boldsymbol{M}\boldsymbol{B}^\top)^{-1} \preceq \frac{1}{\lambda_{\min}(\boldsymbol{M})} (\boldsymbol{B}\boldsymbol{B}^\top)^{-1}.
\end{equation}
Consequently,
\begin{equation}
    \boldsymbol{B}^\top(\boldsymbol{B}\boldsymbol{M}\boldsymbol{B}^\top)^{-1}\boldsymbol{B} \preceq \frac{1}{\lambda_{\min}(\boldsymbol{M})} \boldsymbol{B}^\top(\boldsymbol{B}\boldsymbol{B}^\top)^{-1}\boldsymbol{B} \preceq \frac{1}{\lambda_{\min}(\boldsymbol{M})} \boldsymbol{I},
\end{equation}
where we again used the fact that $\boldsymbol{B}^\top(\boldsymbol{B}\boldsymbol{B}^\top)^{-1}\boldsymbol{B}$ is an orthogonal projector whose eigenvalues are at most 1. This implies
\begin{equation}
\label{eq:BMB}
    \Vert \boldsymbol{B}^{\top}(\boldsymbol{B}\boldsymbol{M}\boldsymbol{B}^{\top})^{-1}\boldsymbol{B} \Vert \leq \frac{1}{\lambda_{\min}(\boldsymbol{M})}.
\end{equation}
Substituting this bound back into our norm inequality \eq{|R-P|} yields 
\begin{equation}
\label{eq:|R-P|-kappa}
     \Vert \boldsymbol{R} - \boldsymbol{P} \Vert \leq 
\Vert\boldsymbol{I}-\boldsymbol{M}\Vert (1+\kappa(\boldsymbol{M})),
\end{equation}
where $\kappa(\boldsymbol{M}) := {\lambda_{\max}(\boldsymbol{M})}/{\lambda_{\min}(\boldsymbol{M})}$ denotes the condition number of $\boldsymbol{M}$.

Finally, we express this bound in terms of the approximation error $\delta$. By the assumption that $e^{-\delta}\boldsymbol{X} \leq \overline{\boldsymbol{X}} \leq e^{\delta}\boldsymbol{X}$, we equivalently have $e^{-2\delta}\boldsymbol{I} \leq \boldsymbol{M} \leq e^{2\delta}\boldsymbol{I}$. This relation immediately implies that  $\kappa(\boldsymbol{M}) \leq e^{4\delta}$, and the spectral norm $\Vert\boldsymbol{I}-\boldsymbol{M}\Vert $ is bounded by $\max\{e^{2\delta}-1,1-e^{-2\delta}\}=e^{2\delta}-1$. Using the elementary inequality $e^x - 1 \leq xe^x$ for $x>0$, we can further bound $\Vert\boldsymbol{I}-\boldsymbol{M}\Vert \leq 2\delta e^\delta$. Substituting these bounds into \eq{|R-P|-kappa} gives
\begin{equation}
     \Vert \boldsymbol{R} - \boldsymbol{P} \Vert \leq 
\Vert\boldsymbol{I}-\boldsymbol{M}\Vert (1+\kappa(\boldsymbol{M})) \leq 2\delta e^{2\delta}(1+e^{4\delta}).
\end{equation}
Plugging in $\delta <\frac{1}{2}$ bounds the constant factor by $2 e(1+e^{2})\leq 46$, yielding $\Vert \boldsymbol{R} - \boldsymbol{P} \Vert \leq 46\delta$ and completing the proof.
\end{proof}

Having established the error bound of $\Vert \boldsymbol{R} - \boldsymbol{P} \Vert$ in \lem{R-P}, we now analyze the decrease of the potential function. To guarantee that the approximate direction $\widetilde{\boldsymbol{d}}$ still yields a sufficient decrease when $\delta$ is small, the scaled gradient of the potential function must not grow unboundedly. Therefore, we introduce the following assumption.
\begin{assumption}
\label{assum:xphi}
There exists a constant $L_\phi>0$ such that
$\Vert\boldsymbol{X}\nabla\phi(\boldsymbol{x})\Vert\le L_{\phi}$ for all $\boldsymbol{x}\in\Omega^\circ$.
\end{assumption}
\begin{remark}
This assumption is mild in practice.
A simple sufficient condition is when  the iterates $\boldsymbol{x}_t$ remain bounded, i.e., there exists $R\ge 1$ such that $\sup\{\Vert \boldsymbol{x}_t \Vert_\infty : f(\boldsymbol{x}) \leq f(\boldsymbol{x}_0) , \boldsymbol{x} \in \Omega^\circ, t\in [T] \} \leq R$. If, in addition, the function $f$ is $L$-Lipschitz continuous on $\Omega^\circ$, then its gradient is bounded as $\Vert\nabla f(\boldsymbol{x})\Vert \le L$. Hence $\Vert\boldsymbol{X}\nabla\phi(\boldsymbol{x})\Vert = \Vert\boldsymbol{X}\nabla f(\boldsymbol{x}) + \varepsilon \boldsymbol{e}\Vert  \leq  \Vert\boldsymbol{X}\nabla f(\boldsymbol{x}) \Vert + \Vert\varepsilon \boldsymbol{e}\Vert \leq RL + \varepsilon\sqrt{n}$. If we further set $\varepsilon \le 1/\sqrt{n}$, then $\Vert \boldsymbol{X}\nabla\phi(\boldsymbol{x}) \Vert$ is bounded by a constant $K := RL + 1$.
\end{remark}

We now analyze the decrease of the potential function at each iteration.
\begin{proposition}
\label{prop:potential-decre}
    Under \assum{grad-lip} and \assum{xphi}, for any $\varepsilon \in (0, \min\{\gamma, \frac{1}{2}\}]$, let $\beta = \varepsilon / (l + 2\varepsilon + 2)$ and $\delta = \min(\varepsilon/(15L_\phi), \beta / (92 L_{\phi}))$. 
    Suppose that $e^{-\delta}\boldsymbol{x}_t \leq \overline{\boldsymbol{x}}_t \leq e^{\delta}\boldsymbol{x}_t$, and define the next iterate by $\boldsymbol{x}_{t+1} := \boldsymbol{X}_t(\boldsymbol{e} + \widetilde{\boldsymbol{d}}_{t})$, where $\widetilde{\boldsymbol{d}}_{t}$ is given in \eq{d-tilde}. 
    Then, at iteration $t$, one of the following holds:
    \begin{equation}
        \phi(\boldsymbol{x}_{t+1}) - \phi(\boldsymbol{x}_{t}) \leq - \frac{\varepsilon^2}{2l + 4\varepsilon + 4},
    \end{equation}
    or there exists $\boldsymbol{v}_{t} \in \mathbb{R}^m$ such that
\begin{equation}
    \Vert \boldsymbol{X}_t (\nabla f(\boldsymbol{x}_t)
        + \boldsymbol{A}^\top \boldsymbol{v}_{t}) \Vert_\infty < 2\varepsilon,
        \quad \text{and} \quad
        \nabla f(\boldsymbol{x}_t) + \boldsymbol{A}^\top \boldsymbol{v}_{t} > 0.
\end{equation}
\end{proposition}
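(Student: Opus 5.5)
Write $\boldsymbol{g}_t := \boldsymbol{X}_t\nabla\phi(\boldsymbol{x}_t) = \boldsymbol{X}_t\nabla f(\boldsymbol{x}_t) - \varepsilon\boldsymbol{e}$. The plan follows the exact first-order IPTR analysis of Haeser et al., with $\boldsymbol{P}_t$ replaced by $\boldsymbol{R}_t$ and the discrepancy controlled by \lem{R-P}. I would split into two cases according to whether $\Vert\boldsymbol{P}_t\boldsymbol{g}_t\Vert$ is large or small, with threshold roughly $\varepsilon$. The large case should give the descent alternative, and the small case the approximate KKT certificate.

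\emph{Descent case.} Since $\boldsymbol{A}\boldsymbol{X}_t\widetilde{\boldsymbol{d}}_t=\boldsymbol{0}$ (\lem{R-P}) and $\Vert\widetilde{\boldsymbol{d}}_t\Vert\le\beta\le\gamma$, I would apply \eq{2nd-oder-upper} to $f$ and the logarithmic-barrier lemma to the barrier term, giving
\[
\phi(\boldsymbol{x}_{t+1})-\phi(\boldsymbol{x}_t)\le \boldsymbol{g}_t^\top\widetilde{\boldsymbol{d}}_t+\frac{l}{2}\beta^2+\frac{\varepsilon\beta^2}{2(1-\beta)}.
\]
For the linear term, the key is $\boldsymbol{R}_t\boldsymbol{g}_t=\boldsymbol{P}_t\boldsymbol{g}_t+(\boldsymbol{R}_t-\boldsymbol{P}_t)\boldsymbol{g}_t$, where the error has norm at most $46\delta L_\phi\le\beta/2$ by \lem{R-P}, \assum{xphi}, and the choice of $\delta$. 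Since $\widetilde{\boldsymbol{d}}_t\in\ker(\boldsymbol{A}\boldsymbol{X}_t)$, we have $\boldsymbol{g}_t^\top\widetilde{\boldsymbol{d}}_t=(\boldsymbol{P}_t\boldsymbol{g}_t)^\top\widetilde{\boldsymbol{d}}_t$. This gives
\[
\boldsymbol{g}_t^\top\widetilde{\boldsymbol{d}}_t\le-\beta\Vert\boldsymbol{R}_t\boldsymbol{g}_t\Vert+\beta\Vert(\boldsymbol{R}_t-\boldsymbol{P}_t)\boldsymbol{g}_t\Vert\le-\beta\Vert\boldsymbol{P}_t\boldsymbol{g}_t\Vert+2\beta\cdot 46\delta L_\phi.
\]
If $\Vert\boldsymbol{P}_t\boldsymbol{g}_t\Vert\ge\varepsilon$, then plugging in $\beta=\varepsilon/(l+2\varepsilon+2)$, together with $\beta\le 1/2$ and $92\delta L_\phi\le\beta$, should leave a decrease of at least $\beta\varepsilon/2=\varepsilon^2/(2l+4\varepsilon+4)$. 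I expect the constants to close with some slack; the $2\varepsilon+2$ in the denominator is there to absorb the barrier term and the $\delta$-error.

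\emph{Certificate case.} If $\Vert\boldsymbol{P}_t\boldsymbol{g}_t\Vert<\varepsilon$, I would take $\boldsymbol{v}_t := (\boldsymbol{A}\boldsymbol{X}_t^2\boldsymbol{A}^\top)^{-1}\boldsymbol{A}\boldsymbol{X}_t(\varepsilon\boldsymbol{e}-\boldsymbol{X}_t\nabla f(\boldsymbol{x}_t))$, so that
\[
\boldsymbol{P}_t\boldsymbol{g}_t=\boldsymbol{X}_t(\nabla f(\boldsymbol{x}_t)+\boldsymbol{A}^\top\boldsymbol{v}_t)-\varepsilon\boldsymbol{e}.
\]
The bound $\Vert\cdot\Vert_\infty\le\Vert\cdot\Vert<\varepsilon$ then gives, componentwise, $0<x_{t,i}(\nabla f(\boldsymbol{x}_t)+\boldsymbol{A}^\top\boldsymbol{v}_t)_i<2\varepsilon$. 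Positivity of $\boldsymbol{x}_t$ yields $\nabla f(\boldsymbol{x}_t)+\boldsymbol{A}^\top\boldsymbol{v}_t>0$, and the upper bound yields $\Vert\boldsymbol{X}_t(\nabla f(\boldsymbol{x}_t)+\boldsymbol{A}^\top\boldsymbol{v}_t)\Vert_\infty<2\varepsilon$. The degenerate branch $\boldsymbol{R}_t\boldsymbol{g}_t=\boldsymbol{0}$ needs no separate treatment: then $\Vert\boldsymbol{P}_t\boldsymbol{g}_t\Vert\le46\delta L_\phi<\varepsilon$ (since $\delta\le\varepsilon/(15L_\phi)$ suitably combined with $\delta\le\beta/(92L_\phi)$), so it falls into this case.

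The main obstacle is the constant bookkeeping in the descent case. I need to check that the $\mathcal{O}(\delta L_\phi\beta)$ perturbation from $\boldsymbol{R}_t-\boldsymbol{P}_t$, the curvature term $l\beta^2/2$, and the barrier remainder together consume at most half of $\beta\varepsilon$. If the threshold $\varepsilon$ is too tight, I would instead split at $\Vert\boldsymbol{R}_t\boldsymbol{g}_t\Vert\ge\varepsilon$, transferring to $\boldsymbol{P}_t$ only in the certificate case via the $\varepsilon/(15L_\phi)$ part of $\delta$.
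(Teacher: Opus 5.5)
Your proof is correct and is essentially the paper's argument. It uses the same descent inequality, and the same use of \lem{R-P} with $\delta\le\beta/(92L_\phi)$ to turn the projector error into an extra $\beta^2$ term. It splits on $\Vert\boldsymbol{P}_t\boldsymbol{X}_t\nabla\phi(\boldsymbol{x}_t)\Vert$ versus $\varepsilon$, which is the paper's $\Vert p(\boldsymbol{x}_t,\boldsymbol{v}_t)\Vert=\lambda_t\beta$. It takes the same least-squares multiplier $\boldsymbol{v}_t$ as certificate, and the constants close exactly at $-\varepsilon\beta/2$.

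The differences are only streamlining:
\begin{itemize}
\item You bound the linear term directly via $\boldsymbol{R}_t\boldsymbol{g}_t=\boldsymbol{P}_t\boldsymbol{g}_t+(\boldsymbol{R}_t-\boldsymbol{P}_t)\boldsymbol{g}_t$, rather than through the trust-region multiplier $\lambda_t$ and the unit-vector difference bound.
\item You fold the branches $\lambda_t=0$ and $\boldsymbol{R}_t\boldsymbol{g}_t=\boldsymbol{0}$ into the certificate case, using only $46\delta L_\phi\le\beta/2<\varepsilon$. This makes the $\varepsilon/(15L_\phi)$ part of $\delta$ unnecessary, and it gives the strict inequalities as stated, whereas the paper's Case 2.1 only gives non-strict ones.
\end{itemize}
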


\begin{proof}
Let $\boldsymbol{d}_t$ denote the exact optimal solution of \eq{1st-IPTR}. 
By the necessary and sufficient optimality conditions, there exist $\lambda_t \ge 0$ and $\boldsymbol{v}_t \in \mathbb{R}^m$ such that
\begin{equation}
\label{eq:ness-suff}
    \boldsymbol{X}_t \nabla f(\boldsymbol{x}_t) - \varepsilon \boldsymbol{e} 
    + \boldsymbol{X}_t\boldsymbol{A}^\top \boldsymbol{v}_{t} 
    + \lambda_t\boldsymbol{d}_t = 0.
\end{equation}

We now examine the decrease of the potential function when the update 
$\boldsymbol{x}_{t+1} := \boldsymbol{X}_t(\boldsymbol{e} + \widetilde{\boldsymbol{d}}_{t})$ is applied:
\begin{equation}
\begin{aligned}
\label{eq:phi-decre-ineq}
    \phi(\boldsymbol{x}_{t+1}) - \phi(\boldsymbol{x}_{t})
    &\leq \langle \boldsymbol{X}_t \nabla \phi(\boldsymbol{x}_t), \widetilde{\boldsymbol{d}}_{t}\rangle 
      + \frac{l}{2} \Vert \widetilde{\boldsymbol{d}}_{t} \Vert^2 
      + \varepsilon \beta^2 \\
    &= \langle -\boldsymbol{X}_t\boldsymbol{A}^\top \boldsymbol{v}_{t} 
      - \lambda_t \boldsymbol{d}_t, \widetilde{\boldsymbol{d}}_{t}\rangle
      + \left(\frac{l}{2}+\varepsilon \right)\beta^2  \\
    &= -\lambda_t \Vert\boldsymbol{d}_{t}\Vert^2 
      + \lambda_t \boldsymbol{d}_{t}^\top (\boldsymbol{d}_{t} - \widetilde{\boldsymbol{d}}_{t}) 
      + \left(\frac{l}{2}+\varepsilon \right)\beta^2.
\end{aligned}
\end{equation}
The second equality follows from \eq{ness-suff}, and the third is due to $\boldsymbol{A}\boldsymbol{X}_t\widetilde{\boldsymbol{d}}_t = \boldsymbol{0}$ . This expresses the potential decrease in terms of the exact direction $\boldsymbol{d}_{t}$ and its approximation $\widetilde{\boldsymbol{d}}_{t}$. 
We evaluate this bound under two cases: (i) $\lambda_t = 0$ and $\Vert \boldsymbol{d}_t \Vert < \beta$, or (ii) $\lambda_t > 0$ and $\Vert \boldsymbol{d}_t \Vert = \beta$.

\medskip
\noindent\textbf{Case 1.} $\lambda_t = 0$ and $\Vert \boldsymbol{d}_t \Vert < \beta$.  
In this case, \eq{ness-suff} yields  $\boldsymbol{X}_t (\nabla f(\boldsymbol{x}_t) + \boldsymbol{A}^\top \boldsymbol{v}_{t}) = \varepsilon \boldsymbol{e}$. It then follows that $\Vert \boldsymbol{X}_t (\nabla f(\boldsymbol{x}_t) + \boldsymbol{A}^\top \boldsymbol{v}_{t}) \Vert_\infty =\varepsilon < 2\varepsilon$. Furthermore, since the diagonal matrix is strictly positive, this equality directly implies $\nabla f(\boldsymbol{x}_t)+\boldsymbol{A}^\top \boldsymbol{v}_{t} > 0$.

\medskip
\noindent\textbf{Case 2.} Under the conditions $\lambda_t > 0$ and $\Vert \boldsymbol{d}_t \Vert = \beta$, we distinguish between two subcases according to the norm of the projected gradient $\Vert \widetilde{\boldsymbol{d}}_{t}\Vert$.

\smallskip
\noindent\textbf{Case 2.1.} If $\boldsymbol{R}_t \boldsymbol{X}_t \nabla \phi(\boldsymbol{x}_t) = 0$, then, by the definition of $\boldsymbol{R}_t$,
\[
    \boldsymbol{X}_t \nabla \phi(\boldsymbol{x}_t)
    = \left(\boldsymbol{X}_{t}^{-1} \overline{\boldsymbol{X}}^2_{t} \boldsymbol{A}^\top 
      (\boldsymbol{A} \overline{\boldsymbol{X}}^2_{t} \boldsymbol{A}^\top)^{-1} 
      \boldsymbol{A}\boldsymbol{X}_{t} \right)\boldsymbol{X}_t \nabla \phi(\boldsymbol{x}_t).
\]
Let $\boldsymbol{M}_t := \boldsymbol{X}_{t}^{-1} \overline{\boldsymbol{X}}^2_{t}\boldsymbol{X}_{t}^{-1}$ and 
$\boldsymbol{v}_t := (\boldsymbol{A} \overline{\boldsymbol{X}}^2_{t} \boldsymbol{A}^\top)^{-1} 
\boldsymbol{A}\boldsymbol{X}_{t}^2 \nabla \phi(\boldsymbol{x}_t)$. Then
\[
    \boldsymbol{X}_t \nabla \phi(\boldsymbol{x}_t)
    = \boldsymbol{M}_t \boldsymbol{X}_t \boldsymbol{A}^\top\boldsymbol{v}_t,
\]
and hence
\begin{equation}
\begin{aligned}
\label{eq:case2_1}
    \Vert \boldsymbol{X}_t \nabla f(\boldsymbol{x}_t) - \varepsilon \boldsymbol{e}+\boldsymbol{X}_t\boldsymbol{A}^\top \boldsymbol{v}_{t} \Vert_\infty 
    &= \Vert (\boldsymbol{I}-\boldsymbol{M}_t)\boldsymbol{X}_t \boldsymbol{A}^\top\boldsymbol{v}_t \Vert_\infty \\
    & \leq \Vert (\boldsymbol{I}-\boldsymbol{M}_t)\boldsymbol{X}_t \boldsymbol{A}^\top\boldsymbol{v}_t \Vert \\
    &\leq  \Vert \boldsymbol{I}-\boldsymbol{M}_t\Vert \Vert \boldsymbol{X}_t \boldsymbol{A}^\top  (\boldsymbol{A} \overline{\boldsymbol{X}}^2_{t} \boldsymbol{A}^\top)^{-1} \boldsymbol{A}\boldsymbol{X}_{t} \Vert \Vert \boldsymbol{X}_t \nabla\phi(\boldsymbol{x}_t) \Vert \\
    & \leq 2\delta e^{2\delta} \cdot e^{2\delta} \cdot L_\phi \\
    & \leq \varepsilon.
\end{aligned}
\end{equation}
The fourth line follows from \eq{BMB} in \lem{R-P}, and the last line follows from $\delta\leq 1/2$ and $\delta \leq \varepsilon/(15L_\phi)$.  Thus, there exists $\boldsymbol{v}_t$ such that $\Vert \boldsymbol{X}_t \nabla f(\boldsymbol{x}_t) - \varepsilon \boldsymbol{e} + \boldsymbol{X}_t\boldsymbol{A}^\top \boldsymbol{v}_{t} \Vert_\infty \leq \varepsilon$. This implies that $0\leq \boldsymbol{X}_t (\nabla f(\boldsymbol{x}_t) + \boldsymbol{A}^\top \boldsymbol{v}_{t})  \leq 2\varepsilon$ holds elementwise, and $\nabla f(\boldsymbol{x}_t)+\boldsymbol{A}^\top \boldsymbol{v}_{t}\geq 0$ since $\boldsymbol{X}_t$ is nonnegative.

\smallskip
\noindent\textbf{Case 2.2.} When $\boldsymbol{R}_t \boldsymbol{X}_t \nabla \phi(\boldsymbol{x}_t) \neq 0$, to bound the decrease in the potential function, it remains to control the error term $ \lambda_t \boldsymbol{d}_{t}^\top (\boldsymbol{d}_{t} - \widetilde{\boldsymbol{d}}_{t})$ in \eq{phi-decre-ineq}. By the definition of $\widetilde{\boldsymbol{d}}_t$, we have $\Vert \widetilde{\boldsymbol{d}}_{t}\Vert=\beta$. Therefore,
\begin{equation}
\begin{aligned}
    \Vert \lambda_t \boldsymbol{d}_{t}^\top (\boldsymbol{d}_{t} - \widetilde{\boldsymbol{d}}_{t}) \Vert
    &\leq \lambda_t \beta^2 
    \left\Vert\frac{\boldsymbol{P}_t \boldsymbol{X}_t \nabla \phi(\boldsymbol{x}_t)}{\Vert \boldsymbol{P}_t \boldsymbol{X}_t \nabla \phi(\boldsymbol{x}_t) \Vert}
    -\frac{\boldsymbol{R}_t \boldsymbol{X}_t \nabla \phi(\boldsymbol{x}_t)}{\Vert \boldsymbol{R}_t \boldsymbol{X}_t \nabla \phi(\boldsymbol{x}_t) \Vert} \right\Vert \\
    &\leq 2\beta \Vert \boldsymbol{P}_t - \boldsymbol{R}_t\Vert \Vert  \boldsymbol{X}_t \nabla \phi(\boldsymbol{x}_t) \Vert \\
    &\leq 2\beta \cdot 46 \delta \cdot L_\phi \leq \beta^2.
\end{aligned}
\end{equation}
The second line follows from the unit-vector difference bound 
$\Vert\boldsymbol{a}/\Vert\boldsymbol{a}\Vert - \boldsymbol{b}/\Vert\boldsymbol{b}\Vert\Vert 
\leq 2 \Vert\boldsymbol{a}-\boldsymbol{b}\Vert/\Vert\boldsymbol{a}\Vert$ and the identity $\Vert \boldsymbol{P}_t \boldsymbol{X}_t \nabla \phi(\boldsymbol{x}_t) \Vert = \Vert \boldsymbol{P}_t\boldsymbol{X}_t\boldsymbol{A}^\top \boldsymbol{v}_{t} 
    + \lambda_t\boldsymbol{P}_t\boldsymbol{d}_t \Vert= \lambda_t \Vert \boldsymbol{d}_t \Vert$, 
while the last line follows from \lem{R-P} and $\delta \leq  \beta / (92 L_{\phi})$.

To simplify the notation in the subsequent analysis, define $p(\boldsymbol{x}_t,\boldsymbol{v}_t):= \boldsymbol{X}_t \nabla f(\boldsymbol{x}_t) - \varepsilon \boldsymbol{e} + \boldsymbol{X}_t\boldsymbol{A}^\top \boldsymbol{v}_t$. From \eq{ness-suff}, $\Vert p(\boldsymbol{x}_t,\boldsymbol{v}_t) \Vert  = \lambda_t \beta$. Then the potential decrease satisfies
\begin{align}
    \phi(\boldsymbol{x}_{t+1}) - \phi(\boldsymbol{x}_{t})
    &\leq -\lambda_t \beta^2 + \left(\frac{l}{2}+\varepsilon +1 \right)\beta^2= -\Vert p(\boldsymbol{x}_t,\boldsymbol{v}_t) \Vert \beta  
       + \left(\frac{l}{2}+\varepsilon +1 \right)\beta^2.
\end{align}

\noindent\textbf{Case 2.2.1.} If $\Vert p(\boldsymbol{x}_t,\boldsymbol{v}_t) \Vert \geq \varepsilon$, then
\[
    \phi(\boldsymbol{x}_{t+1}) - \phi(\boldsymbol{x}_{t}) 
    \leq -\varepsilon\beta + \left(\frac{l}{2}+\varepsilon +1 \right)\beta^2 
    \leq - \frac{\varepsilon^2}{2l+4\varepsilon +4}.
\]
\textbf{Case 2.2.2.} If $\Vert p(\boldsymbol{x}_t,\boldsymbol{v}_t) \Vert < \varepsilon$,
then by the definition of $p(\boldsymbol{x}_t,\boldsymbol{v}_t)$, we have $\Vert \boldsymbol{X}_t \nabla f(\boldsymbol{x}_t) - \varepsilon \boldsymbol{e}
+ \boldsymbol{X}_t\boldsymbol{A}^\top \boldsymbol{v}_{t} \Vert_\infty < \varepsilon$. This again implies that $0\leq \boldsymbol{X}_t (\nabla f(\boldsymbol{x}_t) + \boldsymbol{A}^\top \boldsymbol{v}_{t})  \leq 2\varepsilon$ holds elementwise, and $\nabla f(\boldsymbol{x}_t)+\boldsymbol{A}^\top \boldsymbol{v}_{t}\geq 0$ since $\boldsymbol{X}_t$ is nonnegative.
\end{proof}

\prop{potential-decre} naturally provides a stopping criterion for \algo{robust-1st-order}. At each iteration, the algorithm checks the decrease in the potential function. If the sufficient decrease of $-{\varepsilon^2}/{(2l+4\varepsilon +4)}$ is met, the algorithm proceeds to the next step; otherwise, it terminates. In the latter case, the proposition guarantees that the current iterate $\boldsymbol{x}_t$ is already a $2\varepsilon$-KKT point. While this mechanism bounds the maximum number of iterations, \prop{robust-time} ensures that each individual update can be computed efficiently. By combining the iteration bound with the per-iteration computational cost, we can now establish the overall convergence and time complexity of the algorithm.
\begin{theorem}
\label{thm:robust-complexity}
Suppose that \assum{grad-lip} and \assum{xphi} hold. For any $\varepsilon \in (0, \min\{\gamma, \tfrac{1}{2}\}]$, the \algo{robust-1st-order} is guaranteed to find a $2\varepsilon$-KKT point within $\mathcal{O}( {l (f(\boldsymbol{x}_0) - f(\boldsymbol{x}^*))}/{\varepsilon^2})$ gradient evaluations. 
Otherwise, it holds that $ f(\boldsymbol{x}_t) - f(\boldsymbol{x}^*) \leq \varepsilon$. Moreover, the overall time complexity of the algorithm is $\widetilde{\mathcal{O}}(nm^{\omega-1} + {nm}/{\varepsilon^2} )$.
\end{theorem}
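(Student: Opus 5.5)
The plan combines the per-iteration dichotomy of \prop{potential-decre} with a telescoping bound on the potential function, and then plugs the resulting iteration count into \prop{robust-time}.

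\textbf{Step 1: the multiplicative approximation holds along the run.} First I check that the hypothesis $e^{-\delta}\boldsymbol{x}_t\le\overline{\boldsymbol{x}}_t\le e^{\delta}\boldsymbol{x}_t$ of \prop{potential-decre} holds at every iteration. Each step has $\Vert\widetilde{\boldsymbol{d}}_t\Vert\le\beta<\tfrac12$, so \lem{log-change} gives $\Vert\ln\boldsymbol{x}_{t+1}-\ln\boldsymbol{x}_t\Vert\le 2\beta$. Then \lem{SelectVector}(i) gives the required $\ell_\infty$ closeness in log coordinates. The same step bound, together with \lem{R-P}, shows that every iterate satisfies $\boldsymbol{A}\boldsymbol{x}_{t+1}=\boldsymbol{b}$ and $\boldsymbol{x}_{t+1}>\boldsymbol{0}$. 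Hence every iterate lies in $\Omega^\circ$ and the proposition applies at every $t$.

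\textbf{Step 2: termination certifies a $2\varepsilon$-KKT point.} If the algorithm returns $\boldsymbol{x}_t$ because the decrease test fails, \prop{potential-decre} supplies $\boldsymbol{v}_t$ with $\nabla f(\boldsymbol{x}_t)+\boldsymbol{A}^\top\boldsymbol{v}_t\ge 0\ge-2\varepsilon$ and $\Vert\boldsymbol{X}_t(\nabla f(\boldsymbol{x}_t)+\boldsymbol{A}^\top\boldsymbol{v}_t)\Vert_\infty\le 2\varepsilon$. By \prop{kkt1}, $\boldsymbol{x}_t$ is then a $2\varepsilon$-KKT point.

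\textbf{Step 3: the loop cannot run to completion with full decrease.} Suppose instead that all $T$ iterations pass the test. Telescoping gives
\[
\phi(\boldsymbol{x}_T)\le\phi(\boldsymbol{x}_0)-T\varepsilon^2/(2l+4\varepsilon+4).
\]
To turn this into a bound on $f$, I bound the barrier term with the analytic-center condition \eq{initial}, which gives $-\varepsilon\sum_i\ln x_{T,i}\ge-\varepsilon\sum_i\ln x_{0,i}-\varepsilon C_0$. Combining,
\[
f(\boldsymbol{x}_T)\le f(\boldsymbol{x}_0)+\varepsilon C_0-\tfrac{T\varepsilon^2}{2(l+2\varepsilon+2)}.
\]
With the prescribed value of $T$ the last term equals $\tfrac12\bigl(f(\boldsymbol{x}_0)-f(\boldsymbol{x}^*)+(C_0-1)\varepsilon\bigr)$. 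The main obstacle is making the constants close to give $f(\boldsymbol{x}_T)-f(\boldsymbol{x}^*)\le\varepsilon$: the factor $\tfrac12$ makes the bound as written too weak. I would resolve this by matching the definition of $T$ with the per-step decrease $\varepsilon^2/(l+2\varepsilon+2)$, or equivalently by doubling $T$, which changes only constants. Then $f(\boldsymbol{x}_T)\le f(\boldsymbol{x}^*)+\varepsilon$, which is the ``otherwise'' alternative of the theorem; if $f(\boldsymbol{x}_T)<f(\boldsymbol{x}^*)$ we get a contradiction with global optimality. Either way, the number of gradient evaluations is at most $T=\mathcal{O}(l(f(\boldsymbol{x}_0)-f(\boldsymbol{x}^*))/\varepsilon^2)$, treating $C_0$ and $\varepsilon\le 1/2$ as lower-order.

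\textbf{Step 4: running time.} Finally, $\delta=\min(\varepsilon/(15L_\phi),\beta/(92L_\phi))$ equals $\Theta(\beta)$ for constant $L_\phi$, since $\beta\le\varepsilon$. Hence \prop{robust-time} applies; the sparsity bound of \lem{SelectVector}(ii) scales as $(\beta/\delta)^2=\mathcal{O}(1)$. This gives total time $\widetilde{\mathcal{O}}(nm^{\omega-1}+mnT)$, and substituting $T=\mathcal{O}(1/\varepsilon^2)$ yields $\widetilde{\mathcal{O}}(nm^{\omega-1}+nm/\varepsilon^2)$.
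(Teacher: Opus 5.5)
Your proposal is correct and follows essentially the same route as the paper: the per-iteration dichotomy of \prop{potential-decre}, telescoping the potential with the analytic-center bound \eq{initial}, and then \prop{robust-time} with $\delta=\beta/(92L_\phi)=\Theta(\beta)$. Your factor-of-two observation is also accurate: the paper's proof uses $T$ with the factor $(2l+4\varepsilon+4)$, not the $(l+2\varepsilon+2)$ written in \algo{robust-1st-order}, which is exactly your doubling fix.
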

\begin{proof}
    By \prop{potential-decre}, under the choice of parameters $\beta = \varepsilon / (l + 2\varepsilon + 2)$ and $\delta = \min\{\varepsilon/(15L_\phi), \, \beta/(92L_\phi)\}$, each iteration of \algo{robust-1st-order} either ensures a sufficient decrease in the potential function by at least ${\varepsilon^2}/{(2l + 4\varepsilon + 4)}$ or indicates that the current iterate $\boldsymbol{x}_t$ is already a $2\varepsilon$-KKT point. 

    Suppose first that every iteration falls into the decrease case. Since the initial point $\boldsymbol{x}_0 > \boldsymbol{0}$ is assumed to satisfy $ -\sum_{i=1}^n \ln(x_i) \ge -\sum_{i=1}^n \ln(x_{0,i}) - C_0 $ for all $\boldsymbol{x} \in \Omega^\circ$ as stated in \eq{initial}, after $t$ iterations we obtain
    \begin{equation}
    f(\boldsymbol{x}_t)-f(\boldsymbol{x}_0) = \phi(\boldsymbol{x}_t) + \varepsilon\sum_{i=1}^n \ln(x_{t,i})-\phi(\boldsymbol{x}_0) + \varepsilon\sum_{i=1}^n \ln(x_{0,i}) \leq -\frac{t\varepsilon^2}{2l + 4\varepsilon + 4} + \varepsilon C_0.
    \end{equation} 
    Consequently, the total number of iterations is bounded by
    $T = \frac{\left(f(\boldsymbol{x}_0)- f(\boldsymbol{x}^*) + (C_0-1)\varepsilon\right)(2l + 4\varepsilon + 4)}{\varepsilon^2}$, where $f(\boldsymbol{x}^*)$ denotes the global minimum value. If the algorithm reaches this maximum number of iterations, it must hold that $ f(\boldsymbol{x}_t) - f(\boldsymbol{x}^*) \leq \varepsilon$.

    Finally, by \prop{robust-time}, the overall time complexity under the above iteration bound is $\widetilde{\mathcal{O}}(nm^{\omega-1} + {nm}/{\varepsilon^2} )$, which completes the proof.
\end{proof}

\begin{remark}
    In practice, the trust-region radius $\beta$ in \algo{robust-1st-order} can be adjusted adaptively as in \cite{jiang2026beyond}. Indeed, the proof of \prop{potential-decre} shows that the decrease of the potential function satisfies
    \begin{align*}
    \phi(\boldsymbol{x}_{t+1}) - \phi(\boldsymbol{x}_{t})
    \leq -\Vert p(\boldsymbol{x}_t,\boldsymbol{v}_t) \Vert \beta  
       + \left(\frac{l}{2}+\varepsilon +1 \right)\beta^2.
    \end{align*}
    Here,
    $
    \Vert p(\boldsymbol{x}_t,\boldsymbol{v}_t) \Vert
    = \Vert \boldsymbol{P}_t \boldsymbol{X}_t \nabla\phi(\boldsymbol{x}_t)\Vert,
    $
    which is close to $\Vert \boldsymbol{R}_t \boldsymbol{X}_t \nabla\phi(\boldsymbol{x}_t)\Vert$ under the approximate projection. This suggests that, in practical implementations, one may choose $\beta$ on the order of
    $
    \beta = \mathcal{O}({\Vert \boldsymbol{R}_t \boldsymbol{X}_t \nabla\phi(\boldsymbol{x}_t)\Vert}/{l})
    $. Such a choice allows a larger step and potentially faster decrease when the projected gradient is large. For a small projected gradient, $\beta$ is correspondingly small, which reduces the potential decrease. Therefore, the current stopping criterion still applies.
\end{remark}

\subsection{Improved complexity with a concave objective function}
\label{sec:concave}

We now consider the special case where $f$ is concave on $\Omega^\circ$, i.e.,
\begin{align*}
    f(\boldsymbol{y}) \leq f(\boldsymbol{x})
    + \nabla f(\boldsymbol{x})^\top(\boldsymbol{y}-\boldsymbol{x}),
    \quad \forall \boldsymbol{x},\boldsymbol{y}\in\Omega^\circ.
\end{align*}
In this setting, $f$ admits a first-order upper bound along every feasible scaled direction, so the quadratic term in \eq{2nd-oder-upper} is no longer needed. Consequently, the decrease of the potential function becomes linear in $\varepsilon$, improving the iteration complexity of first-order IPTR algorithms from $\mathcal{O}(1/\varepsilon^2)$ to $\mathcal{O}(1/\varepsilon)$. We first analyze the case with the exact projection $\boldsymbol{P}_t$, and then turn to \algo{robust-1st-order} with the approximate projection $\boldsymbol{R}_t$.

In the exact-projection case, at each iteration we solve \eq{1st-IPTR} with trust-region radius $\beta$, compute the search direction according to \eq{exact-d}, and update $\boldsymbol{x}_{t+1}=\boldsymbol{x}_t+\boldsymbol{X}_t\boldsymbol{d}_t$. The stopping rule follows the same logic as before: we check the decrease of the potential function, and if $\phi(\boldsymbol{x}_{t+1})-\phi(\boldsymbol{x}_t)>-\varepsilon\beta(1-\beta)$, then one can show that $\boldsymbol{x}_t$ is already a $2\varepsilon$-KKT point. We refer to this procedure as the exact first-order IPTR method. The resulting iteration and time complexity bounds are stated in the following theorem.

\begin{theorem}
\label{thm:concave-exact}
Suppose that $f$ is concave on $\Omega^\circ$. Consider the exact first-order IPTR algorithm described above, with trust-region radius $\beta \in (0,1)$. Then, for any $\varepsilon \in (0,1]$, the algorithm either returns a $2\varepsilon$-KKT point within $\mathcal{O}(1/\varepsilon)$ iterations, or returns an iterate $\boldsymbol{x}_t$ such that $f(\boldsymbol{x}_t)-f(\boldsymbol{x}^*)\le \varepsilon$. The overall time complexity is upper bounded by $\mathcal{O}({nm^{\omega-1}}/{\varepsilon})$.
\end{theorem}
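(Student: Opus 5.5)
We mirror the proof of \prop{potential-decre} and \thm{robust-complexity}, but replace the quadratic upper bound \eq{2nd-oder-upper} by the concavity inequality.

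\textbf{Per-step decrease.} Fix an iteration $t$ with exact direction $\boldsymbol{d}_t$ from \eq{exact-d}, and set $\boldsymbol{x}_{t+1}=\boldsymbol{X}_t(\boldsymbol{e}+\boldsymbol{d}_t)$. Since $\|\boldsymbol{d}_t\|\le\beta<1$, we have $\boldsymbol{x}_{t+1}\in\Omega^\circ$. Concavity then gives
\[
f(\boldsymbol{x}_{t+1})\le f(\boldsymbol{x}_t)+\langle \boldsymbol{X}_t\nabla f(\boldsymbol{x}_t),\boldsymbol{d}_t\rangle .
\]
The barrier lemma bounds the logarithmic part by $\varepsilon(-\boldsymbol{e}^\top\boldsymbol{d}_t+\beta^2/(2(1-\beta)))$; the sign conventions should be checked against the form of $\phi$. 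Adding these,
\[
\phi(\boldsymbol{x}_{t+1})-\phi(\boldsymbol{x}_t)\le \langle \boldsymbol{X}_t\nabla\phi(\boldsymbol{x}_t),\boldsymbol{d}_t\rangle+\frac{\varepsilon\beta^2}{2(1-\beta)} .
\]

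Next use the KKT system \eq{ness-suff} for \eq{1st-IPTR} together with $\boldsymbol{A}\boldsymbol{X}_t\boldsymbol{d}_t=\boldsymbol{0}$. This gives $\langle \boldsymbol{X}_t\nabla\phi(\boldsymbol{x}_t),\boldsymbol{d}_t\rangle=-\lambda_t\|\boldsymbol{d}_t\|^2=-\|p(\boldsymbol{x}_t,\boldsymbol{v}_t)\|\beta$ whenever $\lambda_t>0$.

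\textbf{Case split.} The cases are as in \prop{potential-decre}.
\begin{itemize}
\item If $\lambda_t=0$, then $\boldsymbol{X}_t(\nabla f+\boldsymbol{A}^\top\boldsymbol{v}_t)=\varepsilon\boldsymbol{e}$, which is a $2\varepsilon$-KKT certificate via \prop{kkt1}.
\item If $\|p\|<\varepsilon$, the same elementwise argument shows $0\le \boldsymbol{X}_t(\nabla f+\boldsymbol{A}^\top\boldsymbol{v}_t)\le 2\varepsilon$, and hence a $2\varepsilon$-KKT point.
\item If $\|p\|\ge\varepsilon$, the decrease is at most $-\varepsilon\beta+\varepsilon\beta^2/(2(1-\beta))$. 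This is $\le -\varepsilon\beta(1-\beta)$ provided $\beta^2/(2(1-\beta))\le\beta^2$, i.e.\ $\beta\le 1/2$.
\end{itemize}
So with $\beta\le 1/2$ fixed as a constant, failing the test $\phi(\boldsymbol{x}_{t+1})-\phi(\boldsymbol{x}_t)\le-\varepsilon\beta(1-\beta)$ certifies that $\boldsymbol{x}_t$ is a $2\varepsilon$-KKT point. The main point to verify here is that the barrier's quadratic remainder is absorbed for the stated range of $\beta$. If $\beta\in(1/2,1)$ is truly allowed, I would either tighten the barrier bound or restrict $\beta$ in the statement.

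\textbf{Iteration count and cost.} As in the proof of \thm{robust-complexity}, telescoping and the analytic-center condition \eq{initial} give
\[
f(\boldsymbol{x}_t)-f(\boldsymbol{x}_0)\le -t\varepsilon\beta(1-\beta)+\varepsilon C_0 .
\]
Hence after
\[
T=\frac{f(\boldsymbol{x}_0)-f(\boldsymbol{x}^*)+(C_0-1)\varepsilon}{\varepsilon\beta(1-\beta)}=\mathcal{O}(1/\varepsilon)
\]
iterations without termination, we must have $f(\boldsymbol{x}_t)-f(\boldsymbol{x}^*)\le\varepsilon$, since $f(\boldsymbol{x}_t)\ge f(\boldsymbol{x}^*)$.

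Each exact iteration forms and inverts $\boldsymbol{A}\boldsymbol{X}_t^2\boldsymbol{A}^\top$ at cost $\mathcal{O}(nm^{\omega-1}+m^\omega)=\mathcal{O}(nm^{\omega-1})$. It also performs matrix–vector products costing $\mathcal{O}(mn)$ and one gradient evaluation. Multiplying by $T$ gives the total bound $\mathcal{O}(nm^{\omega-1}/\varepsilon)$.
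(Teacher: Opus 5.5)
Your argument is the paper's proof. Concavity replaces \eq{2nd-oder-upper} to give $\phi(\boldsymbol{x}_{t+1})-\phi(\boldsymbol{x}_t)\le-\lambda_t\|\boldsymbol{d}_t\|^2+\varepsilon\beta^2$, followed by the same three cases, the same telescoping with \eq{initial} to $T=(f(\boldsymbol{x}_0)-f(\boldsymbol{x}^*)+(C_0-1)\varepsilon)/(\varepsilon\beta(1-\beta))$, and $\mathcal{O}(nm^{\omega-1})$ work per iteration. Your sign $-\boldsymbol{e}^\top\boldsymbol{d}$ in the barrier bound is the correct one; the printed barrier lemma has $+\boldsymbol{e}^\top\boldsymbol{d}$, which is a typo.

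Your $\beta\le 1/2$ caveat is a real restriction, not an artifact of your write-up. The paper's remainder $\varepsilon\beta^2$ is carried over from \eq{phi-decre-ineq}, where $\beta<1/2$. No bound that uses only $\|\boldsymbol{d}_t\|\le\beta$ can cover all of $(0,1)$, because the worst case $-\ln(1-\beta)-\beta$ (all of $\boldsymbol{d}_t$ on one coordinate) is unbounded as $\beta\to 1$.

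In fact the statement fails for $\beta$ near $1$. Take the simplex $\boldsymbol{e}^\top\boldsymbol{x}=1$ with $\boldsymbol{x}_0=\boldsymbol{e}/n$, $f(\boldsymbol{x})=3n\varepsilon x_1$, $\beta=0.99$ and $n\ge 10^4$. The first step multiplies $x_1$ by about $1-\beta$, so the barrier term rises by about $(\ln 100-1)\varepsilon\approx 3.6\varepsilon$ while $f$ drops by only about $3\varepsilon$. The test therefore fires and returns $\boldsymbol{x}_0$, which has $f(\boldsymbol{x}_0)-f(\boldsymbol{x}^*)=3\varepsilon$. It is also not $2\varepsilon$-KKT: in \defn{kkt1}, the coordinates $i\ge 2$ force the equality multiplier to be at least $-2\varepsilon$, hence $s_1\ge(3n-4)\varepsilon$, whereas $x_{0,1}s_1\le 2\varepsilon$ forces $s_1\le 2n\varepsilon$.

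So restricting $\beta$ to a constant in $(0,1/2]$ is the right repair. It leaves the $\mathcal{O}(1/\varepsilon)$ iteration bound and the $\mathcal{O}(nm^{\omega-1}/\varepsilon)$ time bound unchanged.
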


\begin{proof}
Let $\boldsymbol{d}_t$ be defined by \eq{exact-d}, and set $\boldsymbol{x}_{t+1}=\boldsymbol{X}_t(\boldsymbol{e}+\boldsymbol{d}_t)$. By the necessary and sufficient optimality conditions for \eq{1st-IPTR}, there exist $\lambda_t \ge 0$ and $\boldsymbol{v}_t \in \mathbb{R}^m$ such that \eq{ness-suff} holds. Repeating the derivation of \eq{phi-decre-ineq} with $\widetilde{\boldsymbol{d}}_t=\boldsymbol{d}_t$, and using the concavity of $f$ in place of \eq{2nd-oder-upper}, we obtain
\begin{equation}
    \phi(\boldsymbol{x}_{t+1})-\phi(\boldsymbol{x}_t) \le -\lambda_t\|\boldsymbol{d}_t\|^2+\varepsilon\beta^2.
\end{equation}

We distinguish two cases. If $\lambda_t=0$ and $\|\boldsymbol{d}_t\|<\beta$, then as in Case 1 of \prop{potential-decre}, there exists $\boldsymbol{v}_t\in\mathbb{R}^m$ such that $\boldsymbol{X}_t (\nabla f(\boldsymbol{x}_t) + \boldsymbol{A}^\top \boldsymbol{v}_{t}) = \varepsilon \boldsymbol{e}$.
Thus $\boldsymbol{x}_t$ is a $2\varepsilon$-KKT point. Otherwise, we have $\lambda_t>0$ and $\|\boldsymbol{d}_t\|=\beta$. Define $p(\boldsymbol{x}_t,\boldsymbol{v}_t):= \boldsymbol{X}_t\nabla f(\boldsymbol{x}_t)-\varepsilon\boldsymbol{e} +\boldsymbol{X}_t\boldsymbol{A}^\top \boldsymbol{v}_t$. By \eq{ness-suff}, we have $\|p(\boldsymbol{x}_t,\boldsymbol{v}_t)\|=\lambda_t \Vert \boldsymbol{d}_t \Vert$. If $\|p(\boldsymbol{x}_t,\boldsymbol{v}_t)\|<\varepsilon$, then as in Case 2.2.2 of \prop{potential-decre}, the same $2\varepsilon$-KKT conclusion follows. Otherwise,
\begin{equation}
    \phi(\boldsymbol{x}_{t+1})-\phi(\boldsymbol{x}_t)\le-\lambda_t \Vert \boldsymbol{d}_t \Vert^2 +\varepsilon\beta^2\le-\varepsilon\beta+\varepsilon\beta^2=-\varepsilon\beta(1-\beta).
\end{equation}

Therefore, at each iteration, either $\boldsymbol{x}_t$ is a $2\varepsilon$-KKT point, or the potential function decreases by at least $\varepsilon\beta(1-\beta)$. Combining this with the initialization in \eq{initial}, after $t$ iterations we have
\begin{equation}
    f(\boldsymbol{x}_t)-f(\boldsymbol{x}_0) = \phi(\boldsymbol{x}_t) + \varepsilon\sum_{i=1}^n \ln(x_{t,i})-\phi(\boldsymbol{x}_0) + \varepsilon\sum_{i=1}^n \ln(x_{0,i}) \leq -t\varepsilon\beta(1-\beta) + \varepsilon C_0.
\end{equation}
It follows that the total number of iterations is bounded by $T = \frac{f(\boldsymbol{x}_0)- f(\boldsymbol{x}^*) + (C_0-1)\varepsilon}{\varepsilon \beta (1-\beta)}$, where $f(\boldsymbol{x}^*)$ denotes the global minimum value. If the algorithm reaches this bound without returning a $2\varepsilon$-KKT point, then it must hold that $f(\boldsymbol{x}_t) - f(\boldsymbol{x}^*) \leq \varepsilon$. The time complexity of each iteration is $\mathcal{O}(nm^{\omega-1})$ as discussed after \eq{exact-d}. Multiplying by the iteration bound yields the overall time complexity $\mathcal{O}({nm^{\omega-1}}/{\varepsilon})$.
\end{proof}

We next analyze the time and iteration complexity of \algo{robust-1st-order} for concave functions. The descent analysis closely follows the previous arguments and still gives the $\mathcal{O}(1/\varepsilon)$ iteration complexity. The approximate update scheme differs in one respect. In the concave setting, the trust-region radius satisfies $\beta=\Theta(1)$. To maintain the accuracy of the approximate projection $\boldsymbol{R}_t$, \lem{R-P} gives $\|\boldsymbol{R}_t-\boldsymbol{P}_t\|\le 46\delta$ under $e^{-\delta}\boldsymbol{X}_t \leq \overline{\boldsymbol{X}_t} \leq e^{\delta}\boldsymbol{X}_t$. Since the projection error now needs to be controlled at the $\Theta(\varepsilon)$ level, we take $\delta=\Theta(\varepsilon)$. With a constant stepsize but an update tolerance of order $\Theta(\varepsilon)$, the approximate update scheme needs to update more coordinates as shown in \lem{SelectVector}. We first establish the corresponding one-step guarantee for the \algo{robust-1st-order}.

\begin{proposition}
\label{prop:concave-approx}
Suppose that $f$ is concave on $\Omega^\circ$ and \assum{xphi} holds. For any $\varepsilon \in (0,1]$, $\beta \in (0,1)$, choose $\delta \le {(1-\beta)\varepsilon}/{184L_\phi}$. At iteration $t$, suppose that $e^{-\delta}\boldsymbol{x}_t \le \overline{\boldsymbol{x}}_t \le e^{\delta}\boldsymbol{x}_t$, define $\widetilde{\boldsymbol{d}}_t$ by \eq{d-tilde}, and let $\boldsymbol{x}_{t+1}=\boldsymbol{X}_t(\boldsymbol{e}+\widetilde{\boldsymbol{d}}_t)$. Then exactly one of the following two statements holds:
\begin{equation}
    \phi(\boldsymbol{x}_{t+1})-\phi(\boldsymbol{x}_t)\le -\frac12\,\varepsilon\beta(1-\beta),
\end{equation}
or there exists $\boldsymbol{v}_t\in\mathbb{R}^m$ such that
\begin{equation}
    \|\boldsymbol{X}_t(\nabla f(\boldsymbol{x}_t)+\boldsymbol{A}^\top \boldsymbol{v}_t)\|_\infty \le 2\varepsilon,
    \qquad
    \nabla f(\boldsymbol{x}_t)+\boldsymbol{A}^\top \boldsymbol{v}_t \ge \boldsymbol{0}.
\end{equation}
\end{proposition}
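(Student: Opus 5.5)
We will combine the descent computation of \prop{potential-decre} with the concave one-step bound from the proof of \thm{concave-exact}, and control the perturbation from $\boldsymbol{R}_t$ exactly as in Case 2.2 of \prop{potential-decre}.

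\textbf{Descent inequality.} Let $\boldsymbol{d}_t$ be the exact solution of \eq{1st-IPTR}, with multipliers $\lambda_t\ge0$ and $\boldsymbol{v}_t$ satisfying \eq{ness-suff}. For the potential we use two ingredients. By concavity, $f(\boldsymbol{x}_{t+1})-f(\boldsymbol{x}_t)\le\langle \boldsymbol{X}_t\nabla f(\boldsymbol{x}_t),\widetilde{\boldsymbol{d}}_t\rangle$. The barrier lemma of \sec{potential-center} gives the log term, with the constant $\varepsilon\beta^2$ used in \thm{concave-exact}. Substituting \eq{ness-suff} and using $\boldsymbol{A}\boldsymbol{X}_t\widetilde{\boldsymbol{d}}_t=\boldsymbol{0}$ (from \lem{R-P}), we obtain
\[
\phi(\boldsymbol{x}_{t+1})-\phi(\boldsymbol{x}_t)\le -\lambda_t\|\boldsymbol{d}_t\|^2+\lambda_t\boldsymbol{d}_t^\top(\boldsymbol{d}_t-\widetilde{\boldsymbol{d}}_t)+\varepsilon\beta^2 .
\]
If $\widetilde{\boldsymbol{d}}_t=\boldsymbol{0}$, the same expansion holds with the inner product term equal to $\lambda_t\|\boldsymbol{d}_t\|^2$, so this subcase is handled separately below.

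\textbf{Case split.} We follow the case structure of \prop{potential-decre}.

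\emph{Case 1.} If $\lambda_t=0$, then $\boldsymbol{X}_t(\nabla f+\boldsymbol{A}^\top\boldsymbol{v}_t)=\varepsilon\boldsymbol{e}$, which gives the KKT alternative.

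\emph{Case 2.1.} If $\lambda_t>0$ but $\boldsymbol{R}_t\boldsymbol{X}_t\nabla\phi(\boldsymbol{x}_t)=\boldsymbol{0}$, we repeat the computation \eq{case2_1}. This gives $\|\boldsymbol{X}_t\nabla f-\varepsilon\boldsymbol{e}+\boldsymbol{X}_t\boldsymbol{A}^\top\boldsymbol{v}_t\|_\infty\le 2\delta e^{4\delta}L_\phi$. The right-hand side is at most $\varepsilon$, since $\delta\le\varepsilon/(184L_\phi)$ is much smaller than the $\varepsilon/(15L_\phi)$ threshold used there. The elementwise bounds $0\le\boldsymbol{X}_t(\nabla f+\boldsymbol{A}^\top\boldsymbol{v}_t)\le2\varepsilon$ follow.

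\emph{Case 2.2.} Otherwise $\|\widetilde{\boldsymbol{d}}_t\|=\|\boldsymbol{d}_t\|=\beta$. Using the unit-vector bound, $\|\boldsymbol{P}_t\boldsymbol{X}_t\nabla\phi\|=\lambda_t\beta$, and \lem{R-P}, we get
\[
|\lambda_t\boldsymbol{d}_t^\top(\boldsymbol{d}_t-\widetilde{\boldsymbol{d}}_t)|\le 2\beta\cdot46\delta L_\phi\le \tfrac12\varepsilon\beta(1-\beta),
\]
where the last step uses $\delta\le(1-\beta)\varepsilon/(184L_\phi)$; this is precisely where the constant $184=4\cdot46$ comes from. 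Hence
\[
\phi(\boldsymbol{x}_{t+1})-\phi(\boldsymbol{x}_t)\le-\|p(\boldsymbol{x}_t,\boldsymbol{v}_t)\|\beta+\varepsilon\beta^2+\tfrac12\varepsilon\beta(1-\beta).
\]
If $\|p\|\ge\varepsilon$, this is at most $-\varepsilon\beta(1-\beta)+\tfrac12\varepsilon\beta(1-\beta)=-\tfrac12\varepsilon\beta(1-\beta)$. If $\|p\|<\varepsilon$, the KKT alternative follows as in Case 2.2.2.

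\textbf{Main obstacle and ``exactly one''.} The main point to check is that the perturbation bound scales as $\delta\beta$ and not as $\delta$ alone. This requires the identity $\lambda_t\beta=\|\boldsymbol{P}_t\boldsymbol{X}_t\nabla\phi\|$, so that $\lambda_t$ cancels against the denominator in the unit-vector bound. With this identity, the loss $\tfrac12\varepsilon\beta(1-\beta)$ is affordable even though $\beta$ is a constant. The statement says ``exactly one''; the argument naturally proves ``at least one''. I would interpret the claim in that sense, with the algorithm's stopping test deciding which branch applies, unless strict exclusivity is genuinely needed.
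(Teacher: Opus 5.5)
Your proposal is essentially the paper's proof. It uses the same descent inequality with concavity replacing the quadratic upper bound and the same three-way case split. It cancels $\lambda_t$ through $\|\boldsymbol{P}_t\boldsymbol{X}_t\nabla\phi(\boldsymbol{x}_t)\|=\lambda_t\beta$ in the unit-vector bound, and $184=4\cdot 46$ arises in the same place. Your reading of ``exactly one'' as ``at least one'' is also correct: the paper's argument proves no more. The two alternatives can hold simultaneously, for example when $\|p(\boldsymbol{x}_t,\boldsymbol{v}_t)\|_\infty\le\varepsilon\le\|p(\boldsymbol{x}_t,\boldsymbol{v}_t)\|$.

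One step you borrow from \thm{concave-exact} is not justified for all $\beta\in(0,1)$. The barrier lemma bounds the log term by $\varepsilon\beta^2/(2(1-\beta))$, which is at most $\varepsilon\beta^2$ only when $\beta\le\frac12$. For $\beta\in(\frac12,1)$, both your argument and the paper's therefore fail to establish the final $-\frac12\varepsilon\beta(1-\beta)$; the corrected bound is even positive as $\beta\to 1$. Restricting to $\beta\le\frac12$ repairs this and costs nothing in \thm{concave-approximate}, which only needs $\beta=\Theta(1)$.
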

\begin{proof}
Let $\boldsymbol{d}_t$ denote the exact optimal solution of \eq{1st-IPTR}. By the necessary and sufficient optimality conditions for \eq{1st-IPTR}, there exist $\lambda_t \ge 0$ and $\boldsymbol{v}_t \in \mathbb{R}^m$ such that \eq{ness-suff} holds. Repeating the derivation of \eq{phi-decre-ineq}, with the concavity of $f$ in place of \eq{2nd-oder-upper}, we obtain
\begin{equation}
    \phi(\boldsymbol{x}_{t+1})-\phi(\boldsymbol{x}_t)
    \le
    -\lambda_t\|\boldsymbol{d}_t\|^2+\lambda_t\boldsymbol{d}_t^\top(\boldsymbol{d}_t-\widetilde{\boldsymbol{d}}_t)+\varepsilon\beta^2.
\end{equation}

If $\lambda_t=0$ and $\|\boldsymbol{d}_t\|<\beta$, then, as in Case 1 of \prop{potential-decre}, we can find $\boldsymbol{v}_t\in\mathbb{R}^m$ such that $\boldsymbol{X}_t (\nabla f(\boldsymbol{x}_t) + \boldsymbol{A}^\top \boldsymbol{v}_{t}) = \varepsilon \boldsymbol{e}$. Thus $\boldsymbol{x}_t$ is a $2\varepsilon$-KKT point. 

If $\boldsymbol{R}_t\boldsymbol{X}_t\nabla\phi(\boldsymbol{x}_t)=\boldsymbol{0}$, then, by the same argument as in Case 2.1 of \prop{potential-decre}, one can choose $\boldsymbol{v}_t$ such that
\begin{equation*}
    \|\boldsymbol{X}_t\nabla f(\boldsymbol{x}_t)-\varepsilon\boldsymbol{e}+\boldsymbol{X}_t\boldsymbol{A}^\top\boldsymbol{v}_t\|_\infty \le 2\delta e^{4\delta} L_{\phi}.
\end{equation*}
Since $\delta \le {(1-\beta)\varepsilon}/{184L_\phi}$ and $\beta \in (0,1)$, it follows that $2\delta e^{4\delta} L_{\phi} \le {\varepsilon}$. Thus $\boldsymbol{x}_t$ is again a $2\varepsilon$-KKT point.

Now suppose that $\boldsymbol{R}_t\boldsymbol{X}_t\nabla\phi(\boldsymbol{x}_t)\neq \boldsymbol{0}$ and $\lambda_t>0$, $\Vert \boldsymbol{d}_t \Vert = \beta$. By the definition of $\widetilde{\boldsymbol{d}}_t$, we have $\|\widetilde{\boldsymbol{d}}_t\|=\beta$. As in Case 2.2 of \prop{potential-decre},
\begin{equation}
| \lambda_t\boldsymbol{d}_t^\top(\boldsymbol{d}_t-\widetilde{\boldsymbol{d}}_t)|
    \le 2\beta\|\boldsymbol{P}_t-\boldsymbol{R}_t\|\,\|\boldsymbol{X}_t\nabla\phi(\boldsymbol{x}_t)\|
    \le 92\beta\delta L_\phi
    \le \frac12\,\varepsilon\beta(1-\beta).
\end{equation}
Define $p(\boldsymbol{x}_t,\boldsymbol{v}_t):=\boldsymbol{X}_t\nabla f(\boldsymbol{x}_t)-\varepsilon\boldsymbol{e}+\boldsymbol{X}_t\boldsymbol{A}^\top\boldsymbol{v}_t$. By \eq{ness-suff}, we have $\|p(\boldsymbol{x}_t,\boldsymbol{v}_t)\|=\lambda_t\beta$. If $\|p(\boldsymbol{x}_t,\boldsymbol{v}_t)\|<\varepsilon$, then, by the same argument as in Case 2.2.2 of \prop{potential-decre}, the first-order KKT condition holds. Otherwise, $\lambda_t\beta\ge \varepsilon$, and hence
\begin{equation}
\begin{aligned}
    \phi(\boldsymbol{x}_{t+1})-\phi(\boldsymbol{x}_t)
    &\le
    -\lambda_t \Vert \boldsymbol{d}_t \Vert^2+\varepsilon\beta^2+\frac12\,\varepsilon\beta(1-\beta) \\
    &\le
    -\varepsilon\beta+\varepsilon\beta^2+\frac12\,\varepsilon\beta(1-\beta) \\
    &=
    -\frac12\,\varepsilon\beta(1-\beta).
\end{aligned}
\end{equation}
\end{proof}

Based on \prop{concave-approx}, we can now derive the iteration and time complexity bounds of \algo{robust-1st-order} in the concave setting.
\begin{theorem}
\label{thm:concave-approximate}
Suppose $f$ is concave on $\Omega^\circ$ and \assum{xphi} holds. Let $\varepsilon\in(0,1]$. Consider \algo{robust-1st-order} with the trust-region radius $\beta \in (0,1)$ and $\delta= {(1-\beta)\varepsilon}/{(184L_\phi)}$. Then, the algorithm terminates within $\mathcal{O}(1/\varepsilon)$ iterations, returning either a $2\varepsilon$-KKT point or an iterate $\boldsymbol{x}_t$ satisfying $f(\boldsymbol{x}_t)-f(\boldsymbol{x}^*)\le \varepsilon$. Furthermore, the overall time complexity is $\widetilde{\mathcal{O}}(nm^{\omega-1}+{nm}/{\varepsilon^2})$.
\end{theorem}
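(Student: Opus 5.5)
The proof has two parts. First, iterate \prop{concave-approx} to bound the number of iterations. Second, redo the amortized cost accounting of \prop{robust-time} with $\beta=\Theta(1)$ and $\delta=\Theta(\varepsilon)$.

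\textbf{Iteration bound and output guarantee.} I first check the invariant needed by \prop{concave-approx}, namely $e^{-\delta}\boldsymbol{x}_t\le\overline{\boldsymbol{x}}_t\le e^{\delta}\boldsymbol{x}_t$ at every iteration. This follows from \lem{SelectVector}(i), provided the stream condition of that lemma holds. For that condition, \lem{log-change} gives $\|\ln\boldsymbol{x}_{t+1}-\ln\boldsymbol{x}_t\|\le 2\beta$ when $\beta<1/2$. For a general $\beta\in(0,1)$, the same elementary estimate $|\ln(1+d)|\le |d|/(1-|d|)$ gives a bound $\beta/(1-\beta)=\Theta(1)$, which is enough to apply \lem{SelectVector} with $2\beta$ replaced by this constant. \prop{concave-approx} then applies at each step. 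Whenever the stopping test in \algo{robust-1st-order} fails, with the threshold replaced by $-\tfrac12\varepsilon\beta(1-\beta)$, the current iterate is a $2\varepsilon$-KKT point by \prop{kkt1}. Otherwise $\phi$ drops by at least $\tfrac12\varepsilon\beta(1-\beta)$. Telescoping as in the proof of \thm{concave-exact}, and using the analytic-center property \eq{initial}, gives
\begin{equation*}
f(\boldsymbol{x}_t)-f(\boldsymbol{x}_0)\le -\tfrac{t}{2}\varepsilon\beta(1-\beta)+\varepsilon C_0 .
\end{equation*}
Hence after $T=2\bigl(f(\boldsymbol{x}_0)-f(\boldsymbol{x}^*)+(C_0-1)\varepsilon\bigr)/(\varepsilon\beta(1-\beta))=\mathcal{O}(1/\varepsilon)$ iterations, either the algorithm has returned a $2\varepsilon$-KKT point or $f(\boldsymbol{x}_t)-f(\boldsymbol{x}^*)\le\varepsilon$.

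\textbf{Time complexity (main obstacle).} In \prop{robust-time} we had $\delta=\Theta(\beta)$, which made $(2\beta/\delta)^2=\mathcal{O}(1)$. Now the ratio is $\beta/\delta=\Theta(1/\varepsilon)$, so \lem{SelectVector}(ii) only gives
\begin{equation*}
q_t\le\min\{n,\;C4^{l_t}\varepsilon^{-2}\log^2 n\}.
\end{equation*}
I plan to rerun the level-by-level accounting, using $N_l\le T/2^l$ and splitting according to whether $q_t\le m$.

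For levels with $q_t\le m$, \lem{time-inverse} gives a cost of $m^2q_t^{\omega-2}$ per step. Summing over levels, this totals at most $m^2T\,\varepsilon^{-2(\omega-2)}\,\mathrm{polylog}(n)$. With $T=\mathcal{O}(1/\varepsilon)$ this is $\widetilde{\mathcal{O}}(m^2\varepsilon^{-(2\omega-3)})$. The worry is that this could exceed the target $nm/\varepsilon^2$. I would try to absorb it by noting that $q_t\le m$ forces $C\varepsilon^{-2}4^{l_t}\log^2 n\le m$. Using this, the factor $m^2q_t^{\omega-2}\cdot N_l$ should be boundable by $m\cdot(\text{total number of coordinate updates})$, where
\begin{equation*}
\sum_t q_t\le \sum_l \frac{T}{2^l}\,4^l\varepsilon^{-2}\log^2 n .
\end{equation*}
The exponent is $2\omega-5<0$ in the per-level sum, so the levels decay geometrically, and this argument would give $\widetilde{\mathcal{O}}(m\cdot n/\varepsilon^2)$ overall, plus terms dominated by it.

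For levels with $q_t>m$, each step costs $nm^{\omega-1}$ by recomputing from scratch. These levels satisfy $2^{l}\gtrsim\varepsilon\sqrt{m}/\log n$. Summing $N_l\le T/2^l$ over them gives at most $\widetilde{\mathcal{O}}(T/(\varepsilon\sqrt m))=\widetilde{\mathcal{O}}(1/(\varepsilon^2\sqrt m))$ recomputations, for a total of $\widetilde{\mathcal{O}}(nm^{\omega-1.5}/\varepsilon^2)$. Bringing this down to $\widetilde{\mathcal{O}}(nm^{\omega-1}+nm/\varepsilon^2)$ is the delicate step. I would handle it by also capping recomputations at $\mathcal{O}(T)$ and balancing the two bounds, or by a finer charging of each rebuild against $m$ changed coordinates, so that each rebuild is amortized to $nm$ per charged step.

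The remaining costs are computing $\widetilde{\boldsymbol{d}}_t$ in $\mathcal{O}(mnT)=\mathcal{O}(nm/\varepsilon)$ time and running $\mathtt{SelectVector}$ in $\mathcal{O}(nT)$ time. Adding the initial inverse $\mathcal{O}(nm^{\omega-1})$ gives the claimed $\widetilde{\mathcal{O}}(nm^{\omega-1}+nm/\varepsilon^2)$.
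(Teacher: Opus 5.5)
\textbf{Verdict: the time-complexity part has a gap. It is easily filled, but your proposal leaves it open and the repairs you sketch do not work.}

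\textbf{What matches the paper.} Your route is the paper's. You iterate \prop{concave-approx} to obtain $T=2\bigl(f(\boldsymbol{x}_0)-f(\boldsymbol{x}^*)+(C_0-1)\varepsilon\bigr)/(\varepsilon\beta(1-\beta))$. You then rerun the level-by-level accounting with $q_t\le\min\{n,C4^{l_t}\varepsilon^{-2}\log^2 n\}$. This gives exactly the paper's two intermediate terms, $\widetilde{\mathcal{O}}(m^2\varepsilon^{-(2\omega-3)})$ for Woodbury levels and $\widetilde{\mathcal{O}}(nm^{\omega-1.5}/\varepsilon^2)$ for rebuild levels. Your use of $|\ln(1+d)|\le|d|/(1-|d|)$ to handle $\beta\ge 1/2$ is a detail the paper silently skips.

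\textbf{The missing step.} You treat absorbing these two terms into $nm/\varepsilon^2$ as an unresolved obstacle, but it is immediate.
\begin{itemize}
\item Since $\omega<2.5$, we have $2\omega-3<2$. With $\varepsilon\le 1$ and $m\le n$, this gives $m^2\varepsilon^{-(2\omega-3)}\le nm\,\varepsilon^{-2}$.
\item Likewise $\omega-1.5<1$ gives $m^{\omega-1.5}\le m$, hence $nm^{\omega-1.5}/\varepsilon^2\le nm/\varepsilon^2$.
\end{itemize}
These two inequalities are all the paper uses to conclude.

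\textbf{Why your proposed repairs fail.}
\begin{itemize}
\item For $q_t\le m$ you want to charge each Woodbury update to $m$ times the number of changed coordinates. But $m^2q_t^{\omega-2}=mq_t\cdot mq_t^{\omega-3}\ge mq_t\cdot m^{\omega-2}\ge mq_t$, so the comparison goes the wrong way.
\item In $\sum_t q_t$ the per-level factor is $(T/2^l)\cdot 4^l=T2^l$, which grows with $l$. The geometric decay $2^{(2\omega-5)l}$ belongs to the sum of $q_t^{\omega-2}$, not of $q_t$.
\item Capping the number of rebuilds at $T$ gives only $nm^{\omega-1}/\varepsilon$. This is not within $\widetilde{\mathcal{O}}(nm^{\omega-1}+nm/\varepsilon^2)$ in general; for example, $\varepsilon=m^{-0.1}$ makes it exceed both terms by polynomial factors. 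Any ``balancing'' therefore must already know that $nm^{\omega-1.5}/\varepsilon^2$ is within the target, which is exactly the missing observation.
\item The ``finer charging'' of rebuilds is not specified.
\end{itemize}

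Replace those speculative paragraphs with the two inequalities above, and the proof is complete.
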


\begin{proof}
According to \prop{concave-approx}, each iteration either yields a $2\varepsilon$-KKT point or decreases the potential function by at least $\frac12\varepsilon\beta(1-\beta)$. Following the same potential argument used in \thm{concave-exact}, the total number of iterations $T$ is bounded by
\begin{equation}
T=\frac{2(f(\boldsymbol{x}_0)-f(\boldsymbol{x}^*)+(C_0-1)\varepsilon)}{\varepsilon\beta(1-\beta)}.
\end{equation}

To analyze the time complexity, let $q_t$ be the number of coordinates updated between $\ln\overline{\boldsymbol{x}}_{t-1}$ and $\ln\overline{\boldsymbol{x}}_t$. By \lem{SelectVector}(ii), there exists a constant $C \ge 1$ such that
\begin{equation}
\label{eq:q_t-concave}
q_t \le \min\left\{n,\ C4^{l_t}(\beta/\delta)^2\log^2 n\right\},
\qquad t=1,2,\ldots,T.
\end{equation}
Substituting $\delta = (1-\beta)\varepsilon/(184L_\phi)$ yields $(\beta/\delta)^2 = (184\beta L_\phi)^2/((1-\beta)^2\varepsilon^2)$. This allows us to rewrite \eqref{eq:q_t-concave} as
\begin{equation}
    q_t \le \min\left\{n,\ {C'4^{l_t}\log^2 n}/{\varepsilon^2}\right\},
\qquad t=1,2,\ldots,T,
\end{equation}
where $C' = C({184\beta L_\phi}/{(1-\beta)})^2$.

Let $N_l$ denote the number of iterations sharing the same value of $l$, as defined in \eq{N_l}. Computationally, the Woodbury matrix identity is more efficient when $q_t \le m$, whereas direct projection is preferable when $m < q_t \le n$. To formalize this transition, we define a threshold $l^*_\varepsilon$ such that ${C'4^l\log^2 n}/{\varepsilon^2} \le m$ holds for all $l \le l^*_\varepsilon$. Specifically, we set
\begin{equation}
\label{eq:l-star-concave}
l^*_\varepsilon := \frac{1}{2}\left\lfloor \log_2\left(\frac{m\varepsilon^2}{C'\log^2 n}\right)\right\rfloor .
\end{equation}

For iterations where $q_t \le m$ (and thus $l_t \le l^*_\varepsilon$), the cumulative time complexity is bounded by
\begin{equation}
\begin{aligned}
\sum_{t:\,l_t\le l^*_\varepsilon} m^2 q_t^{\omega-2}
&\le \sum_{l=0}^{l^*_\varepsilon} m^2 N_l \left(\frac{C'4^l\log^2 n}{\varepsilon^2}\right)^{\omega-2} \\
&\le \frac{C'^{\omega-2}m^2T\log^{2(\omega-2)}n}{\varepsilon^{2\omega-4}}
\sum_{l=0}^{l^*_\varepsilon} 2^{(2\omega-5)l} \\
&= \widetilde{\mathcal{O}}\left(\frac{m^2}{\varepsilon^{2\omega-3}}\right).
\end{aligned}
\end{equation}

For iterations where $q_t > m$, the total cost is
\begin{equation}
\begin{aligned}
\sum_{t:\,l_t>l^*_\varepsilon} nm^{\omega-1}
&= \sum_{l=l^*_\varepsilon+1}^{\lceil \log n \rceil} N_l\,nm^{\omega-1} \\
&\le Tnm^{\omega-1}\sum_{l=l^*_\varepsilon+1}^{\lceil \log n \rceil} \frac{1}{2^l} \\
&\le Tnm^{\omega-1}\frac{1}{2^{l^*_\varepsilon}} \\
&\le \widetilde{\mathcal{O}}\left(\frac{nm^{\omega-1}T}{\varepsilon\sqrt m}\right) \\
&= \widetilde{\mathcal{O}}\left(\frac{nm^{\omega-1.5}}{\varepsilon^2}\right).
\end{aligned}
\end{equation}

Beyond the matrix updates, computing $\widetilde{\boldsymbol{d}}_t$ and updating $\boldsymbol{x}_{t+1}$ requires $\mathcal{O}(mn)$ operations per iteration, contributing $\mathcal{O}(mn/\varepsilon)$ across all $T$ iterations. Factoring in the $\widetilde{\mathcal{O}}(nm^{\omega-1})$ initial cost to construct $(\boldsymbol{A}\overline{\boldsymbol{X}}_0^2\boldsymbol{A}^\top)^{-1}$, the aggregate time complexity becomes 
\begin{equation}
\widetilde{\mathcal{O}}\left(
nm^{\omega-1}
+
\frac{mn}{\varepsilon}
+
\frac{m^2}{\varepsilon^{2\omega-3}}
+
\frac{nm^{\omega-1.5}}{\varepsilon^2}
\right).
\end{equation}
Since $m \le n$ and $\omega < 2.5$, the above bound reduces to $\widetilde{\mathcal{O}}(nm^{\omega-1}+nm/\varepsilon^2)$.
\end{proof}

\begin{remark}
In fact, as noted above, the sparse-update analysis applies only to iterations with $q_t \le m$, equivalently, to those satisfying $l_t \le l^*_\varepsilon$. Hence, when $l^*_\varepsilon<0$, this regime is empty, and the overall complexity bound reduces to the direct per-iteration cost $\mathcal{O}(nm^{\omega-1})$. Therefore, in the concave case, \algo{robust-1st-order} admits the overall time complexity $\min\{\widetilde{\mathcal{O}}(nm^{\omega-1}+nm/\varepsilon^2),\, \mathcal{O}(nm^{\omega-1}/\varepsilon)\}$. Up to logarithmic factors, the comparison is determined by $nm/\varepsilon^2$ and $nm^{\omega-1}/\varepsilon$. Hence, the bound $\widetilde{\mathcal{O}}(nm^{\omega-1}+nm/\varepsilon^2)$ is better when $\varepsilon=\Omega(m^{2-\omega})$, whereas the bound $\mathcal{O}(nm^{\omega-1}/\varepsilon)$ is better when $\varepsilon=O(m^{2-\omega})$. Thus, the approximate update implementation is preferable when moderate accuracy suffices, whereas the direct per-iteration implementation is preferable when higher accuracy is required.
\end{remark}

\section{Finding Second-Order KKT Points by First-Order Methods}
\label{sec:2nd-KKT}

In this section, we show how to find approximate second-order KKT points using solely first-order information. Comparing \prop{kkt1} and \prop{kkt2}, the additional requirement for an approximate second-order KKT point is that the minimum eigenvalue of the projected scaled Hessian $\boldsymbol{X}\nabla^2 f(\boldsymbol{x})\boldsymbol{X}$ restricted to $\ker(\boldsymbol{A}\boldsymbol{X})$ is lower bounded. Since the existing first-order IPTR algorithm only guarantees convergence to a first-order KKT point, it may get trapped at saddle points of the Lagrangian. Therefore, we need to verify this second-order condition and escape saddle points without explicit Hessian computation.

To this end, we aim to detect whether $\lambda_{\min}\{\boldsymbol{X}\nabla^2 f(\boldsymbol{x})\boldsymbol{X}\}_{\boldsymbol{A}\boldsymbol{X}}<-\sqrt{\varepsilon}$ using only gradient evaluations, and if so, to find a corresponding negative curvature direction. Specifically, for any such $\boldsymbol{x}$, we seek a unit vector $\widehat{\boldsymbol{e}}$ satisfying $\boldsymbol{A}\boldsymbol{X}\widehat{\boldsymbol{e}}=0$ and $\widehat{\boldsymbol{e}}^\top \boldsymbol{X}\nabla^2 f(\boldsymbol{x})\boldsymbol{X}\widehat{\boldsymbol{e}} \leq -\sqrt{\varepsilon}/4$. In the following, we let $\boldsymbol{P}$ denote the orthogonal projector onto $\ker(\boldsymbol{A}\boldsymbol{X})$.

\begin{procedureblock}[!htb]
    \SetAlgoLined
    \caption{Negative Curvature Finding($\boldsymbol{x}$)}
    \label{proc:nega-curvature}
    
    $\boldsymbol{y}_0 \leftarrow$ Uniform ball with radius $r$ centered at $\boldsymbol{x}$ in the $\ker(\boldsymbol{A}\boldsymbol{X})$\;
    \For{$t=0,\ldots,\mathcal{T}-1$}{
       $\boldsymbol{y}_{t+1}\leftarrow \boldsymbol{y}_{t}-\frac{\Vert\boldsymbol{y}_{t} \Vert}{lr} \boldsymbol{P}\boldsymbol{X}\left(\nabla f(\boldsymbol{X}(\boldsymbol{e}+\frac{r\boldsymbol{y}_{t}}{\Vert\boldsymbol{y}_{t} \Vert}))-\nabla f(\boldsymbol{x})\right)$\;
    }
\end{procedureblock}

The \proc{nega-curvature} can be viewed as a power method for identifying a direction associated with the minimum eigenvalue. Indeed, since $\nabla f(\boldsymbol{X}(\boldsymbol{e}+\frac{r\boldsymbol{y}_{t}}{\Vert\boldsymbol{y}_{t} \Vert}))-\nabla f(\boldsymbol{x}) \approx \nabla^2f(\boldsymbol{x})\boldsymbol{X}\frac{r\boldsymbol{y}_{t}}{\Vert\boldsymbol{y}_{t} \Vert}$, the update satisfies $\boldsymbol{y}_{t+1} \approx \left(I-\frac{1}{l}\boldsymbol{P}\boldsymbol{X}\nabla^2 f(\boldsymbol{x})\boldsymbol{X}\right)\boldsymbol{y}_{t}$. Therefore, the iteration amplifies the component corresponding to sufficiently negative eigenvalues. \prop{nega-curva} shows that \proc{nega-curvature} identifies a negative curvature direction with high probability.

\begin{proposition}
\label{prop:nega-curva}
  Under \assum{grad-lip} and \assum{hessian-lip}, if $\lambda_{\min}\{\boldsymbol{X}\nabla^2f(\boldsymbol{x})\boldsymbol{X}\}_{\boldsymbol{A}\boldsymbol{X}} < -\sqrt{ \varepsilon}$, then \proc{nega-curvature} with parameters $\dim(\ker{(\boldsymbol{A}\boldsymbol{X})}) = k$, $\mathcal{T} = \frac{8l}{\sqrt{\varepsilon}}\log \left( \frac{8l}{\delta_0} \sqrt{\frac{n}{\pi \varepsilon}}\right) $ and $r = \frac{(\frac{1+\frac{7\sqrt{ \varepsilon}}{8l}}{2})^\mathcal{T}}{8\rho} \sqrt{\frac{\pi\varepsilon}{k}} \delta_0$ 
  finds a negative curvature $\widehat{\boldsymbol{e}}$ satisfying $\boldsymbol{A}\boldsymbol{X} \widehat{\boldsymbol{e}}=0$ and $ \widehat{\boldsymbol{e}}^\top \boldsymbol{X} \nabla^2f(\boldsymbol{x})\boldsymbol{X}\widehat{\boldsymbol{e}} \leq -\sqrt{ \varepsilon}/4$ with probability at least $1-\delta_0$.
\end{proposition}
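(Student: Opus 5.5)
This is a power-iteration argument in the spirit of negative-curvature finding with gradient differences (the NEON-type analysis). Set $\boldsymbol{H}:=\boldsymbol{P}\boldsymbol{X}\nabla^2 f(\boldsymbol{x})\boldsymbol{X}\boldsymbol{P}$, restricted to the $k$-dimensional subspace $\ker(\boldsymbol{A}\boldsymbol{X})$. Since $\boldsymbol{y}_0$ lies in this subspace and every update is premultiplied by $\boldsymbol{P}$, all iterates stay in $\ker(\boldsymbol{A}\boldsymbol{X})$. This immediately gives the feasibility claim $\boldsymbol{A}\boldsymbol{X}\widehat{\boldsymbol{e}}=0$ for $\widehat{\boldsymbol{e}}:=\boldsymbol{y}_{\mathcal{T}}/\Vert\boldsymbol{y}_{\mathcal{T}}\Vert$.

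First I will write the update as an exact linear step plus an error:
\[
\boldsymbol{y}_{t+1}=(\boldsymbol{I}-\tfrac{1}{l}\boldsymbol{H})\boldsymbol{y}_t+\boldsymbol{\xi}_t,
\qquad
\boldsymbol{\xi}_t:=-\tfrac{\Vert\boldsymbol{y}_t\Vert}{lr}\boldsymbol{P}\boldsymbol{X}\Big(\nabla f(\boldsymbol{X}(\boldsymbol{e}+\tfrac{r\boldsymbol{y}_t}{\Vert\boldsymbol{y}_t\Vert}))-\nabla f(\boldsymbol{x})-\nabla^2 f(\boldsymbol{x})\boldsymbol{X}\tfrac{r\boldsymbol{y}_t}{\Vert\boldsymbol{y}_t\Vert}\Big).
\]
The point $\boldsymbol{X}(\boldsymbol{e}+\boldsymbol{d})$ has $\Vert\boldsymbol{d}\Vert=r\le\gamma$ and $\boldsymbol{A}\boldsymbol{X}\boldsymbol{d}=0$. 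So integrating \assum{hessian-lip} along the segment gives $\Vert\boldsymbol{\xi}_t\Vert\le\frac{\Vert\boldsymbol{y}_t\Vert}{lr}\cdot\frac{\rho r^2}{2}=\frac{\rho r}{2l}\Vert\boldsymbol{y}_t\Vert$, a relative error proportional to $r$. By \assum{grad-lip}, $\Vert\boldsymbol{H}\Vert\le l$, so every eigenvalue of $\boldsymbol{I}-\boldsymbol{H}/l$ lies in $[0,2]$.

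Next I decompose $\boldsymbol{y}_t$ along the eigenbasis of $\boldsymbol{H}$ in $\ker(\boldsymbol{A}\boldsymbol{X})$. Let $\boldsymbol{u}_1$ be the eigenvector with eigenvalue $\lambda_1<-\sqrt{\varepsilon}$, and split the spectrum into $S_{\mathrm{neg}}=\{\lambda_i\le-\sqrt{\varepsilon}/2\}$ and its complement. On the complement the contraction factor is at most $1+\frac{\sqrt{\varepsilon}}{2l}$; along $\boldsymbol{u}_1$ it is at least $1+\frac{\sqrt{\varepsilon}}{l}$.

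The key step, following Zhu et al./NEON2, is an induction on $t$. I will show that the $\boldsymbol{u}_1$-component $\alpha_t$ grows at least like $(1+\frac{7\sqrt{\varepsilon}}{8l})^t\alpha_0$, while the component outside $S_{\mathrm{neg}}$ grows at most like $(1+\frac{\sqrt{\varepsilon}}{2l})^t$ times its initial size plus an accumulated error. The error term has the form $\sum_s(\text{growth})^{t-s}\frac{\rho r}{2l}\Vert\boldsymbol{y}_s\Vert$. The choice of $r$, which contains the factor $\big(\frac{1+7\sqrt{\varepsilon}/(8l)}{2}\big)^{\mathcal{T}}$, is what keeps this error below an $O(\sqrt{\varepsilon}/l)$ fraction of $\alpha_t$ uniformly over $t\le\mathcal{T}$; here I would use $\Vert\boldsymbol{y}_s\Vert\le 2^s\Vert\boldsymbol{y}_0\Vert$ as a crude bound.

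For the initialization, $\boldsymbol{y}_0$ is uniform in a $k$-dimensional ball. Standard anti-concentration gives $|\alpha_0|\ge\delta_0\sqrt{\pi/k}\,\Vert\boldsymbol{y}_0\Vert$ with probability at least $1-\delta_0$, which is the source of $\sqrt{\pi\varepsilon/k}\,\delta_0$ in $r$. After $\mathcal{T}=\frac{8l}{\sqrt{\varepsilon}}\log(\cdot)$ steps, the ratio of growth rates $\big((1+\frac{7\sqrt{\varepsilon}}{8l})/(1+\frac{\sqrt{\varepsilon}}{2l})\big)^{\mathcal{T}}$ overcomes the factor $\sqrt{n/(\pi\varepsilon)}/\delta_0$. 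As a result, the mass of $\widehat{\boldsymbol{e}}$ outside $S_{\mathrm{neg}}$ is at most a small constant times $\sqrt{\varepsilon}/l$.

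The conclusion then follows by splitting the Rayleigh quotient:
\[
\widehat{\boldsymbol{e}}^\top\boldsymbol{H}\widehat{\boldsymbol{e}}\le-\tfrac{\sqrt{\varepsilon}}{2}\Vert\Pi_{\mathrm{neg}}\widehat{\boldsymbol{e}}\Vert^2+l\,\Vert\Pi_{\mathrm{neg}}^{\perp}\widehat{\boldsymbol{e}}\Vert^2\le-\sqrt{\varepsilon}/4,
\]
and $\widehat{\boldsymbol{e}}^\top\boldsymbol{H}\widehat{\boldsymbol{e}}=\widehat{\boldsymbol{e}}^\top\boldsymbol{X}\nabla^2 f\boldsymbol{X}\widehat{\boldsymbol{e}}$ because $\boldsymbol{P}\widehat{\boldsymbol{e}}=\widehat{\boldsymbol{e}}$.

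The main obstacle is the error-accumulation induction. The nonlinear normalization $r\boldsymbol{y}_t/\Vert\boldsymbol{y}_t\Vert$ makes $\boldsymbol{\xi}_t$ scale with $\Vert\boldsymbol{y}_t\Vert$, so that norm must be tracked jointly with the components. The hardest part is checking that the specific $r$ (note $r$ is exponentially small, which also ensures $r\le\gamma$) gives enough slack against the $(1+\frac{7\sqrt{\varepsilon}}{8l})$ versus $(1+\frac{\sqrt{\varepsilon}}{2l})$ gap.
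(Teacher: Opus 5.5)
Your plan is correct. It shares the paper's skeleton: the iterates stay in $\ker(\boldsymbol{A}\boldsymbol{X})$, the finite-difference error is proportional to $r\Vert\boldsymbol{y}_t\Vert$, the spectrum is split at $-\sqrt{\varepsilon}$ and $-\sqrt{\varepsilon}/2$, and $\boldsymbol{y}_0$ is anti-concentrated. The crude bound $\Vert\boldsymbol{y}_{t+1}\Vert\le 2\Vert\boldsymbol{y}_t\Vert$ is what forces the factor $\bigl(\frac{1+7\sqrt{\varepsilon}/(8l)}{2}\bigr)^{\mathcal{T}}$ into $r$, and the proof ends with a Rayleigh-quotient split. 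Where you differ is the error bookkeeping.

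The paper works multiplicatively, one step at a time. It proves by induction that the ratio $\Vert\boldsymbol{y}_{t,\parallel}\Vert/\Vert\boldsymbol{y}_t\Vert$ is at least $\alpha_0\bigl(\frac{1+7\sqrt{\varepsilon}/(8l)}{2}\bigr)^{t}$, so the per-step relative error on $\mathcal{S}_\parallel$ stays below $\sqrt{\varepsilon}/(8l)$. It treats $\mathcal{S}_{\perp'}$ by contradiction, because the error can be absorbed into a growth factor $1+5\sqrt{\varepsilon}/(8l)$ there only under the hypothesis $\Vert\boldsymbol{y}_{t,\perp'}\Vert/\Vert\boldsymbol{y}_t\Vert>\sqrt{\varepsilon}/(8l)$ for every $t$. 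As a result, the paper only shows that some unidentified $t_0\le\mathcal{T}$ gives a good direction, and \proc{nega-curvature} as written has no test to locate it.

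Your unrolled (variation-of-constants) sum instead compares the accumulated error outside $S_{\mathrm{neg}}$ directly with the $\boldsymbol{u}_1$-signal. The amplification there is at most $1+\sqrt{\varepsilon}/(2l)<2$, so the sum $\sum_{s<t}(1+\frac{\sqrt{\varepsilon}}{2l})^{t-1-s}2^{s}$ is geometric and your $O(\sqrt{\varepsilon}/l)$ claim holds. This certifies the last iterate $\boldsymbol{y}_{\mathcal{T}}$, which is the more useful statement. Your integral form of the remainder is also the rigorous version of the paper's mean-value step: a single $\xi$ need not exist for the vector-valued map $\nabla f$. It also saves a factor $2$. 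Tracking only $\boldsymbol{u}_1$ rather than $\mathcal{S}_\parallel$ costs nothing, since the anti-concentration bound has to cover $\dim\mathcal{S}_\parallel=1$ anyway.

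The one place your sketch needs care is the lower bound on the $\boldsymbol{u}_1$-component. Do not measure its accumulated error against the envelope $(1+\frac{7\sqrt{\varepsilon}}{8l})^t\alpha_0$. If $\lambda_1\approx-l$, the amplification $g_1:=1-\lambda_1/l$ is close to $2$, so $\sum_{s<t}g_1^{t-1-s}2^{s}\approx t\,2^{t-1}$. Relative to that envelope the error is then of order $\frac{\sqrt{\varepsilon}}{l}\,t$, which is of logarithmic size at $t=\mathcal{T}$.

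Compare instead with the true signal $g_1^t\alpha_0$. Write $2c:=1+\frac{7\sqrt{\varepsilon}}{8l}$; note $c<1$, since $\sqrt{\varepsilon}<|\lambda_1|\le l$. Use the identity $g_1^t-(2c)^t=(g_1-2c)\sum_{s<t}g_1^{t-1-s}(2c)^s$ together with $g_1-2c\ge\frac{\sqrt{\varepsilon}}{8l}$ and $c^{\mathcal{T}}2^{s}\le(2c)^{s}$. These show the accumulated error is at most half of $(g_1^t-(2c)^t)\alpha_0$, which gives $\alpha_t\ge(2c)^t\alpha_0$ directly. Alternatively, run the paper's per-step ratio induction for this one component.
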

\begin{proof}
We first show that $\boldsymbol{y}_{t} \in \ker(\boldsymbol{A}\boldsymbol{X})$ for all $t \geq 0$. We proceed by induction. The base case holds since $\boldsymbol{y}_{0} \in \ker(\boldsymbol{A}\boldsymbol{X})$ by initialization. Assume $\boldsymbol{y}_{t} \in \ker(\boldsymbol{A}\boldsymbol{X})$. The update rule gives:
\begin{equation*}
    \boldsymbol{A}\boldsymbol{X}\boldsymbol{y}_{t+1} = \boldsymbol{A}\boldsymbol{X}\boldsymbol{y}_{t} - \frac{\Vert\boldsymbol{y}_{t} \Vert}{lr} \boldsymbol{A}\boldsymbol{X}\boldsymbol{P}\left(\nabla f\left(\boldsymbol{X}\left(\boldsymbol{e}+\frac{r\boldsymbol{y}_{t}}{\Vert\boldsymbol{y}_{t} \Vert}\right)\right)-\nabla f(\boldsymbol{x})\right).
\end{equation*}
Since $\boldsymbol{P}$ is the orthogonal projection onto $\ker(\boldsymbol{A}\boldsymbol{X})$, we have $\boldsymbol{A}\boldsymbol{X}\boldsymbol{P}=\boldsymbol{0}$. Thus, $\boldsymbol{A}\boldsymbol{X}\boldsymbol{y}_{t+1} = \boldsymbol{A}\boldsymbol{X}\boldsymbol{y}_{t} = \boldsymbol{0}$, meaning $\boldsymbol{y}_{t+1} \in \ker(\boldsymbol{A}\boldsymbol{X})$. Moreover, as long as $r<1$, the point $\boldsymbol{x}^\prime:=\boldsymbol{X}\left(\boldsymbol{e}+{r\boldsymbol{y}_{t}}/{\Vert\boldsymbol{y}_{t} \Vert}\right)$ satisfies $x^\prime_i = x_i\left(1+{r{y}_{{t},i}}/{\Vert\boldsymbol{y}_{t} \Vert}\right)>0$. We also have $\boldsymbol{A}\boldsymbol{x}^\prime = \boldsymbol{A}\boldsymbol{x}+\boldsymbol{A}\boldsymbol{X}{r\boldsymbol{y}_{t}}/{\Vert\boldsymbol{y}_{t} \Vert}=\boldsymbol{b}$. Therefore, $\boldsymbol{x}^\prime\in \Omega^{\circ}$.

Let us define the approximation error $\Delta$ as
\label{append:pf-nega-curva}
\begin{equation}
    \Delta:=\frac{\Vert\boldsymbol{y}_{t} \Vert}{lr}\boldsymbol{P}\boldsymbol{X}\left(\nabla f(\boldsymbol{X}(\boldsymbol{e}+\frac{r\boldsymbol{y}_{t}}{\Vert\boldsymbol{y}_{t} \Vert}))-\nabla f(\boldsymbol{x})-\nabla^2 f(\boldsymbol{x}) \boldsymbol{X}\frac{r\boldsymbol{y}_{t}}{\Vert\boldsymbol{y}_{t} \Vert}\right).
\end{equation}
With this definition, the update rule can be rewritten as
\begin{align}
    \boldsymbol{y}_{t+1} = \boldsymbol{y}_{t}-(\Delta+\frac{1}{l}\boldsymbol{P}\boldsymbol{X}\nabla^2 f(\boldsymbol{x}) \boldsymbol{X}\boldsymbol{y}_{t})
    =\left(I-\frac{1}{l}\boldsymbol{P}\boldsymbol{X}\nabla^2 f(\boldsymbol{x})\boldsymbol{X}\right) \boldsymbol{y}_{t} - \Delta.
\end{align}
By Taylor's theorem, there exists $\xi \in [0,1]$ such that the norm of $\Delta$ can be bounded using \assum{hessian-lip}
\begin{equation}
\begin{aligned}
    \frac{\Vert\Delta \Vert}{\Vert\boldsymbol{y}_{t} \Vert} &= \frac{1}{lr} \left\Vert \boldsymbol{P}\boldsymbol{X}\left(\nabla f(\boldsymbol{X}(\boldsymbol{e}+\frac{r\boldsymbol{y}_{t}}{\Vert\boldsymbol{y}_{t} \Vert}))-\nabla f(\boldsymbol{x})-\nabla^2 f(\boldsymbol{x}) \boldsymbol{X}\frac{r\boldsymbol{y}_{t}}{\Vert\boldsymbol{y}_{t} \Vert}\right)\right\Vert \\
    &\leq \frac{1}{lr} \left\Vert \boldsymbol{X}\left(\nabla^2 f(\boldsymbol{X}(\boldsymbol{e}+\frac{\xi r\boldsymbol{y}_{t}}{\Vert\boldsymbol{y}_{t} \Vert}))\frac{r\boldsymbol{X}\boldsymbol{y}_{t}}{\Vert\boldsymbol{y}_{t} \Vert}-\nabla^2 f(\boldsymbol{x}) \boldsymbol{X}\frac{r\boldsymbol{y}_{t}}{\Vert\boldsymbol{y}_{t} \Vert}\right)\right\Vert \\
    &\leq \frac{1}{l} \left\Vert \left(\boldsymbol{X}\nabla^2 f(\boldsymbol{X}(\boldsymbol{e}+\frac{\xi r\boldsymbol{y}_{t}}{\Vert\boldsymbol{y}_{t} \Vert}))\boldsymbol{X}-\boldsymbol{X}\nabla^2 f(\boldsymbol{x}) \boldsymbol{X}\right)\right\Vert \\
    & \leq \frac{\rho}{l} \left\Vert \frac{\xi r\boldsymbol{y}_{t}}{\Vert\boldsymbol{y}_{t} \Vert} \right\Vert \\
    & \leq \frac{\rho r}{l}.
\end{aligned}
\end{equation}

The eigenvectors of the projected scaled Hessian $\boldsymbol{P}\boldsymbol{X}\nabla^2 f(\boldsymbol{x})\boldsymbol{X}\boldsymbol{P}$ restricted to $\ker(\boldsymbol{A}\boldsymbol{X})$ form an orthogonal basis. Let $\lambda_1 \leq \lambda_2 \leq \cdots \leq \lambda_k$ be its eigenvalues, with corresponding eigenvectors $\boldsymbol{u}_1,\ldots,\boldsymbol{u}_k$. By assumption, there exist indices $p$ and $p^{\prime}$ such that
\begin{equation}
    \lambda_p \leq -\sqrt{\varepsilon} \leq \lambda_{p+1}, \quad \text{and} \quad \lambda_{p^\prime} \leq -\sqrt{\varepsilon}/2 < \lambda_{p^\prime+1}.
\end{equation}
We define the subspaces $\mathcal{S}_{\parallel} = \mathrm{span}\{\boldsymbol{u}_1,\ldots,\boldsymbol{u}_p\}$ and $\mathcal{S}_{\perp} = \mathrm{span}\{\boldsymbol{u}_{p+1},\ldots,\boldsymbol{u}_k\}$. Similarly, we define $\mathcal{S}_{\parallel^{\prime}} = \mathrm{span}\{\boldsymbol{u}_1,\ldots,\boldsymbol{u}_{p^{\prime}}\}$ and $\mathcal{S}_{\perp^{\prime}} = \mathrm{span}\{\boldsymbol{u}_{p^{\prime}+1},\ldots,\boldsymbol{u}_k\}$. 

Before proceeding, we establish the properties of the initial vector $\boldsymbol{y}^0$. Since $\boldsymbol{y}^0$ is initialized uniformly at random from the unit sphere in the $k$-dimensional subspace $\ker(\boldsymbol{A}\boldsymbol{X})$, standard results on random projections guarantee that its projection onto the subspace $\mathcal{S}_{\parallel}$ (which has dimension $p \geq 1$) satisfies
\begin{equation}
    \mathbb{P}\left( \frac{\Vert \boldsymbol{y}_{0,\parallel} \Vert}{\Vert \boldsymbol{y}_{0} \Vert} \geq \sqrt{\frac{\pi}{k}} \delta_0 \right) \geq 1-\delta_0.
\end{equation}
We condition on this high-probability event for the remainder of the proof and define $\alpha_0 :=\frac{\Vert \boldsymbol{y}_{0,\parallel} \Vert}{\Vert \boldsymbol{y}_{0} \Vert} \geq \sqrt{\frac{\pi}{k}} \delta_0$. 

Projecting $\boldsymbol{y}_{t+1}$ onto the subspace $\mathcal{S}_{\parallel}$ yields 
\begin{equation}
\begin{aligned}
    \Vert \boldsymbol{y}_{{t+1},{\parallel}} \Vert &\geq \left(1+\frac{\sqrt{ \varepsilon}}{l}\right) \Vert \boldsymbol{y}_{{t},{\parallel}} \Vert -\Vert \Delta \Vert\\
    &\geq \left(1+\frac{\sqrt{ \varepsilon}}{l} -\frac{\Vert \Delta \Vert}{\Vert \boldsymbol{y}_{t} \Vert} \frac{\Vert \boldsymbol{y}_{t} \Vert}{\Vert \boldsymbol{y}_{{t},{\parallel}} \Vert } \right) \Vert \boldsymbol{y}_{{t},{\parallel}} \Vert \\
    & \geq \left(1+\frac{\sqrt{ \varepsilon}}{l} -\frac{\rho r}{l} \frac{\Vert \boldsymbol{y}_{t} \Vert}{\Vert \boldsymbol{y}_{{t},{\parallel}} \Vert } \right) \Vert \boldsymbol{y}_{{t},{\parallel}} \Vert .
\end{aligned}
\end{equation}
Let $\alpha_t := \frac{\Vert \boldsymbol{y}_{t,\parallel} \Vert}{\Vert \boldsymbol{y}_{t} \Vert}$. The parameter $r$ is chosen such that $r \leq \frac{\alpha_0}{8\rho} \sqrt{\varepsilon} \left(\frac{1+7\sqrt{\varepsilon}/(8l)}{2}\right)^\mathcal{T}$, which ensures $\frac{\rho r}{l \alpha_0 \left(\frac{1+7\sqrt{\varepsilon}/(8l)}{2}\right)^\mathcal{T}} \leq \frac{\sqrt{\varepsilon}}{8l}$. We will prove by induction that $\alpha_t \geq \alpha_0 \left(\frac{1+7\sqrt{\varepsilon}/(8l)}{2}\right)^{t}$. Suppose this holds for step $t$, which implies $\frac{\rho r}{l \alpha_t} \leq \frac{\sqrt{\varepsilon}}{8l}$. Then at step $t+1$,
\begin{equation}
\begin{aligned}
\label{eq:alphat-1}
    \Vert \boldsymbol{y}_{{t+1},{\parallel}} \Vert
    & \geq \left(1+\frac{\sqrt{ \varepsilon}}{l} -\frac{\rho r}{l} \frac{\Vert \boldsymbol{y}_{t} \Vert}{\Vert \boldsymbol{y}_{{t},{\parallel}} \Vert } \right) \Vert \boldsymbol{y}_{{t},{\parallel}} \Vert, \\
    \frac{\Vert \boldsymbol{y}_{{t+1},{\parallel}} \Vert}{\Vert \boldsymbol{y}_{t+1} \Vert} &\geq \frac{\Vert \boldsymbol{y}_{{t},{\parallel}} \Vert}{\Vert \boldsymbol{y}_{t} \Vert} \frac{\Vert \boldsymbol{y}_{t} \Vert}{\Vert \boldsymbol{y}_{t+1} \Vert}\left(1+\frac{\sqrt{ \varepsilon}}{l} -\frac{\rho r}{l} \frac{\Vert \boldsymbol{y}_{t} \Vert}{\Vert \boldsymbol{y}_{{t},{\parallel}} \Vert } \right), \\
    \alpha_{t+1} &\geq \alpha_t \frac{\Vert \boldsymbol{y}_{t} \Vert}{\Vert \boldsymbol{y}_{t+1} \Vert}\left(1+\frac{\sqrt{\varepsilon}}{l} -\frac{\rho r}{l\alpha_t} \right). 
\end{aligned}
\end{equation}
Next, we establish an lower bound on the ratio $\Vert \boldsymbol{y}_{t} \Vert/{\Vert \boldsymbol{y}_{t+1} \Vert}$. Using the triangle inequality and \assum{grad-lip}, we have
\begin{equation}
\begin{aligned}
\label{eq:alphat-2}
    \Vert \boldsymbol{y}_{t+1} \Vert &\leq \Vert \boldsymbol{y}_{t} \Vert  \left( 1 +\frac{1}{lr} \left\Vert \boldsymbol{P}\boldsymbol{X}\left(\nabla f(\boldsymbol{X}(\boldsymbol{e}+\frac{r\boldsymbol{y}_{t}}{\Vert\boldsymbol{y}_{t} \Vert}))-\nabla f(\boldsymbol{x})\right)\right\Vert \right)  \\
    & \leq \Vert \boldsymbol{y}_{t} \Vert \left(1 + \frac{1}{lr} \cdot l r \right)\\
    & \leq 2\Vert \boldsymbol{y}_{t} \Vert.
\end{aligned}
\end{equation}
Combining \eqref{eq:alphat-1} and \eqref{eq:alphat-2}, we obtain
\begin{equation}
\begin{aligned}
    \alpha_{t+1} &\geq \alpha_t \frac{1+\frac{\sqrt{\varepsilon}}{l} -\frac{\rho r}{l\alpha_t}}{2}\geq \alpha_t \frac{1+\frac{7\sqrt{\varepsilon}}{8l}}{2}  \geq \alpha_0\left(\frac{1+7\sqrt {\varepsilon}/(8l)}{2}\right)^{t+1}.
\end{aligned}
\end{equation}
This completes the induction. Consequently, for all $1\leq t \leq \mathcal{T}$, we have $\frac{\rho r}{l\alpha_t} \leq \frac{\sqrt{\varepsilon}}{8l}$ and
\begin{equation}
    \Vert \boldsymbol{y}_{t,\parallel} \Vert \geq \left(1+\frac{7\sqrt{\varepsilon}}{8l} \right) \Vert \boldsymbol{y}_{t-1,\parallel} \Vert \geq \left(1+\frac{7\sqrt{\varepsilon}}{8l} \right)^t \Vert \boldsymbol{y}_{0,\parallel} \Vert.
\end{equation}

Let $\boldsymbol{y}_{t,\perp^{\prime}}$ denote the projection of $\boldsymbol{y}_{t}$ onto the subspace $\mathcal{S}_{\perp^{\prime}}$. We will show that there exists $1 \leq t_0 \leq \mathcal{T}$ such that $\Vert \boldsymbol{y}_{t_0,\perp^{\prime}} \Vert/\Vert \boldsymbol{y}_{t_0} \Vert \leq \frac{\sqrt{\varepsilon}}{8l}$. Suppose, for the sake of contradiction, that $\Vert \boldsymbol{y}_{t,\perp^{\prime}} \Vert/\Vert \boldsymbol{y}_{t} \Vert > \frac{\sqrt{\varepsilon}}{8l}$ holds for all $1 \leq t \leq \mathcal{T}$. Under this assumption, we have
\begin{equation}
\begin{aligned}
    \Vert \boldsymbol{y}_{{t+1},{\perp^{\prime}}} \Vert &\leq \left(1+\frac{\sqrt{\varepsilon}}{2l}\right) \Vert \boldsymbol{y}_{{t},{\perp^{\prime}}} \Vert + \Vert \Delta \Vert\\
    &\leq \left(1+\frac{\sqrt{\varepsilon}}{2l} +\frac{\Vert \Delta \Vert}{\Vert \boldsymbol{y}_{t} \Vert} \frac{\Vert \boldsymbol{y}_{t} \Vert}{\Vert \boldsymbol{y}_{{t},{\perp^{\prime}}} \Vert } \right) \Vert \boldsymbol{y}_{{t},{\perp^{\prime}}} \Vert \\
    & \leq \left(1+\frac{\sqrt{\varepsilon}}{2l} +\frac{\rho r}{l} \frac{\Vert \boldsymbol{y}_{t} \Vert}{\Vert \boldsymbol{y}_{{t},{\perp^{\prime}}} \Vert } \right) \Vert \boldsymbol{y}_{{t},{\perp^{\prime}}} \Vert \\
    & \leq \left(1+\frac{\sqrt{\varepsilon}}{2l} +\frac{\rho r}{l} \frac{8l}{\sqrt{\varepsilon}} \right) \Vert \boldsymbol{y}_{{t},{\perp^{\prime}}} \Vert \\
    & \leq \left(1+\frac{5\sqrt{\varepsilon}}{8l} \right) \Vert \boldsymbol{y}_{{t},{\perp^{\prime}}} \Vert .
\end{aligned}
\end{equation}
Combining this with the initial condition $\mathcal{P}\left( \Vert \boldsymbol{y}^{0}_{\parallel} \Vert / \Vert \boldsymbol{y}^{0} \Vert \geq \sqrt{\frac{\pi}{k}} \delta_0  \right) \geq 1-\delta_0 $  and $\mathcal{T} = \frac{8l}{\sqrt{\varepsilon}}\log \left( \frac{8l}{\delta_0} \sqrt{\frac{n}{\pi \varepsilon}}\right) $, we obtain
\begin{equation}
    \frac{\Vert \boldsymbol{y}_{{\mathcal{T}},{\perp^{\prime}}} \Vert}{\Vert \boldsymbol{y}_{{\mathcal{T}},{\parallel}} \Vert} \leq \frac{\Vert \boldsymbol{y}_{{0},{\perp^{\prime}}} \Vert\left(1+\frac{5\sqrt{\varepsilon}}{8l} \right)^\mathcal{T}}{\Vert \boldsymbol{y}_{{0},{\parallel}} \Vert \left(1+\frac{7\sqrt{\varepsilon}}{8l}  \right)^\mathcal{T}} 
    \leq \frac{1}{\delta_0} \sqrt{\frac{k}{\pi}} \left( \frac{1+\frac{5\sqrt{\varepsilon}}{8l} }{1+\frac{7\sqrt{\varepsilon}}{8l} } \right)^\mathcal{T} 
    \leq \frac{1}{\delta_0} \sqrt{\frac{n}{\pi}} \left( \frac{1+\frac{5\sqrt{\varepsilon}}{8l} }{1+\frac{7\sqrt{\varepsilon}}{8l} } \right)^\mathcal{T} \leq \frac{\sqrt{\varepsilon}}{8l}.
\end{equation}
This contradicts the assumption that $\Vert \boldsymbol{y}_{{t},{\perp^{\prime}}} \Vert/\Vert \boldsymbol{y}_{t} \Vert > \frac{\sqrt{\varepsilon}}{8l}$ for all $1 \leq t \leq \mathcal{T}$. Therefore, there exists $1 \leq t_0 \leq \mathcal{T}$ such that $\Vert \boldsymbol{y}_{{t_0},{\perp^{\prime}}} \Vert/\Vert \boldsymbol{y}_{t_0} \Vert \leq \frac{\sqrt{\varepsilon}}{8l}$. Let $\widehat{\boldsymbol{e}}:=\boldsymbol{y}_{t_0}/\Vert \boldsymbol{y}_{t_0} \Vert$ denotes the normalized vector. Then $\boldsymbol{A}\boldsymbol{X}\widehat{\boldsymbol{e}} = \boldsymbol{0}$, $\Vert \widehat{\boldsymbol{e}}_{\perp^{\prime}} \Vert \leq \frac{\sqrt{\varepsilon}}{8l}$, and $\Vert \widehat{\boldsymbol{e}}_{\parallel^\prime} \Vert^2 \geq 1-\frac{\varepsilon}{64l^2}$. Consequently,
\begin{equation}
    \widehat{\boldsymbol{e}}^\top \boldsymbol{X}\nabla^2f(\boldsymbol{x})\boldsymbol{X} \widehat{\boldsymbol{e}} = \widehat{\boldsymbol{e}}_{\perp^{\prime}}^\top \boldsymbol{X}\nabla^2f(\boldsymbol{x})\boldsymbol{X} \widehat{\boldsymbol{e}}_{\perp^{\prime}} + \widehat{\boldsymbol{e}}_{\parallel^\prime}^\top \boldsymbol{X}\nabla^2f(\boldsymbol{x})\boldsymbol{X} \widehat{\boldsymbol{e}}_{\parallel^\prime} \leq l \Vert \widehat{\boldsymbol{e}}_{\perp^{\prime}} \Vert^2 - \sqrt{\varepsilon}\Vert \widehat{\boldsymbol{e}}_{\parallel^{\prime}} \Vert^2 /2 \leq - \sqrt{\varepsilon}/4.
\end{equation}
This completes the proof of the proposition.
\end{proof}

\prop{nega-curva} shows that \proc{nega-curvature} successfully identifies a negative curvature direction with high probability. Once this direction is found, the next step is to use it to escape the current saddle point of the Lagrangian. \prop{func-decrease} demonstrates that taking a step along this negative curvature direction guarantees a sufficient decrease in the objective function value.
\begin{proposition}
\label{prop:func-decrease}
    Under \assum{grad-lip} and \assum{hessian-lip}, for any point $\boldsymbol{x}\in \Omega^\circ$, if there exists a vector $\widehat{\boldsymbol{e}}$ such that $\boldsymbol{A}\boldsymbol{X} \widehat{\boldsymbol{e}}=0  $ and $\widehat{\boldsymbol{e}}^\top \boldsymbol{X} \nabla^2f(\boldsymbol{x}) \boldsymbol{X} \widehat{\boldsymbol{e}} \leq -\sqrt{ \varepsilon}/4$, then the following holds:
    \begin{equation}
    f\left(\boldsymbol{X}\left(\boldsymbol{e}- \frac{\langle \nabla f(\boldsymbol{x}), \boldsymbol{X}\widehat{\boldsymbol{e}} \rangle }{\vert\langle \nabla f(\boldsymbol{x}), \boldsymbol{X} \widehat{\boldsymbol{e}} \rangle \vert} 
    \cdot \frac{3\sqrt{\varepsilon}}{ 8\rho}\widehat{\boldsymbol{e}}\right)\right)  
     \leq  f(\boldsymbol{x})
    - \frac{9\sqrt{\varepsilon^{3}}}{1024\rho^2}.
    \end{equation}
\end{proposition}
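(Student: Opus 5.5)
We apply the third-order upper bound implied by \assum{hessian-lip} along the scaled feasible direction $\boldsymbol{d} := -\sigma\,\tau\,\widehat{\boldsymbol{e}}$. Here $\tau := \frac{3\sqrt{\varepsilon}}{8\rho}$, and $\sigma\in\{\pm1\}$ is the sign of $\langle \nabla f(\boldsymbol{x}), \boldsymbol{X}\widehat{\boldsymbol{e}}\rangle$. If that inner product vanishes, we take either sign; the first-order term is then zero, and the bound below still holds.

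First, admissibility. Since $\boldsymbol{A}\boldsymbol{X}\widehat{\boldsymbol{e}}=\boldsymbol{0}$ and $\Vert \widehat{\boldsymbol{e}}\Vert = 1$, which we take from \prop{nega-curva}, where $\widehat{\boldsymbol{e}}$ is normalized, we have $\boldsymbol{A}\boldsymbol{X}\boldsymbol{d}=\boldsymbol{0}$ and $\Vert\boldsymbol{d}\Vert=\tau$. We need $\tau\le\gamma<1$ so that the Lipschitz assumptions apply. This holds for $\varepsilon$ small relative to $\rho$, and we impose it as a standing condition. It also gives $\boldsymbol{X}(\boldsymbol{e}+\boldsymbol{d})>\boldsymbol{0}$ and $\boldsymbol{A}\boldsymbol{X}(\boldsymbol{e}+\boldsymbol{d})=\boldsymbol{b}$, so the new point lies in $\Omega^\circ$.

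Next, the expansion. The bound
\[
f(\boldsymbol{X}(\boldsymbol{e}+\boldsymbol{d}))\le f(\boldsymbol{x})+\langle\boldsymbol{X}\nabla f(\boldsymbol{x}),\boldsymbol{d}\rangle+\tfrac12\boldsymbol{d}^\top\boldsymbol{X}\nabla^2 f(\boldsymbol{x})\boldsymbol{X}\boldsymbol{d}+\tfrac{\rho}{6}\Vert\boldsymbol{d}\Vert^3
\]
can be justified by integrating the Hessian Lipschitz bound along the segment $s\mapsto \boldsymbol{X}(\boldsymbol{e}+s\boldsymbol{d})$, $s\in[0,1]$, all of which lies in the admissible set. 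We then evaluate each term:
\[
\langle\boldsymbol{X}\nabla f(\boldsymbol{x}),\boldsymbol{d}\rangle=-\tau\,|\langle\nabla f(\boldsymbol{x}),\boldsymbol{X}\widehat{\boldsymbol{e}}\rangle|\le 0,
\]
\[
\tfrac12\boldsymbol{d}^\top\boldsymbol{X}\nabla^2 f(\boldsymbol{x})\boldsymbol{X}\boldsymbol{d}=\tfrac{\tau^2}{2}\,\widehat{\boldsymbol{e}}^\top\boldsymbol{X}\nabla^2f(\boldsymbol{x})\boldsymbol{X}\widehat{\boldsymbol{e}}\le-\tfrac{\tau^2\sqrt{\varepsilon}}{8},
\qquad
\tfrac{\rho}{6}\Vert\boldsymbol{d}\Vert^3=\tfrac{\rho\tau^3}{6}.
\]

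Finally, the arithmetic. With $\tau=3\sqrt\varepsilon/(8\rho)$,
\[
-\frac{\tau^2\sqrt\varepsilon}{8}+\frac{\rho\tau^3}{6}
=-\frac{9\varepsilon^{3/2}}{512\rho^2}+\frac{27\varepsilon^{3/2}}{3072\rho^2}
=-\frac{9\varepsilon^{3/2}}{1024\rho^2}.
\]
Dropping the nonpositive linear term then gives the claim.

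The main obstacle is not the algebra but the admissibility of the third-order bound. We must ensure the step length $\tau$ stays within the radius $\gamma$ of \assum{hessian-lip}. We must also ensure the Taylor expansion is legitimate along the whole segment, which uses twice differentiability on $\Omega^\circ$ and the convexity of the admissible ball within $\ker(\boldsymbol{A}\boldsymbol{X})$. I would state the requirement $3\sqrt\varepsilon/(8\rho)\le\gamma$ explicitly.
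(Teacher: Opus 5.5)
Your proof is correct and follows the paper's argument essentially verbatim: apply the third-order upper bound implied by \assum{hessian-lip} along $\boldsymbol{d}=-\sigma\tau\widehat{\boldsymbol{e}}$, drop the nonpositive linear term, and combine $\tfrac{\tau^2}{2}\cdot(-\sqrt{\varepsilon}/4)=-\tfrac{9\varepsilon^{3/2}}{512\rho^2}$ with $\tfrac{\rho}{6}\tau^3=\tfrac{9\varepsilon^{3/2}}{1024\rho^2}$. You make three side conditions explicit that the paper's proof uses tacitly but never states: $\Vert\widehat{\boldsymbol{e}}\Vert=1$, $3\sqrt{\varepsilon}/(8\rho)\le\gamma$ so the expansion is admissible, and the degenerate case $\langle\nabla f(\boldsymbol{x}),\boldsymbol{X}\widehat{\boldsymbol{e}}\rangle=0$ where the sign factor is undefined; recording them tightens the argument without changing it.
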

\begin{proof}
Under \assum{grad-lip} and \assum{hessian-lip}, the function admits the following third-order upper bound:
    \begin{equation}
\begin{aligned}
 &f\left(\boldsymbol{X}\left(\boldsymbol{e}- \frac{\langle \nabla f(\boldsymbol{x}), \boldsymbol{X}\widehat{\boldsymbol{e}} \rangle }{\vert\langle \nabla f(\boldsymbol{x}), \boldsymbol{X} \widehat{\boldsymbol{e}} \rangle \vert} 
 \cdot \frac{3\sqrt{\varepsilon}}{ 8\rho}\widehat{\boldsymbol{e}}\right)\right) \\
     \leq & f(\boldsymbol{x}) + \langle \boldsymbol{X}\nabla f(\boldsymbol{x}), - \frac{\langle \nabla f(\boldsymbol{x}), \boldsymbol{X}\widehat{\boldsymbol{e}} \rangle}{\vert\langle \nabla f(\boldsymbol{x}), \boldsymbol{X}\widehat{\boldsymbol{e}} \rangle \vert } \cdot \frac{3\sqrt{\varepsilon}}{ 8\rho} \widehat{\boldsymbol{e}} \rangle 
    + \frac{1}{2} \cdot \frac{9}{64} \cdot \frac{\varepsilon}{\rho^2} \widehat{\boldsymbol{e}}^\top \boldsymbol{X}\nabla^2f(\boldsymbol{x})\boldsymbol{X}\widehat{\boldsymbol{e}} + \frac{\rho }{6 } \cdot \frac{27}{512} \cdot\frac{\sqrt{\varepsilon^{3}}} {\rho^3} \\
     \leq & f(\boldsymbol{x})
    + \frac{9\varepsilon}{128\rho^2} \widehat{\boldsymbol{e}}^\top \boldsymbol{X}\nabla^2 f(\boldsymbol{x})\boldsymbol{X}\widehat{\boldsymbol{e}} + \frac{ 9\sqrt{\varepsilon^{3}}}{1024 \rho^2 } \\
     \leq & f(\boldsymbol{x})
    - \frac{9\sqrt{\varepsilon^{3}}}{512\rho^2} + \frac{ 9\sqrt{\varepsilon^{3}}}{1024 \rho^2 } \\ 
     \leq & f(\boldsymbol{x})
    - \frac{9\sqrt{\varepsilon^{3}}}{1024\rho^2},
\end{aligned}
\end{equation}
where the first inequality directly applies the third-order Taylor bound under \assum{hessian-lip}, and the third inequality follows from the negative curvature condition
\end{proof}

In summary, \prop{nega-curva} and \prop{func-decrease} provide a complete procedure for escaping saddle points of the the Lagrangian. When the algorithm reaches a first-order KKT point that does not satisfy the second-order condition, we can apply \proc{nega-curvature} to find a negative curvature direction. Updating the variable along this direction decreases the objective function by at least $\mathcal{O}(\varepsilon^{3/2})$. Since the objective function is bounded from below, this sufficient decrease ensures that the algorithm will not be trapped at first-order KKT point and will eventually converge to an approximate second-order KKT point.

\subsection{First-order IPTR with Negative Curvature Finding}

We now combine the negative curvature finding procedure with the existing first-order IPTR framework and prove that the resulting \algo{1st-order interior}, based solely on first-order information, finds an $(2\varepsilon,\sqrt{\varepsilon})$-KKT2 point within $\mathcal{O}(1/\varepsilon^2)$ iterations. In particular, when the first-order IPTR method reaches an approximate first-order KKT point that does not satisfy the second-order condition, the negative-curvature step guarantees a sufficient decrease in the objective value. This allows us to bound the number of iterations spent at such points, which in turn yields an upper bound on the total number of iterations.

\begin{algorithm}[!htb]
    \SetAlgoLined
    \caption{First-order IPTR with Negative Curvature Finding}
    \label{algo:1st-order interior}
    
    Initialize $\boldsymbol{x}_0$ as an approximate analytic center\;
    $T \leftarrow \max\left\{\frac{2048(f(\boldsymbol{x}_0)-f(\boldsymbol{x}^*)){\rho^2}}{{9\sqrt{\varepsilon^{3}}}}, \frac{16\left(f(\boldsymbol{x}_0)- f(\boldsymbol{x}^*) + (C_0-1)\varepsilon\right)(l+\varepsilon)}{\varepsilon^2}\right\}$ \;
    \For{$t=0,\ldots,T-1$}{
    Solve the following subproblem to obtain the solution $\boldsymbol{d}_{t}$:
       \begin{equation}
       \begin{aligned}
       \label{eq:lp-interior}
           \min\quad &\nabla \phi(\boldsymbol{x}_t)^\top \boldsymbol{X}_t \boldsymbol{d} \\
           \mathrm{s.t.}\quad &\boldsymbol{A}\boldsymbol{X}_t \boldsymbol{d} = 0,\ \Vert \boldsymbol{d} \Vert \leq \frac{\varepsilon}{l+2 \varepsilon}.
        \end{aligned}
       \end{equation}
       $\boldsymbol{x}_{t+1} \leftarrow \boldsymbol{x}_t + \boldsymbol{X}_t \boldsymbol{d}_{t}$\;
       \If{$\phi(\boldsymbol{x}_{t+1})-\phi(\boldsymbol{x}_t)>-\frac{\varepsilon^2}{4l+4\varepsilon}$}
       {
       $\widehat{\boldsymbol{e}}\leftarrow$ Negative Curvature Finding($\boldsymbol{x}_t$)\;
       $\boldsymbol{x}_{t+1} \leftarrow \boldsymbol{x}_t -  \frac{\langle \nabla f(\boldsymbol{x}_t), \boldsymbol{X}_t\widehat{\boldsymbol{e}} \rangle }{\vert\langle \nabla f(\boldsymbol{x}_t), \boldsymbol{X}_t \widehat{\boldsymbol{e}} \rangle \vert} 
    \cdot \frac{3\sqrt{\varepsilon}}{ 8\rho}\boldsymbol{X}_t\widehat{\boldsymbol{e}}$\;
    \If{$f(\boldsymbol{x}_{t+1})-f(\boldsymbol{x}_{t}) > - \frac{9\sqrt{\varepsilon^{3}}}{1024\rho^2}$ }
    {
    Return $\boldsymbol{x}_t$\;
    }
       }
    }
\end{algorithm}
At each iteration, \algo{1st-order interior} solves a linear programming subproblem of the form \eq{lp-interior}. As discussed in \eq{exact-d}, this subproblem admits a closed-form solution, which we denote by $\boldsymbol{d}_{t}$. The next iterate is then updated as $\boldsymbol{x}_{t+1} = \boldsymbol{x}_t + \boldsymbol{X}_t \boldsymbol{d}_{t}$. The following lemma formally characterizes the decrease of the potential function achieved by this step.

\begin{lemma}[{\cite[Theorem 2]{haeser2019optimality}}]
\label{lem:1st-interior-2eq}
    Under \assum{grad-lip} and \assum{hessian-lip}, For any $\varepsilon \in  (0, \min\{\gamma, 1\}]$,  either the following condition holds at iteration $t$:
    \begin{equation}
    \phi(\boldsymbol{x}_{t+1})-\phi(\boldsymbol{x}_t)\leq-\frac{\varepsilon^2}{4l+4\varepsilon}
    \end{equation}
    or 
    $
        \Vert \boldsymbol{X}_t \nabla f(\boldsymbol{x}_t) +\boldsymbol{X}_t\boldsymbol{A}^\top \boldsymbol{v}_{t} \Vert_\infty < 2\varepsilon \text{ and } \nabla f(\boldsymbol{x}_t)+\boldsymbol{A}^\top \boldsymbol{v}_{t}>0
    $ for some $\boldsymbol{v}_{t} \in \mathbb{R}^m$.
\end{lemma}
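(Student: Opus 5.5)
The plan is to rerun the argument of \prop{potential-decre} with the exact projector, i.e.\ with $\widetilde{\boldsymbol{d}}_t=\boldsymbol{d}_t$. Then the approximation error term $\lambda_t\boldsymbol{d}_t^\top(\boldsymbol{d}_t-\widetilde{\boldsymbol{d}}_t)$ vanishes identically, and Case 2.1 never arises. The trust-region radius is now $\beta=\varepsilon/(l+2\varepsilon)$. By the KKT conditions of the convex subproblem \eq{lp-interior}, there exist $\lambda_t\ge 0$ and $\boldsymbol{v}_t$ satisfying \eq{ness-suff}, with complementary slackness $\lambda_t(\Vert\boldsymbol{d}_t\Vert-\beta)=0$.

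\textbf{Step 1 (one-step bound).} Write $\boldsymbol{x}_{t+1}=\boldsymbol{X}_t(\boldsymbol{e}+\boldsymbol{d}_t)$. Apply the quadratic upper bound \eq{2nd-oder-upper} to $f$ along the feasible scaled direction $\boldsymbol{d}_t$; this is valid since $\beta\le\varepsilon\le\gamma$. Apply the logarithmic barrier lemma to $-\varepsilon\sum\ln x_i$, which costs $\varepsilon\beta^2/(2(1-\beta))\le\varepsilon\beta^2$ whenever $\beta\le 1/2$. Substituting \eq{ness-suff} and using $\boldsymbol{A}\boldsymbol{X}_t\boldsymbol{d}_t=\boldsymbol{0}$ gives
\[
\phi(\boldsymbol{x}_{t+1})-\phi(\boldsymbol{x}_t)\le -\lambda_t\Vert\boldsymbol{d}_t\Vert^2+\Big(\tfrac{l}{2}+\varepsilon\Big)\beta^2 .
\]

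\textbf{Step 2 (case split, as in \prop{potential-decre}).}
\begin{itemize}
\item If $\lambda_t=0$, then $\boldsymbol{X}_t(\nabla f(\boldsymbol{x}_t)+\boldsymbol{A}^\top\boldsymbol{v}_t)=\varepsilon\boldsymbol{e}$. This gives the $\infty$-norm bound $\varepsilon<2\varepsilon$ and strict positivity, because $\boldsymbol{X}_t\succ 0$.
\item If $\lambda_t>0$, then $\Vert\boldsymbol{d}_t\Vert=\beta$ and $\Vert p(\boldsymbol{x}_t,\boldsymbol{v}_t)\Vert=\lambda_t\beta$.
\begin{itemize}
\item If $\Vert p\Vert<\varepsilon$, the $\infty$-norm bound yields $0<\boldsymbol{X}_t(\nabla f+\boldsymbol{A}^\top\boldsymbol{v}_t)<2\varepsilon$ componentwise. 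The lower bound is strict because each coordinate of $p$ exceeds $-\varepsilon$, so we again obtain the KKT alternative with strict inequalities.
\item Otherwise $\lambda_t\beta\ge\varepsilon$, and the decrease is at most $-\varepsilon\beta+(l/2+\varepsilon)\beta^2$.
\end{itemize}
\end{itemize}

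\textbf{Step 3 (arithmetic and the main obstacle).} In the last subcase, plugging in $\beta=\varepsilon/(l+2\varepsilon)$ gives
\[
-\varepsilon\beta+\tfrac{1}{2}(l+2\varepsilon)\beta^2=-\frac{\varepsilon^2}{2(l+2\varepsilon)}.
\]
Since $2(l+2\varepsilon)\le 4l+4\varepsilon$ whenever $l\ge 0$, this is at most $-\varepsilon^2/(4l+4\varepsilon)$.

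The only delicate point is the barrier estimate. We need $\beta^2/(2(1-\beta))\le\beta^2$, i.e.\ $\beta\le 1/2$. This holds when $l\ge 0$, because $\varepsilon/(l+2\varepsilon)\le 1/2$.

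There is slack to absorb the constant exactly as in \cite{haeser2019optimality}. If one instead keeps the sharper factor $1/(2(1-\beta))$, the bound $-\varepsilon\beta+(l/2+\varepsilon/(2(1-\beta)))\beta^2$ is still comfortably below the target $-\varepsilon^2/(4l+4\varepsilon)$. I expect verifying this constant bookkeeping, rather than any structural step, to be the main nuisance. \assum{hessian-lip} is not actually needed for this lemma.
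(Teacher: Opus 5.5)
Your argument is correct and is the same one the paper relies on. The paper defers the proof to \cite{haeser2019optimality}, and its in-paper version is \prop{potential-decre} specialized to $\widetilde{\boldsymbol{d}}_t=\boldsymbol{d}_t$ (Cases 1, 2.2.1, 2.2.2), which is exactly your case split. Your bound $-\varepsilon^2/(2l+4\varepsilon)\le-\varepsilon^2/(4l+4\varepsilon)$ and the strict inequalities in the $\Vert p\Vert<\varepsilon$ subcase check out. Step 1 needs the barrier bound with $-\boldsymbol{e}^\top\boldsymbol{d}$, which is what you implicitly used; the printed lemma has $+\boldsymbol{e}^\top\boldsymbol{d}$, a sign typo.

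One justification should be fixed. The inequality $\beta=\varepsilon/(l+2\varepsilon)\le\varepsilon$ holds only when $l+2\varepsilon\ge 1$, so it does not by itself give $\Vert\boldsymbol{d}_t\Vert\le\gamma$. What you actually need is $\beta\le\gamma$, which is an implicit requirement of the statement. It holds when $l\ge 1$, which you may assume since enlarging $l$ preserves \assum{grad-lip}. It also holds when $\gamma\ge\tfrac12$, since $\beta\le\tfrac12$ always. \prop{potential-decre} sidesteps this by using $\beta=\varepsilon/(l+2\varepsilon+2)\le\varepsilon/2$.
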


As established in \lem{1st-interior-2eq}, each iteration either yields a sufficient decrease in the potential function $\phi$ or identifies a $2\varepsilon$-KKT point. In the latter case, if the iterate is not an approximate second-order KKT point, \prop{nega-curva} and \prop{func-decrease} guarantee a further decrease in the objective $f$ of at least $\frac{9\sqrt{\varepsilon^{3}}}{1024\rho^2}$ via the negative curvature step. Based on these guaranteed decreases, \thm{1st-interior} establishes an upper bound on the total number of iterations required to find an approximate second-order KKT point.

\begin{theorem}
\label{thm:1st-interior}
Under \assum{grad-lip} and \assum{hessian-lip}, for any $\varepsilon \in (0, \min\{\beta, 1\}]$ and any $0 < \delta \leq 1$, the \algo{1st-order interior}  produces at least $T/4$ iterates $\boldsymbol{x}_t$ that are  $(2\varepsilon,\sqrt{\varepsilon})$-KKT2 points within
\begin{equation}
    \widetilde{\mathcal{O}}\left(\frac{l \rho^2(f(\boldsymbol{x}_0)-f(\boldsymbol{x}^*))}{\varepsilon^2} \log n\right) 
\end{equation}
gradient queries, with probability at least $1 - \delta$.
\end{theorem}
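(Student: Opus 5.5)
The plan is a counting argument over the iterations of \algo{1st-order interior}. Each iteration falls into exactly one of three types. \textbf{Type A:} the trust-region step decreases $\phi$ by at least $\varepsilon^2/(4l+4\varepsilon)$. \textbf{Type B:} the trust-region step fails, so the iterate is a $2\varepsilon$-KKT point by \lem{1st-interior-2eq}, and the negative-curvature step then decreases $f$ by at least $9\varepsilon^{3/2}/(1024\rho^2)$. \textbf{Type C:} the iterate is a $2\varepsilon$-KKT point and the negative-curvature step fails to decrease $f$ enough.

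First I will show that every Type C iterate is a $(2\varepsilon,\sqrt{\varepsilon})$-KKT2 point with high probability. By \lem{1st-interior-2eq}, Conditions 1–3 of \prop{kkt2} hold with $\varepsilon_1=2\varepsilon$. Suppose Condition 4 fails, i.e.\ $\lambda_{\min}\{\boldsymbol{X}_t\nabla^2 f(\boldsymbol{x}_t)\boldsymbol{X}_t\}_{\boldsymbol{A}\boldsymbol{X}_t}<-\sqrt{\varepsilon}$. Then \prop{nega-curva} returns a valid $\widehat{\boldsymbol{e}}$ with probability at least $1-\delta_0$. Given such an $\widehat{\boldsymbol{e}}$, \prop{func-decrease} forces the decrease $9\varepsilon^{3/2}/(1024\rho^2)$, so the iteration would be of Type B, not Type C. I also need feasibility of the negative-curvature step: $\boldsymbol{A}\boldsymbol{X}_t\widehat{\boldsymbol{e}}=\boldsymbol{0}$, and $3\sqrt{\varepsilon}/(8\rho)<1$ keeps $\boldsymbol{x}_{t+1}>\boldsymbol{0}$. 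This needs $\varepsilon$ small relative to $\rho^2$ and $\gamma$, which I will add to the hypotheses if necessary.

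Second, I will bound the number of Type A and Type B steps. Along the run, $f$ is non-increasing up to the barrier slack. A Type A step decreases $\phi$ by at least $\varepsilon^2/(4l+4\varepsilon)$. For a Type B step, the barrier term changes by at most $\varepsilon$ times the log change. I will control this with the analytic-center bound \eq{initial}, exactly as in the proof of \thm{robust-complexity}, to telescope $\phi$ and $f$. The result is: the number of Type A steps is at most $4(f(\boldsymbol{x}_0)-f(\boldsymbol{x}^*)+(C_0-1)\varepsilon)(l+\varepsilon)/\varepsilon^2$, and the number of Type B steps is at most $1024\rho^2(f(\boldsymbol{x}_0)-f(\boldsymbol{x}^*))/(9\varepsilon^{3/2})$. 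Each of these is at most $T/4$ by the choice of $T$ (the factors $16$ and $2048$ in $T$ appear designed for exactly this). Hence among $T$ iterations at least $T/2\ge T/4$ are of Type C, i.e.\ KKT2 iterates. The algorithm as written returns at the first Type C iteration; I will interpret ``produces'' via this counting, which applies if the run continues rather than returning.

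\textbf{Main obstacle.} The delicate step is interleaving the two potentials: Type A controls $\phi$ while Type B controls $f$, and the barrier term $-\varepsilon\sum\ln x_i$ can increase during a negative-curvature step. I would bound that increase per step by roughly $\varepsilon\sqrt{n}\cdot 3\sqrt{\varepsilon}/(8\rho)$ via \lem{log-change}-type estimates. Failing that, I would rely on \eq{initial}, which gives a one-shot bound $\varepsilon C_0$ on the total log-barrier drift, and telescope $f$ directly using \eq{2nd-oder-upper}.

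Finally, for probability and complexity: take a union bound with $\delta_0=\delta/T$ over all calls to \proc{nega-curvature}. Each call costs $\mathcal{T}=\widetilde{\mathcal{O}}(l\varepsilon^{-1/2}\log(nT/\delta))$ gradient queries. Multiplying by the at most $T/4$ Type B steps (plus Type C calls), with $T=\mathcal{O}(\rho^2 l/\varepsilon^2)$ up to the $f$-gap, and adding the $T$ gradient queries of the trust-region steps, gives the stated $\widetilde{\mathcal{O}}(l\rho^2(f(\boldsymbol{x}_0)-f(\boldsymbol{x}^*))\varepsilon^{-2}\log n)$ bound.
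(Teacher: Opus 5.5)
\textbf{Gap: the interleaving of the two kinds of step is not controlled.} Your counting skeleton is the paper's: three types, $T_1$ from telescoping $\phi$ with \eq{initial}, $T_2$ from \prop{nega-curva} and \prop{func-decrease}, and $T=\max\{2T_2,4T_1\}$. You also correctly isolate the step this skeleton does not justify. A trust-region step decreases $\phi$ but can raise $f$, since $\langle \boldsymbol{X}_t\nabla f(\boldsymbol{x}_t),\boldsymbol{d}\rangle=\langle \boldsymbol{X}_t\nabla\phi(\boldsymbol{x}_t),\boldsymbol{d}\rangle+\varepsilon\boldsymbol{e}^\top\boldsymbol{d}$ and $\varepsilon\boldsymbol{e}^\top\boldsymbol{d}$ has no sign. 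A curvature step decreases $f$ but can raise the barrier, and hence $\phi$. The paper's proof does not handle this either: it counts type (i) iterations ``if all iterates belong to this category'' and type (ii) iterations by the $f$-decrease alone.

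Neither of your remedies closes the gap.
\begin{itemize}
\item Your per-step barrier bound $\varepsilon\sqrt{n}\cdot 3\sqrt{\varepsilon}/(8\rho)$ has the same order $\varepsilon^{3/2}$ as the guaranteed decrease $9\varepsilon^{3/2}/(1024\rho^2)$, but is larger by the factor $128\rho\sqrt{n}/3$. It cannot be absorbed.
\item Using that the failed trust-region test forces $\Vert\boldsymbol{P}_t\boldsymbol{X}_t\nabla\phi(\boldsymbol{x}_t)\Vert=\mathcal{O}(\varepsilon)$ does not help enough. The first-order change of $\phi$ along the curvature step is then $\mathcal{O}(\varepsilon^{3/2}/\rho)$, which beats the $\Theta(\varepsilon^{3/2}/\rho^2)$ gain unless $\rho$ is below an absolute constant. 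The reason is that the sign of the step is taken from $\nabla f$, not $\nabla\phi$.
\item \eq{initial} is one-sided. It says $-\sum_i\ln x_i$ never falls more than $C_0$ below its value at $\boldsymbol{x}_0$. That is exactly what converts $\phi$-decrease into $f$-decrease, but it gives no control on increases of the barrier, which is unbounded near the boundary. So it is not a bound on total drift.
\item Telescoping $f$ directly fails for the reason above: trust-region steps can raise $f$.
\end{itemize}
A clean repair is to take the sign of the curvature step from $\langle\boldsymbol{X}_t\nabla\phi(\boldsymbol{x}_t),\widehat{\boldsymbol{e}}\rangle$ and test decrease of $\phi$. Then the first-order term is nonpositive and the barrier adds only $\mathcal{O}(\varepsilon^2/\rho^2)$. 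For small $\varepsilon$, every Type B step then lowers $\phi$ by $\Omega(\varepsilon^{3/2}/\rho^2)$, and a single telescoping of $\phi$ with \eq{initial} bounds both counts at once. This does, however, modify \algo{1st-order interior}.

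\textbf{Smaller errors.} First, $T\ge 2T_2$ gives at most $T/2$ Type B steps, not $T/4$. The count is therefore $T-T/4-T/2=T/4$ Type C iterates, not $T/2$, though your stated conclusion survives. Second, your query count multiplies the per-call cost $\widetilde{\mathcal{O}}(l/\sqrt{\varepsilon})$ by $T/4$. Since $T\ge 4T_1$, this is in general of order $l^2(f(\boldsymbol{x}_0)-f(\boldsymbol{x}^*))/\varepsilon^{5/2}$, which exceeds the claimed bound. The paper instead multiplies by the number of curvature calls, $\mathcal{O}(T_2)=\mathcal{O}(\rho^2(f(\boldsymbol{x}_0)-f(\boldsymbol{x}^*))/\varepsilon^{3/2})$, which is what yields $\widetilde{\mathcal{O}}(l\rho^2(f(\boldsymbol{x}_0)-f(\boldsymbol{x}^*))/\varepsilon^2)$. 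Relatedly, under your reading where the run continues past Type C iterates, each of the at least $T/4$ Type C iterations also calls \proc{nega-curvature}, adding $\Omega(Tl/\sqrt{\varepsilon})$ queries. The query bound holds only for the algorithm as written, which returns at the first Type C iterate and so makes at most $T_2+1$ curvature calls. Your remark that $3\sqrt{\varepsilon}/(8\rho)\le\gamma$ must be imposed for the curvature step is correct, and the paper omits it too.
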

\begin{proof}
    During the iterations of \algo{1st-order interior}, the iterates $\boldsymbol{x}_t$ can be categorized into three types:
(i) points that are not $2\varepsilon$-KKT points;
(ii) points that satisfy the $2\varepsilon$-KKT conditions but not the $(2\varepsilon,\sqrt{\varepsilon})$-KKT2 conditions;
(iii) points that satisfy the $(2\varepsilon,\sqrt{\varepsilon})$-KKT2 conditions.

For points of the first type, the potential function decreases by at least $\frac{\varepsilon^2}{4l + 4\varepsilon}$. If all iterates belong to this category, then based on the initialization in \eq{initial}, we have
\begin{equation}
    f(\boldsymbol{x}_t)-f(\boldsymbol{x}_0) \leq -\frac{t\varepsilon^2}{4l + 4\varepsilon} + \varepsilon C_0.
\end{equation}
Consequently, the number of such iterations is bounded by $T_1 = \frac{\left(f(\boldsymbol{x}_0)- f(\boldsymbol{x}^*) + (C_0-1)\varepsilon\right)(4l+4\varepsilon)}{\varepsilon^2}$, where $f(\boldsymbol{x}^*)$ denotes the optimal value of $f$. Exceeding this bound would imply that $f(\boldsymbol{x}_t)$ falls below the optimal value, which is impossible.

For points of the second type, invoking \prop{nega-curva} with
\begin{equation}
    \delta_0 = \frac{{9\sqrt{\varepsilon^{3}}}}{1024(f(\boldsymbol{x}_0)-f(\boldsymbol{x}^*)){\rho^2}} \delta,
\end{equation}
with probability at least $1 - \delta_0$, \proc{nega-curvature} will find a negative curvature direction $\widehat{\boldsymbol{e}}$. According to \prop{func-decrease}, moving along this direction decreases the function value by at least $\frac{9\sqrt{\varepsilon^{3}}}{1024\rho^2}$. Therefore, the number of points satisfying the $2\varepsilon$-KKT conditions but not the $(2\varepsilon,\sqrt{\varepsilon})$-KKT2 conditions is bounded by $T_2 = \frac{1024(f(\boldsymbol{x}_0)-f(\boldsymbol{x}^*)){\rho^2}}{{9\sqrt{\varepsilon^{3}}}}$ with probability at least $1-\delta$. Combining both cases, define
\begin{equation}
    T = \max\left\{2T_2, 4T_1\right\} = \max\left\{\frac{2048(f(\boldsymbol{x}_0)-f(\boldsymbol{x}^*)){\rho^2}}{{9\sqrt{\varepsilon^{3}}}}, \frac{16\left(f(\boldsymbol{x}_0)- f(\boldsymbol{x}^*) + (C_0-1)\varepsilon\right)(l+\varepsilon)}{\varepsilon^2}\right\}.
\end{equation}
Under this bound, at most $T/4$ iterations correspond to points of the first type, and at most $T/2$ iterations correspond to points of the second type with probability at least $1-\delta$. Therefore, at least $T/4$ iterations correspond to points of the third type, i.e., $(2\varepsilon,\sqrt{\varepsilon})$-KKT2 points with probability at least $1-\delta$.

In terms of gradient queries, the number required by the first-order interior point method is bounded by $\mathcal{O}(T)=  \mathcal{O}(\frac{l\rho^2(f(\boldsymbol{x}_0)- f(\boldsymbol{x}^*)) }{\varepsilon^2})$. For the negative curvature steps, the total number of gradient queries depend on the number of negative curvature searches performed, where each search requires $\widetilde{\mathcal{O}}(\frac{l }{\sqrt{\varepsilon}})$ gradient queries. Consequently, the total number of gradient queries for points of the second type is bounded by 
\begin{equation}
    \frac{2048(f(\boldsymbol{x}_0)-f(\boldsymbol{x}^*)){\rho^2}}{{9\sqrt{\varepsilon^{3}}}}  \cdot \widetilde{\mathcal{O}}\left(\frac{l }{\sqrt{\varepsilon}}\right) = \widetilde{\mathcal{O}}\left(\frac{l \rho^2(f(\boldsymbol{x}_0)-f(\boldsymbol{x}^*))}{\varepsilon^2}  \right)
\end{equation} 
with probability at least $1 - \delta$. Therefore, the overall number of gradient queries is upper bounded by $\widetilde{\mathcal{O}}(\frac{l \rho^2(f(\boldsymbol{x}_0)-f(\boldsymbol{x}^*))}{\varepsilon^2}) $ also with probability at least $1-\delta$.
\end{proof}

Having established the iteration and gradient query complexities in \thm{1st-interior}, we now evaluate the overall time complexity of \algo{1st-order interior}. The total computational cost comes directly from solving the linear programming subproblem \eq{lp-interior} and the negative curvature finding procedure. The following proposition provides an upper bound on this overall time complexity.

\begin{proposition}
    The overall time complexity of \algo{1st-order interior} is uppper bounded by
        $\widetilde{\mathcal{O}}\left( {nm^{\omega-1}}/{\varepsilon^2}  \right)$.
\end{proposition}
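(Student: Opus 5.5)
We bound the two sources of cost separately: the exact IPTR steps and the negative-curvature searches. The iteration count is taken from \thm{1st-interior}, so it remains to bound the work per iteration.

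\textbf{IPTR steps.} Each outer iteration solves \eq{lp-interior} in closed form via \eq{exact-d}. Forming $\boldsymbol{A}\boldsymbol{X}_t^2\boldsymbol{A}^\top$ takes $\mathcal{O}(nm^{\omega-1})$ time and inverting it takes $\mathcal{O}(m^\omega)$. Applying $\boldsymbol{P}_t$ to $\boldsymbol{X}_t\nabla\phi(\boldsymbol{x}_t)$ then costs only matrix--vector products, $\mathcal{O}(mn)$. Since $m\le n$, each step costs $\mathcal{O}(nm^{\omega-1})$. The number of such steps is at most $T$ from \algo{1st-order interior}, and $T=\mathcal{O}(1/\varepsilon^2)$ because $\varepsilon^{-1.5}\le\varepsilon^{-2}$ for $\varepsilon\le 1$. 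This part therefore contributes $\mathcal{O}(nm^{\omega-1}/\varepsilon^2)$.

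\textbf{Negative-curvature searches.} \proc{nega-curvature} is called at a fixed point $\boldsymbol{x}_t$. The projector $\boldsymbol{P}=\boldsymbol{P}_t$ is the same one already factored for that iteration, so the $\mathcal{O}(nm^{\omega-1})$ factorization is reused rather than repeated.

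Each inner step of the procedure needs three things:
\begin{itemize}
\item one gradient evaluation, whose cost we treat as $\mathcal{O}(n)$ or absorb into the oracle count;
\item one application of $\boldsymbol{P}\boldsymbol{X}$, which costs $\mathcal{O}(mn)$ using the stored $(\boldsymbol{A}\boldsymbol{X}_t^2\boldsymbol{A}^\top)^{-1}$;
\item $\mathcal{O}(n)$ vector operations.
\end{itemize}
The initial sample in $\ker(\boldsymbol{A}\boldsymbol{X}_t)$ is obtained by projecting a Gaussian vector with $\boldsymbol{P}$, which also costs $\mathcal{O}(mn)$.

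One call runs $\mathcal{T}=\widetilde{\mathcal{O}}(l/\sqrt{\varepsilon})$ inner steps, so it costs $\widetilde{\mathcal{O}}(mn/\sqrt{\varepsilon})$. By the counting in the proof of \thm{1st-interior}, the procedure is invoked at most $\mathcal{O}(\rho^2/\varepsilon^{1.5})$ times. The total is $\widetilde{\mathcal{O}}(mn/\varepsilon^2)$, which is dominated by $nm^{\omega-1}/\varepsilon^2$ because $\omega\ge 2$.

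\textbf{Combining and the main obstacle.} Adding the two contributions, together with the $\widetilde{\mathcal{O}}(n^{1.5})$ cost of computing the analytic center (treated as lower order or initialization), gives $\widetilde{\mathcal{O}}(nm^{\omega-1}/\varepsilon^2)$.

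The step most likely to cause trouble is justifying that the negative-curvature calls do not each pay a fresh factorization. This requires storing the inverse computed in the IPTR step at iteration $t$ and noting that the procedure is evaluated at that same $\boldsymbol{x}_t$.

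Even if a fresh factorization were paid per call, the extra cost would be $\mathcal{O}(nm^{\omega-1}\rho^2/\varepsilon^{1.5})$. This is still within the claimed bound, so the conclusion holds either way.
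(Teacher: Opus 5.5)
Your proof is correct and follows the paper's argument. Both split the cost into exact IPTR steps at $\mathcal{O}(nm^{\omega-1})$ each over $\mathcal{O}(1/\varepsilon^2)$ iterations, plus $\mathcal{O}(\rho^2/\varepsilon^{1.5})$ calls to \proc{nega-curvature} with $\widetilde{\mathcal{O}}(mn/\sqrt{\varepsilon})$ inner work each, and note that $mn \le nm^{\omega-1}$.

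You add two refinements the paper does not use. First, you reuse the iteration-$t$ factorization inside \proc{nega-curvature*}; the paper pays a fresh $\mathcal{O}(nm^{\omega-1})$ per call, which adds an $\mathcal{O}(nm^{\omega-1}/\varepsilon^{1.5})$ term that is dominated anyway. Second, you apply $\boldsymbol{P}_t$ through $(\boldsymbol{A}\boldsymbol{X}_t^2\boldsymbol{A}^\top)^{-1}$ rather than through an explicit basis $\boldsymbol{Z}_t$ with $\boldsymbol{P}_t=\boldsymbol{Z}_t\boldsymbol{Z}_t^\top$, and this is what actually keeps each projection at $\mathcal{O}(mn)$.
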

\begin{proof}
At each iteration $t$, we need to solve a linear programming problem subject to the equality constraint $\boldsymbol{A}\boldsymbol{X}_t \boldsymbol{d} = \boldsymbol{0}$ and the ball constraint $\Vert \boldsymbol{d} \Vert \leq \frac{\varepsilon}{l + 2\varepsilon}$. This linear programming problem admits a closed-form solution as defined in \eq{exact-d}. 

Computing this solution requires the projection matrix $\boldsymbol{P}_t$ onto the null space of $\boldsymbol{A}\boldsymbol{X}_t$. To construct $\boldsymbol{P}_t$, we first compute the orthogonal basis $\boldsymbol{Z}_t$ of $\ker(\boldsymbol{A}\boldsymbol{X}_t)$. The time complexity of this step is $\mathcal{O}(n m^{\omega-1})$ due to the cost of rectangular matrix multiplication \cite{demmel2007fast}, where $\omega$ is the matrix-multiplication exponent. Once $\boldsymbol{Z}_t$ is obtained, the projection matrix can be computed as $\boldsymbol{P}_t = \boldsymbol{Z}_t \boldsymbol{Z}_t^\top$. Subsequently, the computation of $\boldsymbol{d}_{t}$ involves a matrix-vector multiplication, which incurs a cost of $\mathcal{O}(mn)$. Updating $\boldsymbol{x}_{t+1}$ requires multiplying a diagonal matrix with a vector, with a time complexity of $\mathcal{O}(n)$. Therefore, each iteration of solving the interior point trust region problem has an overall time complexity of $\mathcal{O}(n m^{\omega-1})$. According to \thm{1st-interior}, \eq{lp-interior} is invoked at most $ \frac{\left(f(\boldsymbol{x}_0)- f(\boldsymbol{x}^*) + (C_0-1)\varepsilon\right)(4l+4\varepsilon)}{\varepsilon^2}$ times. Consequently, the overall time complexity of solving \eq{lp-interior} is bounded by $\mathcal{O}(n m^{\omega-1}/\varepsilon^2)$.

Then we bound the time complexity for negative curvature finding. In \proc{nega-curvature}, the first step is to compute the projection matrix $\boldsymbol{P}$ onto $\ker(\boldsymbol{A}\boldsymbol{X})$, which requires $\mathcal{O}(n m^{\omega-1})$ time. In each subsequent iteration of \proc{nega-curvature}, updating $\boldsymbol{y}^t$ only involves a matrix-vector multiplication, which takes $\mathcal{O}(mn)$ time. The algorithm proceeds for $\mathcal{T} = \frac{8l}{\sqrt{\varepsilon}}\log \left( \frac{8l}{\delta_0} \sqrt{\frac{n}{\pi \varepsilon}}\right)$ steps in total. Hence, a single invocation of \proc{nega-curvature} incurs a time complexity of $\widetilde{\mathcal{O}}(n m^{\omega-1} + mn/\sqrt{\varepsilon})$. According to \thm{1st-interior}, \proc{nega-curvature} will be invoked $\frac{1024(f(\boldsymbol{x}_0)-f(\boldsymbol{x}^*)){\rho^2}}{{9\sqrt{\varepsilon^{3}}}}$ times with high probability. Therefore, the overall time complexity for calling \proc{nega-curvature} is $\widetilde{\mathcal{O}}(n m^{\omega-1}/\sqrt{\varepsilon^3} + mn/\varepsilon^2)$.

Combining both parts, the total time complexity is bounded by $\widetilde{\mathcal{O}}\left( {nm^{\omega-1}}/{\varepsilon^2}  \right)$, which completes the proof.
\end{proof}

\subsection{\Robustalgo First-order IPTR with Negative Curvature Finding}

To further reduce the overall time complexity, we combine the \robustalgo IPTR framework (\algo{robust-1st-order}) with the negative-curvature-finding procedure (\proc{nega-curvature}). The main idea is that the costly exact projection is needed only when the algorithm arrives at an approximate first-order KKT point and seeks a direction of negative curvature. Since \proc{nega-curvature} is invoked at most $\mathcal{O}(1/\varepsilon^{1.5})$ times, as established in \thm{1st-interior}, we may use the cheaper approximate projection  in most of the $\mathcal{O}(1/\varepsilon^2)$ iterations. This leads to \algo{robust-nega}, which guarantees convergence to a $(2\varepsilon,\sqrt{\varepsilon})$-KKT2 point while substantially reducing the overall time complexity.

\vspace{4mm}
\begin{algorithm}[!htb]
    \SetAlgoLined
    \caption{\Robustalgo First-order IPTR with Negative Curvature Finding}
    \label{algo:robust-nega}
    
    Initialize $\boldsymbol{x}_0$ as an approximate analytic center\;
    $T \leftarrow \max\left\{\frac{2048(f(\boldsymbol{x}_0)-f(\boldsymbol{x}^*)){\rho^2}}{{9\sqrt{\varepsilon^{3}}}}, \frac{4\left(f(\boldsymbol{x}_0)- f(\boldsymbol{x}^*) + (C_0-1)\varepsilon\right)(l+2\varepsilon+2)}{\varepsilon^2}\right\}$\;
    $\beta\leftarrow \varepsilon/(l+2\varepsilon+2), \delta_{\text{err}} \leftarrow \min(\varepsilon/(15L_\phi), \beta / (92 L_{\phi})) $ and $\overline{\boldsymbol{x}}_{0} \leftarrow \boldsymbol{x}_{0}$\;
    
    \For{$t=1,\ldots,T$}{
    Approximate the subproblem \eq{lp-interior} using the projection matrix
    \begin{equation}
        \boldsymbol{R}_t: =  \boldsymbol{I} - \boldsymbol{X}_{t}^{-1} \overline{\boldsymbol{X}}^2_{t} \boldsymbol{A}^\top 
    (\boldsymbol{A} \overline{\boldsymbol{X}}^2_{t} \boldsymbol{A}^\top)^{-1} \boldsymbol{A}\boldsymbol{X}_{t};
    \end{equation}
    \If{$\boldsymbol{R}_t \boldsymbol{X}_t \nabla \phi(\boldsymbol{x}_t)=\boldsymbol{0}$}{$\widetilde{\boldsymbol{d}}_{t} := \boldsymbol{0}$\;}
    \Else{
    \begin{equation}
    \widetilde{\boldsymbol{d}}_{t}:=- \beta \frac{\boldsymbol{R}_t \boldsymbol{X}_t \nabla \phi(\boldsymbol{x}_t)}{\Vert \boldsymbol{R}_t \boldsymbol{X}_t \nabla \phi(\boldsymbol{x}_t) \Vert};
    \end{equation}
    }

    $\boldsymbol{x}_{t+1} \leftarrow \boldsymbol{x}_t + \boldsymbol{X}_t \widetilde{\boldsymbol{d}}_{t}$\;
    
    \If{$\phi(\boldsymbol{x}_{t+1})-\phi(\boldsymbol{x}_{t}) >- \frac{\varepsilon^2}{2l + 4\varepsilon + 4} $}{
    $\widehat{\boldsymbol{e}}\leftarrow$ Negative Curvature Finding($\boldsymbol{x}_t$)\;
       $\boldsymbol{x}_{t+1} \leftarrow \boldsymbol{x}_t -  \frac{\langle \nabla f(\boldsymbol{x}_t), \boldsymbol{X}_t\widehat{\boldsymbol{e}} \rangle }{\vert\langle \nabla f(\boldsymbol{x}_t), \boldsymbol{X}_t \widehat{\boldsymbol{e}} \rangle \vert} 
    \cdot \frac{3\sqrt{\varepsilon}}{ 8\rho}\boldsymbol{X}_t\widehat{\boldsymbol{e}}$\;
    \If{$f(\boldsymbol{x}_{t+1})-f(\boldsymbol{x}_{t}) > - \frac{9\sqrt{\varepsilon^{3}}}{1024\rho^2}$ }
    {
    Return $\boldsymbol{x}_t$\;
    }
    }
    
    $\ln\overline{\boldsymbol{x}}_{t+1} = \mathtt{SelectVector}(\ln\overline{\boldsymbol{x}}_{t},\ln\boldsymbol{x}_{0},\ln\boldsymbol{x}_{1},\ldots,\ln\boldsymbol{x}_{t+1},\delta_{\text{err}}) $\;
    
    }
\end{algorithm}

\begin{theorem}
\label{thm:robust-nega}
Suppose that \assum{grad-lip}, \assum{hessian-lip}, and \assum{xphi} hold. For any $\varepsilon \in (0, \min\{\gamma, \tfrac{1}{2}\}]$ and $\delta\in[0,1]$, \algo{robust-nega} finds a $(2\varepsilon,\sqrt{\varepsilon})$-KKT2 point within $\mathcal{O}( {l\rho^2 (f(\boldsymbol{x}_0) - f(\boldsymbol{x}^*))}/{\varepsilon^2})$ gradient evaluations with probability $1-\delta$. Moreover, the overall time complexity of the algorithm is upper bounded by $\widetilde{\mathcal{O}}(nm^{\omega-1}/\varepsilon^{1.5} + {nm}/{\varepsilon^2} )$.
\end{theorem}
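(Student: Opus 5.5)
The proof combines the iteration-counting argument of \thm{1st-interior} with the per-step guarantee of \prop{potential-decre} and the runtime accounting of \prop{robust-time}.

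\textbf{Step 1: classify the iterations.} At each iteration, \lem{SelectVector}(i) maintains $e^{-\delta_{\text{err}}}\boldsymbol{x}_t\le\overline{\boldsymbol{x}}_t\le e^{\delta_{\text{err}}}\boldsymbol{x}_t$. This holds because every update, approximate or negative-curvature, satisfies $\Vert \ln\boldsymbol{x}_{t+1}-\ln\boldsymbol{x}_t\Vert\le 2\beta'$, where $\beta'=\max\{\beta,3\sqrt{\varepsilon}/(8\rho)\}$; this follows from \lem{log-change}, after checking that $3\sqrt{\varepsilon}/(8\rho)<1/2$ or shrinking it. Hence \prop{potential-decre} applies. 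Each iteration then falls into one of two cases.
\begin{enumerate}
\item[(a)] The potential drops by at least $\varepsilon^2/(2l+4\varepsilon+4)$.
\item[(b)] The test fails, and $\boldsymbol{x}_t$ is a $2\varepsilon$-KKT point in the sense of \prop{kkt1}.
\end{enumerate}
In case (b), \proc{nega-curvature} is called. If $\lambda_{\min}(\boldsymbol{X}_t\nabla^2 f\boldsymbol{X}_t)_{\boldsymbol{A}\boldsymbol{X}_t}<-\sqrt{\varepsilon}$, then \prop{nega-curva} (with failure probability $\delta_0$) and \prop{func-decrease} give a decrease of $f$ of at least $9\varepsilon^{1.5}/(1024\rho^2)$. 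Otherwise condition 4 of \prop{kkt2} holds, so $\boldsymbol{x}_t$ is a $(2\varepsilon,\sqrt{\varepsilon})$-KKT2 point. In particular, whenever the algorithm returns, it returns such a point. A return caused by a failure of the procedure is excluded by a union bound.

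\textbf{Step 2: count iterations.} I will follow the potential bookkeeping of \thm{1st-interior}, using the analytic-center bound \eq{initial}. One must check that a negative-curvature step does not increase $\phi$ by more than the barrier change. I would bound that change via the log-barrier lemma, which gives a loss of at most $\varepsilon\cdot O(\sqrt{\varepsilon}/\rho)$, and absorb it into the $f$-decrease for small $\varepsilon$. Alternatively, I would track $f$ and $\phi$ separately as in \thm{1st-interior}. This gives at most $T_1=O(l(f(\boldsymbol{x}_0)-f^*)/\varepsilon^2)$ type-(a) steps and $T_2=O(\rho^2(f(\boldsymbol{x}_0)-f^*)/\varepsilon^{1.5})$ negative-curvature calls. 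I choose $\delta_0=\delta/T_2$ so the union bound holds with probability $1-\delta$. Each call uses $\widetilde{O}(l/\sqrt{\varepsilon})$ gradients, so the total gradient count is $\widetilde{O}(l\rho^2(f(\boldsymbol{x}_0)-f^*)/\varepsilon^2)$.

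\textbf{Step 3: runtime.} The approximate iterations cost $\widetilde{O}(nm^{\omega-1}+nm/\varepsilon^2)$ by \prop{robust-time} with $T=O(1/\varepsilon^2)$. For this, $\delta_{\text{err}}=\Theta(\beta)$ must be checked against the step sizes. Each negative-curvature call costs $O(nm^{\omega-1})$ to form $\boldsymbol{P}$ exactly, plus $\widetilde{O}(mn/\sqrt{\varepsilon})$. Over $T_2$ calls this totals $\widetilde{O}(nm^{\omega-1}/\varepsilon^{1.5}+nm/\varepsilon^2)$.

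\textbf{Main obstacle.} The negative-curvature jump has size $\Theta(\sqrt{\varepsilon}/\rho)\gg\beta$, which breaks the $2\beta$-per-step hypothesis of \lem{SelectVector}. I would handle it by resetting $\overline{\boldsymbol{x}}_{t+1}=\boldsymbol{x}_{t+1}$ after each such step and recomputing $(\boldsymbol{A}\overline{\boldsymbol{X}}^2\boldsymbol{A}^\top)^{-1}$ from scratch. This costs $O(nm^{\omega-1})$ per call, which is already charged in Step 3. I would then restart the $\mathtt{SelectVector}$ stream, so the amortized analysis of \prop{robust-time} applies on each segment between resets. Summing over segments adds only the per-segment initial cost, again $O(nm^{\omega-1}T_2)$, which preserves the claimed bound.
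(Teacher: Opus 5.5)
Your skeleton is the paper's proof. You split steps by whether the potential test of \prop{potential-decre} passes or \proc{nega-curvature} is triggered, and you take $\delta_0=\delta/T_2$ for the union bound. You restart $\mathtt{SelectVector}$ after each negative-curvature step, with a fresh $(\boldsymbol{A}\overline{\boldsymbol{X}}^2\boldsymbol{A}^\top)^{-1}$ whose $O(nm^{\omega-1})$ cost is charged to the exact projection already computed inside \proc{nega-curvature}. From there the runtime accounting matches the paper. Your endgame is tidier than the paper's claim that at least $T/4$ iterates are KKT2 points: every return occurs at a $(2\varepsilon,\sqrt{\varepsilon})$-KKT2 point, and the step count forces a return before the loop ends.

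The step that would fail is the first option in Step 2, absorbing the barrier change into the $f$-decrease. For a negative-curvature step $\boldsymbol{d}=\pm\frac{3\sqrt{\varepsilon}}{8\rho}\widehat{\boldsymbol{e}}$, the barrier part of $\phi$ changes by $-\varepsilon\sum_i\ln(1+d_i)=-\varepsilon\,\boldsymbol{e}^\top\boldsymbol{d}+O(\varepsilon\Vert\boldsymbol{d}\Vert^2)$. The quadratic remainder is $O(\varepsilon^2/\rho^2)$ and is indeed negligible. The linear term, however, has no controlled sign, because the algorithm chooses the sign of the step from $\langle\nabla f(\boldsymbol{x}_t),\boldsymbol{X}_t\widehat{\boldsymbol{e}}\rangle$ rather than from $\nabla\phi$. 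Since $\widehat{\boldsymbol{e}}\in\ker(\boldsymbol{A}\boldsymbol{X}_t)$ need not be orthogonal to $\boldsymbol{e}$, this term can be up to $\frac{3\sqrt{n}}{8\rho}\varepsilon^{1.5}$. That is the same order $\varepsilon^{1.5}$ as the guaranteed gain $\frac{9\varepsilon^{1.5}}{1024\rho^2}$; even your $\sqrt{n}$-free estimate $\varepsilon\cdot O(\sqrt{\varepsilon}/\rho)$ is of that order. So shrinking $\varepsilon$ cannot absorb it; it would require $\rho\sqrt{n}$ to be small. Your fallback is to charge potential steps to $\phi$ and negative-curvature steps to $f$ separately. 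That is exactly what the paper does, both in \thm{1st-interior} and in this proof, so with it your argument stands on the same footing as the paper's. Be aware, though, that this bookkeeping tacitly assumes potential steps do not raise $f$ and negative-curvature steps do not raise $\phi$. Each can do so, through the term $\varepsilon\,\boldsymbol{e}^\top\boldsymbol{d}$, and neither your proposal nor the paper's proof controls that interaction.
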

\begin{proof}
The proof follows the structure of \thm{1st-interior}. During the iterations, each iterate falls into one of the following three categories:
(i) points that are not $2\varepsilon$-KKT points;
(ii) points that satisfy the $2\varepsilon$-KKT conditions but not the $(2\varepsilon,\sqrt{\varepsilon})$-KKT2 conditions;
(iii) points that satisfy the $(2\varepsilon,\sqrt{\varepsilon})$-KKT2 conditions.

For iterates of the first type, the potential function decreases by at least $\frac{\varepsilon^2}{2l+4\varepsilon+4}$ according to \prop{potential-decre}. Hence, the number of such iterations is at most $T_1 = \frac{\left(f(\boldsymbol{x}_0)- f(\boldsymbol{x}^*) + (C_0-1)\varepsilon\right)(2l+4\varepsilon+4)}{\varepsilon^2}$.

For points of the second type, set $\delta_0 = \frac{{9\sqrt{\varepsilon^{3}}}}{1024(f(\boldsymbol{x}_0)-f(\boldsymbol{x}^*)){\rho^2}} \delta$ in \prop{nega-curva} as in \thm{1st-interior}. With probability at least $1 - \delta_0$, \proc{nega-curvature} identifies a negative curvature direction $\widehat{\boldsymbol{e}}$. By \prop{func-decrease}, moving along this direction decreases the function value by at least $\frac{9\sqrt{\varepsilon^{3}}}{1024\rho^2}$. Thus, the number of iterations corresponding to the second type is bounded by $T_2 = \frac{1024(f(\boldsymbol{x}_0)-f(\boldsymbol{x}^*)){\rho^2}}{{9\sqrt{\varepsilon^{3}}}}$ with probability at least $1-\delta$, which is exactly the same to \thm{1st-interior}.

Combining both cases, define
\begin{equation}
    T = \max\left\{2T_2, 4T_1\right\} = \max\left\{\frac{2048(f(\boldsymbol{x}_0)-f(\boldsymbol{x}^*)){\rho^2}}{{9\sqrt{\varepsilon^{3}}}}, \frac{4\left(f(\boldsymbol{x}_0)- f(\boldsymbol{x}^*) + (C_0-1)\varepsilon\right)(l+2\varepsilon+2)}{\varepsilon^2}\right\}.
\end{equation}
Under this bound, at most $T/4$ iterations correspond to points of the first type, and at most $T/2$ iterations correspond to points of the second type with probability at least $1-\delta$. Therefore, at least $T/4$ iterations correspond to points of the third type, i.e., $(2\varepsilon,\sqrt{\varepsilon})$-KKT2 points with probability at least $1-\delta$.

For the time complexity analysis, each call to \proc{nega-curvature} requires computing the projection matrix $\boldsymbol{P}_t$ onto the null space of $\boldsymbol{AX}_t$, which incurs a cost of $\mathcal{O}(nm^{\omega-1})$. As shown earlier, the algorithm invokes \proc{nega-curvature} at most $\mathcal{O}(1/\varepsilon^{1.5})$ times. Therefore, the total time complexity contributed by all invocations of \proc{nega-curvature} is $\mathcal{O}(nm^{\omega-1}/\varepsilon^{1.5})$. 

After each invocation of \proc{nega-curvature}, once $\boldsymbol{P}_t$ has been computed, we restart the $\mathtt{SelectVector}$ procedure and recompute $\boldsymbol{R}_t$ based on the updated projection. By \prop{robust-time}, the additional computational cost for $T$ iterations of sparse updates is $\mathcal{O}(mnT)$. Since the total number of iterations between all \proc{nega-curvature} calls sums to $\mathcal{O}(1/\varepsilon^{2})$, the total cost of sparsely updating $\boldsymbol{R}_t$ and computing $\boldsymbol{d}_t$ is $\mathcal{O}(mn/\varepsilon^{2})$. Therefore, the overall time complexity is $\mathcal{O}(nm^{\omega-1}/\varepsilon^{1.5} +mn/\varepsilon^{2})$.
\end{proof}

\section{Numerical Experiments}
\label{sec:expri}

\subsection{Empirical convergence of first-order methods to KKT2 points}
\label{sec:small-expri}
We demonstrate our algorithm\footnote{The complete source code for all experiments in \sec{expri} is available at \href{https://github.com/Macondooo/Approximate-First-Order-IPTR-Algorithms}{GitHub repository}.} on a visualized example with $n=3$ and $m=1$. The feasible region is constrained to the plane $x_0+x_1+x_2=1$ in the first orthant. The objective function is designed as a quartic function with a bowl-shaped outer landscape, containing one saddle point and two local minima in its interior. The examples illustrate the  differences in convergence behavior between first-order and second-order algorithms with respect to KKT optimality conditions. 

\begin{figure}[!htb]
    \centering
    \includegraphics[width=0.95\linewidth]{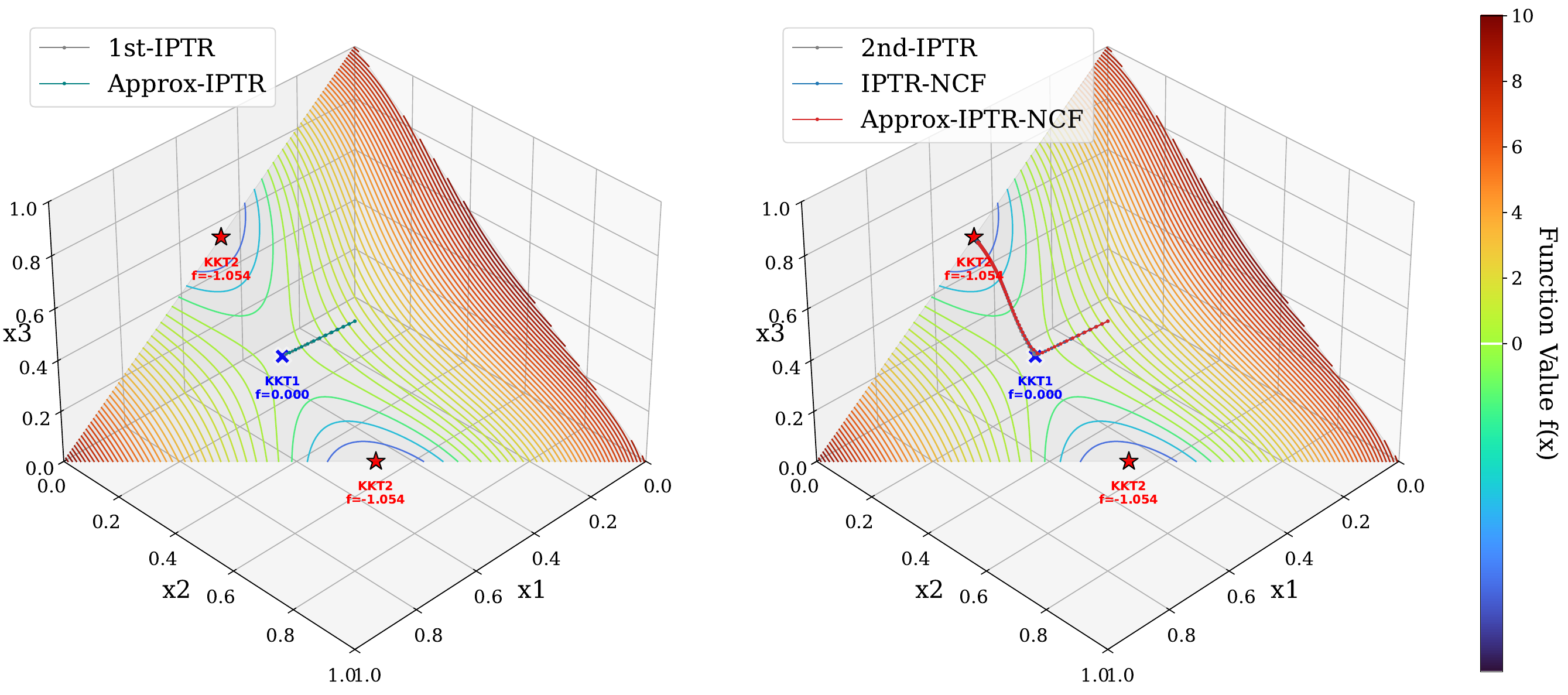}
    \caption{Iteration trajectories of IPTR-type algorithms on the constrained nonconvex problem $\min_{\boldsymbol{x}} f(\boldsymbol{x}) = 40 (x_0 - 0.5)^2 - 5 (x_1 - x_2)^2 + 4 (x_1 - x_2)^4, \text{ s.t. } x_0+x_1+x_2=1,\ \boldsymbol{x} \geq \boldsymbol{0}$. The left panel compares the first-order IPTR algorithm from \cite{haeser2019optimality} (1st-IPTR) with our \robustalgo first-order IPTR algorithm (\algo{robust-1st-order}, Approx-IPTR). The right panel compares the second-order IPTR algorithm from \cite{haeser2019optimality} (2nd-IPTR) with our first-order IPTR with negative curvature finding (\algo{1st-order interior}, IPTR-NCF) and its \robustalgo variant (\algo{robust-nega}, Approx-IPTR-NCF). Iteration points are plotted every 20 iterations.}
    \label{fig:n3m1-1}
\end{figure}

\begin{figure}[!htb]
    \centering
    \includegraphics[width=0.95\linewidth]{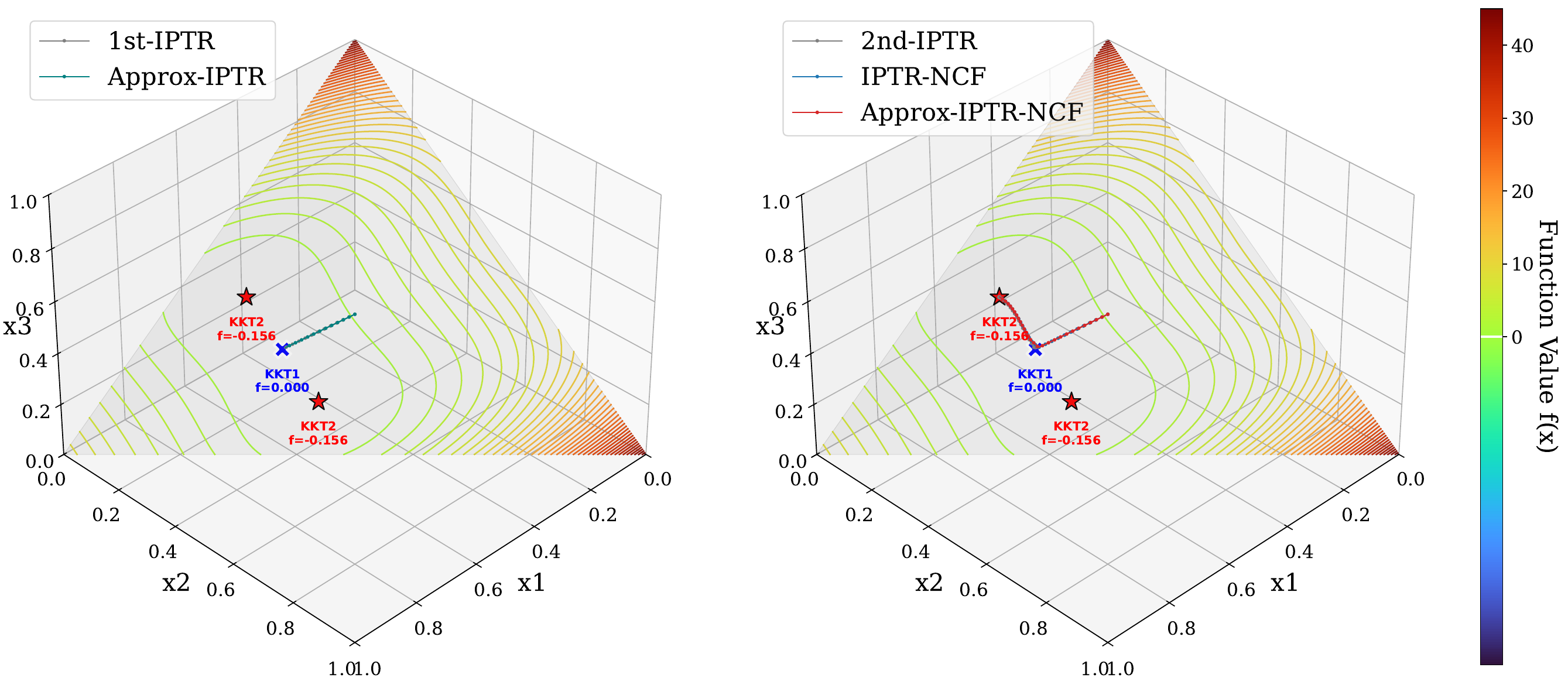}
    \caption{Iteration trajectories of IPTR-type algorithms on the constrained nonconvex problem $\min_{\boldsymbol{x}} f(x) = 40 (x_0 - 0.5)^2 - 5 (x_1 - x_2)^2 + 40 (x_1 - x_2)^4, \text{ s.t. } x_0+x_1+x_2=1,\ \boldsymbol{x} \geq \boldsymbol{0}$. The plot layout, curve styles, and performance measures follow the same conventions as in \fig{n3m1-1}.}
    \label{fig:n3m1-2}
\end{figure}

In both \fig{n3m1-1} and \fig{n3m1-2}, the exact first-order and second-order KKT points are marked on the optimization landscape. The \robustalgo and non-\robustalgo variants follow very similar trajectories, indicating that incorporating the \robustalgo acceleration mechanism does not affect the convergence behavior of the original algorithm.
In the left panel of \fig{n3m1-1}, both the first-order IPTR method of \cite{haeser2019optimality} and our \robustalgo first-order variant \algo{robust-1st-order} converge to a first-order KKT point but do not reach a second-order KKT point.
When a negative-curvature finding step is incorporated, as in \algo{1st-order interior} and \algo{robust-nega}, the algorithms are able to escape first-order KKT points and converge to second-order KKT points using only first-order information, as shown in the right panel.
\fig{n3m1-2} presents another example in which the second-order KKT point lies in the interior of the feasible region.  Despite this structural difference, the convergence behavior remains consistent with the previous example.

\subsection{Large-scale empirical evaluation}

To further assess the practical scalability of our algorithms, we conduct large-scale experiments with high-dimensional instances. The goal of this subsection is to empirically examine the time complexity and demonstrate the computational advantage of the \robustalgo variant in high-dimensional regimes. In the experiments, the stopping criteria for the IPTR algorithms exactly follow those specified in the algorithms. \algo{robust-1st-order} checks the decrease in the potential function, $\phi(\boldsymbol{x}_{t+1}) - \phi(\boldsymbol{x}_t)$, while \algo{1st-order interior} and \algo{robust-nega} check the decrease in the objective value, $f(\boldsymbol{x}_{t+1}) - f(\boldsymbol{x}_t)$ according to \prop{func-decrease}.

\begin{figure}[!h]
    \centering
    \includegraphics[width=\linewidth]{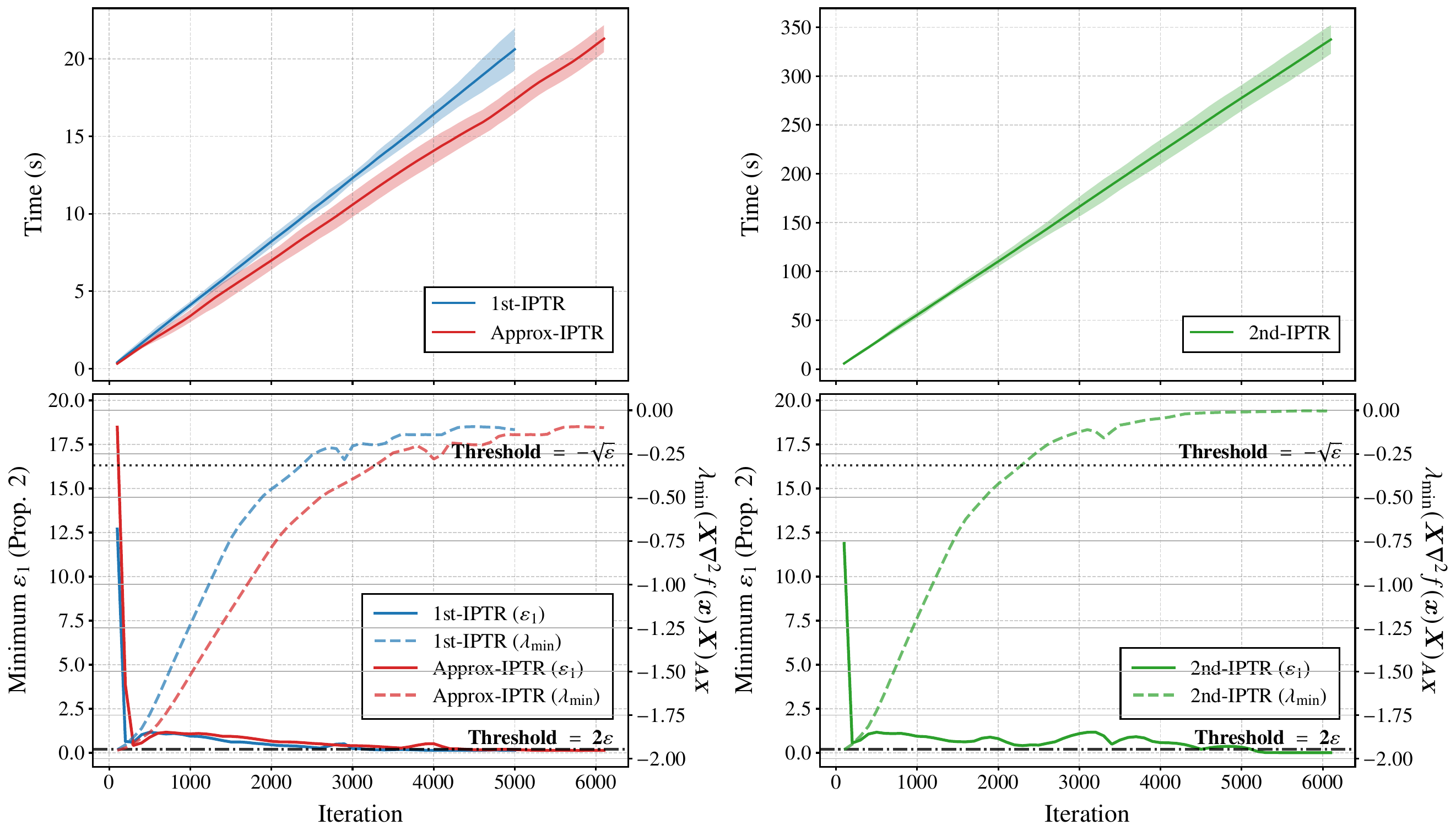}
    \caption{Empirical running time per iteration and convergence to approximate first- and second-order KKT points for an instance of \eq{exp_instance} with $n=1000$ and $m=500$. The left two plots compare the first-order IPTR method from \cite{haeser2019optimality} (1st-IPTR) with our \robustalgo first-order variant (\algo{robust-1st-order}, Approx-IPTR). The right two plots present results for the second-order IPTR method from \cite{haeser2019optimality} (2nd-IPTR). For each pair of plots, the upper panel illustrates the running time per iteration. Solid curves denote the mean over five independent runs and the shaded regions indicate $3 \times \text{std}$. The lower panel shows the convergence toward approximate first- and second-order KKT points. Solid curves correspond to the first-order optimality measure from \prop{kkt1} or \prop{kkt2}(1)--(3). The horizontal line at $2\varepsilon$ with $\varepsilon=0.1$ marks the threshold. Dashed curves correspond to the minimum-eigenvalue condition for second-order optimality from \prop{kkt2}(4), with threshold $-\sqrt{\varepsilon}$. All quantities are recorded every 100 iterations.
}
    \label{fig:IPTR-time-kkt1}
\end{figure}

We considers a quartic objective function consisting of a separable quartic term together with a quadratic component involving cross terms:
\begin{equation}
\label{eq:exp_instance}
    \min_{\boldsymbol{x}} f(\boldsymbol{x}):= \sum_i \frac{\sigma}{4}  x_i^4 + \frac{1}{2}\boldsymbol{x}^\top \boldsymbol{Q} \boldsymbol{x} + \boldsymbol{c}^\top \boldsymbol{x}, \text{ s.t. } \boldsymbol{A}\boldsymbol{x}=\boldsymbol{b}, \boldsymbol{x} \geq \boldsymbol{0}.
\end{equation}
The parameter $\sigma$ controls the relative influence of the quartic term. The constraint matrix $\boldsymbol{A} \in \mathbb{R}^{m\times n}$ is constructed by fixing its first row to be the normalized all-ones vector in order to ensure that the feasible 
region is bounded. The remaining rows are generated randomly with entries drawn from the interval $(0,1)$, while ensuring that $\boldsymbol{A}$ has full row rank. The right-hand side vector $\boldsymbol{b}$ is defined as $\boldsymbol{b}:=\boldsymbol{A}\boldsymbol{x}_0$, where $\boldsymbol{x}_0$ is a strictly feasible interior point. In our experiments, $\boldsymbol{x}_0$ is used as an approximate analytic center of the feasible set. 
The quadratic matrix $\boldsymbol{Q}$ is designed to be nonconvex on the null space of $\boldsymbol{A}$. The linear term $\boldsymbol{c}$ is chosen such that the constrained problem admits at least one stationary saddle point. This construction allows us to systematically generate nonconvex quartic objectives with linear equality constraints and controlled saddle-point geometry for evaluating algorithmic performance. 

\begin{figure}[p]
    \centering
    \includegraphics[width=\linewidth]{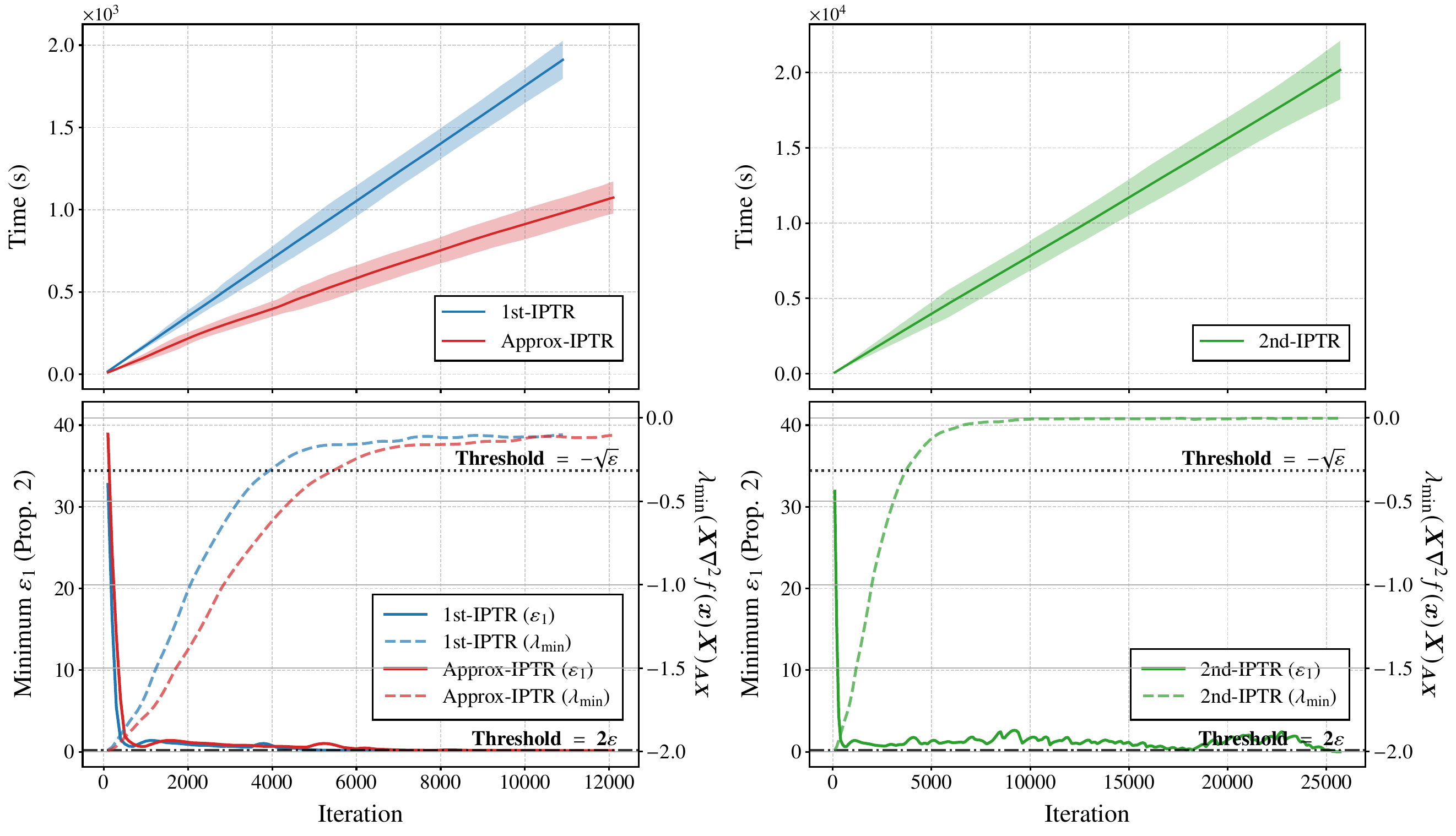}
    \caption{Empirical running time per iteration and convergence to approximate first- and second-order KKT points for an instance of \eq{exp_instance} with $n=3000$ and $m=2000$. The plot layout, curve styles, and performance measures follow the same conventions as in \fig{IPTR-time-kkt1}.}
    \label{fig:IPTR-time-kkt2}
\end{figure}
\begin{figure}[p]
    \centering
    \includegraphics[width=\linewidth]{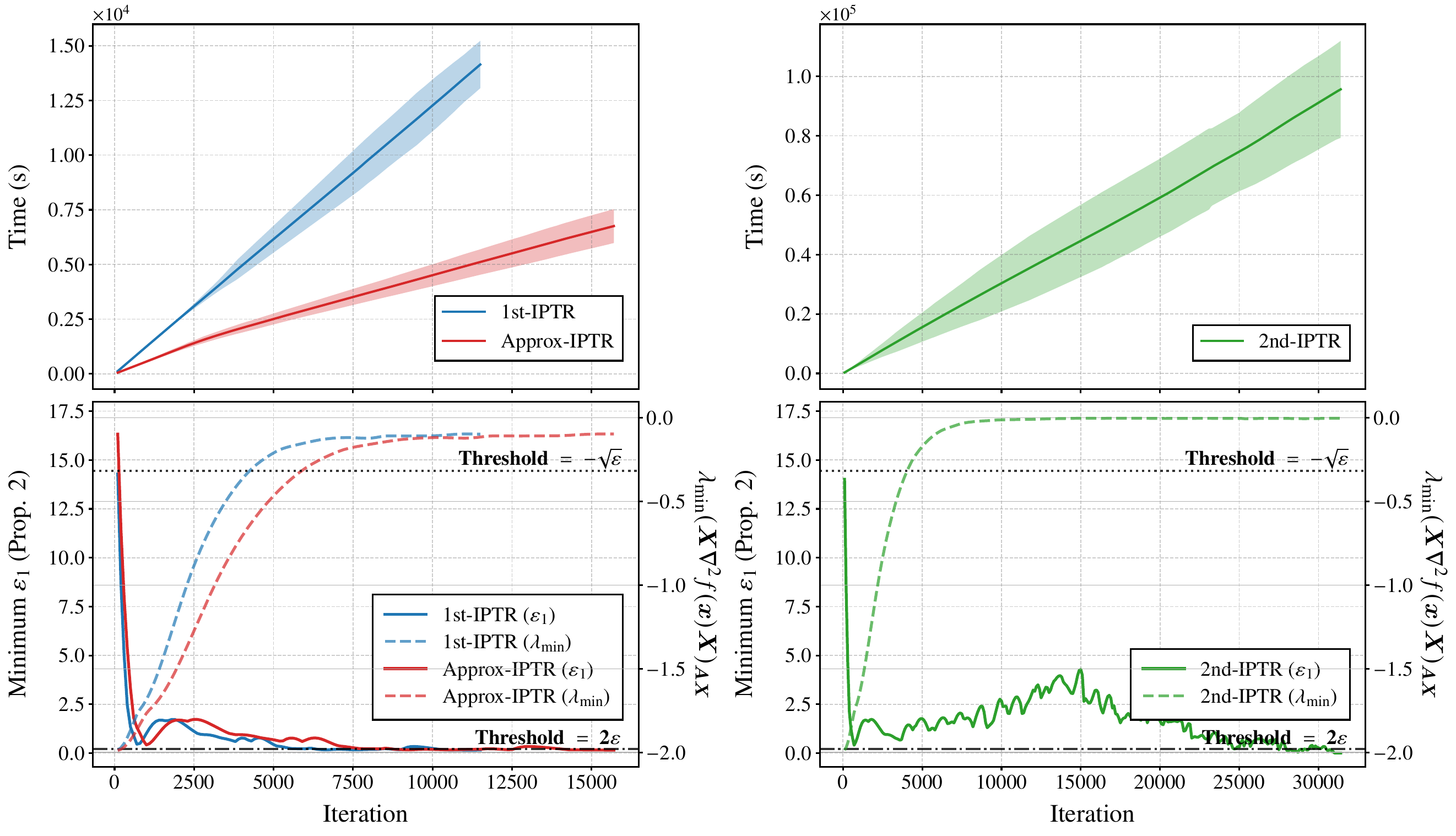}
    \caption{Empirical running time per iteration and convergence to approximate first- and second-order KKT points for an instance of \eq{exp_instance} with $n=5000$ and $m=4000$. The plot layout, curve styles, and performance measures follow the same conventions as in \fig{IPTR-time-kkt1}.}
    \label{fig:IPTR-time-kkt3}
\end{figure}

To demonstrate the speedup potential on large-scale problems, we evaluate three instances of varying sizes: $(n,m)=(1000,500)$, $(3000,2000)$ and $(5000,4000)$, as depicted in \fig{IPTR-time-kkt1}, \fig{IPTR-time-kkt2}, and \fig{IPTR-time-kkt3}, respectively. In these settings, the \robustalgo first-order IPTR algorithm (\algo{robust-1st-order}) requires significantly less computational time compared to the first-order IPTR method of \cite{haeser2019optimality}. This improvement stems from maintaining the projection via structured low-rank updates rather than explicitly recomputing it at each iteration, thereby substantially reducing the per-iteration complexity. As detailed in \tab{speedup_ratio}, the per-iteration speedup of \algo{robust-1st-order} scales favorably with problem dimensions. Specifically, the speedup increases from $1.18\times$ to $1.78\times$ and $2.48\times$ as the problem size $(n,m)$ grows from $(1000,500)$, $(3000,2000)$ and $(5000,4000)$. This demonstrates that our method is highly scalable and particularly well-suited for large-scale optimization tasks. 
For the first-order algorithms with negative curvature finding subroutine, namely \algo{1st-order interior} and \algo{robust-nega}, we also ran them in our experiments. In the large-scale instances considered here, the first-order iterations already approach approximate second-order KKT points near the boundary before the negative curvature routine is triggered. As a result, the negative curvature routine is invoked only for a few iterations near termination. This observation may be attributed to the high-dimensional structure of the problem, as the iterates rapidly approach the boundary of the feasible region and then progress along the boundary, ultimately settling at a second-order KKT point. Before that, their iterates coincide with those of the corresponding first-order methods. Hence we do not plot them separately in the figures.

\begin{table}[!htb]
    \centering
    \caption{Per-iteration speedup of Approx-IPTR relative to 1st-IPTR across different problem scales.}
    \label{tab:speedup_ratio}
    \renewcommand{\arraystretch}{1.15}
    \setlength{\tabcolsep}{12pt}
    \begin{tabular}{@{}cc@{}}
        \toprule
        Problem Size $(n,m)$ & Speedup vs.\ 1st-IPTR \\
        \midrule
        $(1000,500)$  & $1.18\times$ \\
        $(3000,2000)$ & $1.78\times$ \\
        $(5000,4000)$ & $2.48\times$ \\
        \bottomrule
    \end{tabular}
\end{table}

The second-order IPTR algorithm of \cite{haeser2019optimality} requires more iterations and longer running time than all first-order variants. In our implementation, the quadratic programming subproblem arising at each iteration is solved using the method of \cite{ye1998complexity}. Solving this QP subproblem is substantially more expensive than solving the linearized trust-region subproblem in the first-order framework, which admits a closed-form solution. This highlights the computational advantage of first-order methods, and in particular the benefit of our \algo{1st-order interior} and \algo{robust-nega}, which attain approximate second-order KKT points while retaining the lower per-iteration cost of a first-order scheme.

In summary, the large-scale experiments indicate that our approximate first-order IPTR algorithm, \algo{robust-1st-order}, is well suited for large-scale problem instances. Its advantage over the existing first-order IPTR method becomes more substantial as the problem dimension increases, showing the benefit of the proposed approximate update scheme in the large-scale regime. 
The same approximate update scheme can also be incorporated into a first-order method equipped with a negative curvature finding subroutine, thereby guaranteeing convergence to approximate second-order KKT points, as demonstrated by \algo{robust-nega} in \sec{small-expri}. 
Another practical advantage of these first-order methods is that they do not require access to Hessian information. This is particularly appealing in large-scale problems, where the computation and storage of Hessian matrices may become prohibitively expensive. Taken together, these results suggest that the proposed approach provides an effective and practical framework for large-scale constrained optimization.

\section{Conclusion}
In this paper, we developed efficient first-order IPTR algorithms for computing approximate first- and second-order KKT points of nonconvex optimization problems with affine equality and nonnegative constraints. Specifically, our algorithm for computing approximate first-order KKT points replaced the exact projection step by an approximate update scheme, thereby reducing the average per-iteration cost to essentially that of matrix--vector multiplication while preserving the convergence guarantees of existing first-order IPTR algorithms. We further showed for the first time that approximate second-order KKT points can also be computed for constrained optimization within the first-order IPTR framework by incorporating a negative-curvature finding procedure based on the projected power method and finite-difference gradient approximations, thus avoiding explicit Hessian evaluations. 
To complement with our theory results, we also conducted extensive numerical experiments to evaluate the empirical performance of the proposed algorithms. On representative examples, the results showed that our first-order IPTR algorithms with the negative-curvature finding subroutine are able to escape first-order KKT points and converge to approximate second-order KKT points. On large-scale instances, the \robustalgo first-order IPTR algorithm consistently improves upon the runtime of the existing first-order IPTR algorithm, and the advantage becomes more significant as the problem size increases. 
These results indicate that the IPTR algorithms proposed in this paper not only have provable advantage in theory, but also practically effective for large-scale constrained nonconvex optimization problems.

Our paper leaves several open questions for future investigation:
\begin{itemize}
\item In unconstrained nonconvex optimization, the best known iteration complexity for finding an approximate second-order stationary point is $\widetilde{O}(1/\varepsilon^{1.75})$. It is natural to study whether a comparable complexity bound can be established for the constrained nonconvex optimization setting considered in this paper. At present, such a result is not known for general constrained problems. The primary difficulty is that both the exploitation of negative curvature and the use of accelerated steps must respect the local geometry of the feasible region. This geometry varies with the active set and often complicates feasibility-preserving updates. Establishing an $\widetilde{O}(1/\varepsilon^{1.75})$ bound for first-order methods in general constrained nonconvex optimization therefore remains an open question. 

\item It would also be of general interest to extend the proposed approximate first-order IPTR framework beyond affine equality and nonnegativity constraints, i.e., to more general classes of constrained nonconvex optimization problems.

\item Finally, it remains to be understood whether the approximate update mechanism developed in this paper can also be leveraged to accelerate the trust-region QP algorithm \cite{ye1998complexity} that arise in second-order IPTR algorithms.
\end{itemize}

\section*{Acknowledgments}
We thank Yurii Nesterov for helpful discussions, especially for suggesting that we consider concave objective functions, which inspires the results in \sec{concave}.
YS and TL were supported by the National Natural Science Foundation of China (Grant Numbers 62372006 and 92365117).

\newcommand{\arxiv}[1]{arXiv:\href{https://arxiv.org/abs/#1}{\ttfamily{#1}}\?}\newcommand{\arXiv}[1]{arXiv:\href{https://arxiv.org/abs/#1}{\ttfamily{#1}}\?}\def\?#1{\if.#1{}\else#1\fi}
\providecommand{\bysame}{\leavevmode\hbox to3em{\hrulefill}\thinspace}

\end{document}